\newcommand{\numberOfTasks}{v}
\newcommand{\numberOfResources}{p}
\newcommand{\numberOfTypesOfResources}{\mathcal{P}}
\newcommand{\currentTask}{t_i}
\newcommand{\currentProcessor}{p_j}
\newcommand{\currentParent}{t_k}
\newcommand{\parentsProcessor}{p_l}
\newcommand{\numberOfEdges}{e}
\newcommand{\DParray}{CEFT}
\newcommand{\costTable}{C_{comp}}
\newcommand{\comm}{C_{comm}}
\newcommand{\communicationSetupTime}{\mathcal{L}}
\newtheorem{mydef}{Definition}
\newtheorem{lemma}{Lemma}
\newcommand{\ul}{\underline}
\newcommand*\Let[2]{\State #1 = #2}
\algrenewcommand{\alglinenumber}[1]{\color{red}\footnotesize#1:}
\newcommand{\csubfloat}[2][]{%
	\makebox[0pt]{\subfloat[#1]{#2}}%
}
\newcommand{\centerhfill}[1][\quad]{\hspace{\stretch{0.9}}#1\hspace{\stretch{0.9}}}
\newcommand{\longversion}{}
\begin{document}
\title{Mutual Inclusivity of the Critical Path and its Partial Schedule on Heterogeneous Systems}

\author{Aravind Vasudevan and David Gregg%
\IEEEcompsocitemizethanks{\IEEEcompsocthanksitem The authors are with the School of Computer Science and Statistics, Trinity College Dublin, Dublin 2\protect\\
E-mail: \{vasudeva, dgregg\}@scss.tcd.ie}
\thanks{}}

\markboth{}%
{Vasudevan and Gregg: Mutual Inclusivity of the Critical Path and its Partial Schedule on Heterogeneous Multi Processors}

\IEEEcompsoctitleabstractindextext{%
\begin{abstract}
	
The critical path of a group of tasks is an important measure that is commonly used to guide task allocation and scheduling on parallel computers. The critical path is the longest chain of dependencies in an acyclic task dependence graph. A problem arises on heterogeneous parallel machines where computation and communication costs can vary between different types of processor. Existing solutions for heterogeneous machines attempt to estimate the critical path using average values of computation and communication costs. However, this ignores opportunities to match specific tasks to specific classes of processor and communication links, and can result in quite misleading paths being identified as critical. We argue that an accurate critical path must consider the mapping of tasks to classes of processor and communication links. We formulate a polynomial time algorithm to find such a critical path. Our Critical Earliest Finish Time (CEFT) algorithm finds both the length of the critical path and an allocation of tasks to processors on that path. We compared CEFT experimentally to existing approaches such as averaging execution times across processors. The latter approach fails to accurately model the execution cost of tasks, and as a result fails to identify a correct critical path in 83.99\% of cases in our experiments. We also adapted a critical path-oriented scheduling algorithm (CPOP) to use our critical path algorithm and found that the resulting schedules are faster.

\end{abstract}

\begin{keywords}
Critical Path, Dynamic Programming, Static Scheduling, DAG scheduling
\end{keywords}}

\maketitle

\IEEEdisplaynotcompsoctitleabstractindextext

\IEEEpeerreviewmaketitle

\section{Introduction}
\label{sec:paper-crit-path-intr}

\IEEEPARstart{S}{cheduling} of tasks onto resources is one of the most fundamental problems in parallel computing. A \textit{critical path} is the longest chain of dependent tasks in a graph. It is impossible to execute the graph in less time than the length of the critical path, even with infinite resources. Many successful scheduling algorithms~\cite{kohler1975preliminary, topcuoglu2002performance} prioritise tasks on the critical path.

Although the critical path is well defined when the computing resources are all of the same type, a problem arises on heterogeneous parallel computers. A given task usually has a different execution time depending on the type of computing resource upon which it is executed. For example, a heterogeneous parallel machine may consist of a small number of powerful processors that execute tasks quickly, and a larger number of low-power processors that are more energy efficient. Without a fixed execution time for each task, the critical path is poorly defined\footnote{However, if there is no communication cost, the problem is simple to solve. One can simply form a new graph by placing every task on a processor that minimizes its finish time and calculate the longest path in this resultant graph.}. In addition, communication time between processors typically varies in real machines. For example, communicating between cores on a single chip usually has a low cost, whereas communicating over the network in a cluster is much slower and hence has a higher cost.

\ifdefined\longversion
Existing algorithms to compute the critical path of a graph for heterogeneous machines make simplifying assumptions. One simple strategy is to take each of the execution times of a given task on various processors and average them~\cite{kwok1996dynamic}. Another, is to assume that all tasks on the critical path are executed on a single processor, and to simply choose the processor that minimizes the critical path length \cite{topcuoglu2002performance}.
\fi

In this paper we propose a definition of the critical path for heterogeneous processors that is much closer to the intuitive idea of the shortest possible execution time, based purely on dependencies.  A practical problem with our definition is that for each task we need to consider all of the different processors that it can be executed on. This could potentially lead to an exponential search space. However, we provide a dynamic programming algorithm that can consider all possible allocations in polynomial time.

Our main contributions are:
\begin{itemize}
	\item We propose a new definition of the critical path for tasks graphs on parallel computers with heterogeneous execution and communication times.
	\item We provide a novel dynamic programming algorithm for finding our critical path (Critical Earliest Finish Time (CEFT)) in $\numberOfTypesOfResources^2\numberOfEdges$ time, where $\numberOfTypesOfResources$ is the number of classes of processors and $\numberOfEdges$ is the number of edges in the task graph. We evaluate our new approach and find that quality of the critical paths are shown to be better than those produced by the state of the art CPOP with our critical paths being shorter in most cases.
	\item We also extend our critical path finding algorithm into a DAG scheduling algorithm (\textit{CEFT-CPOP}) by replacing the path found by our algorithm (with its corresponding partial assignment) into the CPOP algorithm. Our experiments suggest that our algorithm (CEFT-CPOP) produces smaller makespans in 15.9\%, 75.94\%, 90.29\% and 89.69\% of the experiments in four different models of parallel workloads.
\end{itemize}

\section{Background}
\label{sec:paper-crit-path-pape-back}
Critical paths have long been used to guide heuristic algorithms for scheduling directed acyclic graphs (DAGs) of tasks~\cite{hu1961parallel,lockyer1969introduction,kohler1975preliminary}. A task graph is a DAG where the vertices represent computational tasks and the edges represent data dependencies between the tasks. When scheduling for homogeneous parallel architectures, weights are assigned to vertices of the graph to represent the execution time of the task, and to the edges to represent the communication cost of data flow between tasks. We consider the problem of static scheduling of task graphs, where the execution time and communication costs can be estimated with some accuracy prior to scheduling. In common with existing literature on static scheduling, we do not consider the case of computation or communication costs that are strongly dependent on unpredictable data inputs or run-time artifacts.

The conventional definition of the critical path of a DAG in static task scheduling is \textit{the longest path of from the entry node to the exit node in the task graph}. On homogeneous parallel computers the standard algorithm for finding the longest path in a DAG can be used to find the critical path. However for heterogeneous parallel architectures, finding the critical path is more difficult. The execution time of each task can be quite different, which means that the dependence length of a path through the graph depends not just on the tasks in the path, but also on which of the heterogeneous processors each task is allocated.

Two approaches are commonly used to estimate the critical path in scheduling algorithms for heterogeneous architectures. Popular scheduling approaches such as HEFT and CPOP~\cite{topcuoglu2002performance} assign an execution cost to each task that is the average execution time of the task on each processor. They also assume that the communication cost between pairs of processors depends purely on the quantity of transferred data, so that costs between different pairs of processors are all the same. These two assumptions provide a single execution cost for each task and a single communication cost for each edge, so that the standard critical path algorithm for homogeneous architectures can be used. However, in cases where task execution times differ widely on different processors, this can be inaccurate. For example, a GPU might be an order of magnitude faster than a CPU on an array processing task, but absolutely hopeless for single-threaded code. The average of CPU and GPU execution on each task may be a multiple of the execution cost on the best architecture for that task.

A second common approach is to assume that the \textit{entire} task graph will be executed on a single type of processor, and use the execution costs for that type of processor to compute a critical path~\cite{daoud2008high}. A heterogeneous machine may contain several different types of processor, and a critical path will need to be computed for each type. However, selecting the processor that results in the shortest critical path gives some estimate of the minimum possible time needed to execute the code. This heuristic may work well on heterogeneous machines where some processors are simply more powerful versions of others. However, where different types of processors --- such as CPUs and GPUs --- are suited to different types of tasks, this approach may result in an estimate of the critical path length that is much longer than can be achieved by choosing the best processor for each task.

Neither of these approaches to estimating the critical path is entirely satisfactory. Both can result in very misleading paths being identified as critical, as we show in section~\ref{sec:paper-crit-path-expe-resu}. A better heuristic for identifying the critical path should take into account the different execution times of tasks on different processors and the cost of data communication between processors.


\section{Defining a critical path for heterogeneous processors}
\label{sec:paper-crit-path-form}


As we saw in the previous section, identifying the critical path of a task graph on a heterogeneous parallel computer is complicated because the execution and communication times vary depending on the allocation of tasks to processors. Existing approaches use simplifying assumptions, such as taking the average execution time of a task across all processors.

For a restricted version of the problem, we can find a more accurate estimate than previous approaches. If we assume that communication costs between processors are zero, then for each task we simply choose the processor allocation that minimizes the execution time of the task. This provides us with a single minimal execution time for each task, allowing us to find the critical path using the standard algorithm for homogeneous architectures. This same approach can also be used if we make the same assumption about communication costs as Topcuoglu et al.~\cite{topcuoglu2002performance}: the communication cost is the same irrespective of the source and destination processor, even if the source and destination are the same processor. Where communication costs are entirely independent of the allocation of tasks, we can simply allocate each task to the processor that minimizes its execution time. Although this approach is no more complex than Topcuoglu et al.'s and is likely more accurate, to our knowledge we are the first to propose it.

However, this assumption that communication costs are independent of processor allocation of tasks is unsatisfactory for two main reasons. First, on large scale parallel computers communicating with a nearby processing element is typically faster than sending the same data over a long distance. Second, there is an important case where two tasks are allocated to the same processor, which can eliminate communication costs entirely. A critical path for heterogeneous parallel computers should ideally take account of heterogeneous communication costs as well as computation costs. However, when we consider both the costs together, there is no longer a simple strategy for choosing the best allocation of tasks to processors that will minimize the critical path. There is an exponential number of possible allocations of processors to tasks. In section~\ref{sec:paper-crit-path-crit-path-pape-dyna-prog-solu} we present a polynomial time dynamic programming algorithm that finds a critical path considering both heterogeneous computation and communication costs. First, we present several definitions to formalize the problem.

\subsection{Definitions}
\label{sec:paper-crit-path-pape-defi}
Table~\ref{tab:nota} gives a list of notations and their descriptions that we will be using for the rest of this paper. A task graph is a weighted directed acyclic graph $G_t(V_t,E_t)$, such that each vertex $V_t$ is a program statement(s) or kernel in the application, and \mbox{$E_t \subseteq (V_t \times V_t)$}, represents the communication edges between the vertices. The system resources (processor graph) are represented by a weighted undirected graph $G_r(V_r,C_r)$. Where $V_r$ represents a processing element in the underlying processor graph, while the edge $C_r \subseteq (V_r \times V_r)$, represents the communication links. For the sake of simplicity in the rest of this paper we will refer to a task $v_t \in V_t$ as $t_i$ and a processor $v_r \in V_r$ as $p_j$.

\ifdefined\longversion

\begin{mydef}
	We define an \textbf{assignment} of a task as it being mapped on to a processor for execution. It differs from a schedule in the sense that we do not have to specify order of execution as we deal with one task and one processor. The \textbf{assignment} of a path is set of individual assignments for all the tasks in the given path\footnote{In the rest of this paper, we use \textbf{assignment} and \textbf{mapping} interchangeably.}.
\end{mydef}

\begin{table}[t!]
	\begin{center}
\resizebox{\linewidth}{!}{%
	\begin{tabular}{|c|c|}
		
		\hline \textbf{Notation} & \textbf{Description} \\ 
		\hline \DParray & The dynamic programming array \\ 
		\hline $\numberOfResources$ & Number of processors \\ 
		\hline $\numberOfTasks$ & Number of tasks \\ 
		\hline $\numberOfEdges$ & Number of edges \\ 
		\hline $\mathcal{P}(t_i)$ & Parents of the task $t_i$ \\ 
		\hline $\costTable(t_i, p_j)$ & Execution time of task $t_i$ on processor $p_j$ \\
		\hline $\communicationSetupTime(p_i)$ & Communication startup time of processor $p_i$\\
		\hline 
	\end{tabular} 
}
	\caption{List of notations}
	\label{tab:nota}
	\end{center}
\end{table}

\begin{mydef}
	$\mathcal{P}(t_i)$ denotes the set of parents (commonly referred to as the set of immediate predecessors in the literature) of task $t_i$ in the given DAG. Any task that has no parent is called a \textit{source task} or an \textit{entry task}. Similarly, any task that doesn't have any children are identified as a \textit{leaf node} or \textit{exit task}.
\end{mydef}
\else

\begin{table}[t!]
	\begin{center}
		\resizebox{\linewidth}{!}{%
			\begin{tabular}{|c|c|}
				
				\hline \textbf{Notation} & \textbf{Description} \\ 
				\hline $\costTable(t_i, p_j)$ & Execution time of task $t_i$ on processor $p_j$ \\
				\hline $\communicationSetupTime(p_i)$ & Communication startup time of processor $p_i$\\
				\hline \DParray & The dynamic programming array \\ 
				\hline $\numberOfResources$ & Number of processors \\ 
				\hline $\numberOfTasks$ & Number of tasks \\ 
				\hline $\numberOfEdges$ & Number of edges \\ 
				\hline $\mathcal{P}(t_i)$ & Parents of the task $t_i$ \\ 
				\hline $EST(t_i, p_j)$ & Earliest start time of task $t_i$ on processor $p_j$\\ 
				\hline $EFT(t_i, p_j)$ & Earliest finish time of task $t_i$ on processor $p_j$\\ 
				\hline 
			\end{tabular} 
		}
		\caption{List of notations}
		\label{tab:nota}
	\end{center}
\end{table}

\fi

\begin{mydef}
	We define \textbf{communication cost} between task $t_k$ on processor $p_l$ and task $t_i$ on processor $p_j$ as,
	\begin{equation}
	\label{eq:crit-path-comm-cost}
	\comm(\{t_k, p_l\}, \{t_i, p_j\}) \\ = \begin{cases}
	\communicationSetupTime(p_l) + \dfrac{data_{t_k, t_i}}{c_{p_l, p_j}} & \text{, $p_j \neq p_l$} \\
	0 & \text{, $p_j = p_l$}
	\end{cases}
	\end{equation}
	\noindent where $\communicationSetupTime(p_l)$ is the setup time associated with a processor every time it has to send data; $data_{t_k, t_i}$ represents the data size that has to be sent from task $t_k$ to task $t_i$. Similarly, $c_{p_l, p_j}$ is the bandwidth between processor $p_l$ and $p_j$.
\end{mydef}
%
%

\noindent The critical-path of a DAG is conventionally~\cite{arabnejad2014list} defined as following,
\begin{mydef}
	\label{def:crit-path-defi}
	\textbf{Critical-Path} (CP) of a DAG is the longest path from the entry node to the exit node in the application graph. The minimum critical-path length ($CP_{MIN}$) is computed by considering the minimum computational costs of each task in the critical path.
\end{mydef}

This definition of a critical-path suggests that the critical-path is a property \textit{only} of the application DAG. Although this holds true in the homogeneous setting, it breaks down in heterogeneous setting. We examine this through the following lemmas and their proofs.

\begin{lemma}
	\label{lemm:crit-path-inde}
	The critical path cannot be independent of its mapping to the processors in a heterogeneous parallel machine
\end{lemma}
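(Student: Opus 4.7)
The plan is to prove the lemma by constructing an explicit counterexample: a small task graph together with a small heterogeneous processor graph for which different mappings produce different longest paths from the entry to the exit node. This directly contradicts the conventional Definition~\ref{def:crit-path-defi}, which treats the critical path as a function only of the application DAG.

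First, I would fix the topology. The simplest informative instance is a ``diamond'' DAG with an entry task $t_0$, an exit task $t_3$, and two independent intermediate tasks $t_1, t_2$ that both depend on $t_0$ and on which $t_3$ depends. This produces exactly two candidate paths, $\pi_A = (t_0, t_1, t_3)$ and $\pi_B = (t_0, t_2, t_3)$, so the critical path is whichever of these two is longer, and the question reduces to whether one can choose execution/communication costs so that the winner depends on the mapping.

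Next I would exhibit two processor types, say a fast-for-$t_1$ processor $p_1$ and a fast-for-$t_2$ processor $p_2$, with entries in $\costTable$ chosen so that $t_1$ is much cheaper on $p_1$ than on $p_2$, and symmetrically $t_2$ is much cheaper on $p_2$ than on $p_1$, while $t_0, t_3$ have comparable costs on both. Under mapping $\Omega_1$ that places $t_1$ on its slow processor $p_2$ and $t_2$ on its fast processor $p_2$, path $\pi_A$ becomes the longest and hence critical; under mapping $\Omega_2$ that swaps these two assignments, $\pi_B$ becomes critical. Since $\pi_A \neq \pi_B$ as sets of nodes, the critical path itself is different under $\Omega_1$ and $\Omega_2$, which is exactly what must be shown. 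I would also note that to make the example robust it suffices to choose the gaps in $\costTable$ large enough to dominate the contributions of $t_0$, $t_3$, and any $\comm$ terms induced by equation~\eqref{eq:crit-path-comm-cost}; this is a routine inequality check rather than a creative step.

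The main obstacle is not mathematical depth but presentation: one must pick numeric costs that simultaneously (i) make the arithmetic on both paths transparent, (ii) respect the piecewise definition of $\comm$ (so that placing two dependent tasks on the same processor does not accidentally collapse a path), and (iii) leave no room for the reader to object that a third mapping would tie. I would therefore use extreme, round numbers (e.g.\ costs of $1$ versus $100$) and verify the two cases with a short explicit calculation of path lengths, concluding that no definition of the critical path that ignores $\Omega$ can be consistent with the two evaluations, which establishes Lemma~\ref{lemm:crit-path-inde}.
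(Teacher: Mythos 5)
Your proof is correct, but it takes a genuinely different route from the paper. The paper's own argument is abstract and essentially definitional: it observes that in a heterogeneous machine the execution time of a task is a row of the matrix $\costTable(t_i,p_j)$ rather than a scalar, concludes that vertex weights do not exist prior to a mapping, and hence that a longest path cannot be computed independently of one. Your approach instead constructs an explicit witness --- a diamond DAG with two complementary processor types --- and shows that the \emph{identity} of the longest path (not merely its length) flips between $\pi_A$ and $\pi_B$ as the mapping changes. This is strictly more informative: the paper's argument only establishes that the path length is undefined without a mapping, whereas a skeptic could still hope that some canonical scalarization (e.g.\ the averaging used by CPOP) recovers the ``right'' path; your counterexample forecloses that hope by exhibiting two mappings that disagree on which path is critical, which is exactly the failure mode the paper later measures experimentally. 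The trade-off is that your argument requires the instance to be genuinely heterogeneous with a branching DAG (a linear chain or a single-processor-dominant cost table would not exhibit the flip), so it proves existence of dependence rather than dependence in every instance; but that is all the lemma needs, since it is refuting the claim that the critical path is a property of the DAG alone. One presentational nit: in your description of $\Omega_1$ both $t_1$ and $t_2$ land on $p_2$ --- this is in fact consistent with your intent (slow for $t_1$, fast for $t_2$), but you should state the assignments of all four tasks explicitly and carry out the two path-length computations with concrete numbers, including the $\comm$ terms of equation~\eqref{eq:crit-path-comm-cost}, so the reader can verify that neither path collapses to zero communication in a way that reverses the inequality.
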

\begin{proof}
	In a heterogeneous setting, the execution time of tasks in the DAG is given by $\costTable(t_i, p_j)$ which is a cell in a two-dimensional matrix of order $\numberOfTasks\times\numberOfResources$. This implies that we cannot reduce the vector $\costTable(t_i)$ into a scalar value and hence the vertex weights of the DAG cannot be known a priori as the weights do not exist independent of a mapping for the tasks onto processors. Weights cannot be given a single value independent of a mapping, and hence the critical path can not exist independent of its mapping.
\end{proof}

%
%
%

\subsection{Defining the critical path}
In section~\ref{sec:paper-crit-path-pape-back}, we saw that existing definitions of the critical path are inadequate for the heterogeneous execution setting. The definition of critical-path, which was valid in the homogeneous setting, is being used to estimate the critical path in the heterogeneous setting. In this section, we will attempt to redefine the critical-path for the newer setting.

\ifdefined\longversion

Traditionally, the \textit{earliest finish time} of a task is the earliest time at which it can finish under a legal partial schedule and is defined as follows.

\begin{mydef}
	Earliest Start Time (EST) is defined as the earliest time in the schedule at which a given task $t_i$ can start
	\begin{equation}
	\label{eq:crit-path-est}
	EST(t_i, p_j) = max(avail[j], \max_{t_m \in pred(t_i)}(AFT(t_m) + c_{m,i}))
	\end{equation}
\end{mydef}

\begin{mydef}
	Consequently, Earliest Finish Time (EFT) is defined as the earliest time at which the task $t_i$ can finish,
	\begin{equation}
	\label{eq:crit-path-eft}
	EFT(t_i, p_j) = EST(t_i, p_j) + w(t_i, p_j)
	\end{equation}
	\noindent where $w(t_i, p_j)$ is the execution time of task $t_i$ on processor $p_j$.
\end{mydef}

Although this definition conveys the meaning of the earliest start time and end times, they are not adequate when it comes to defining the earliest start and finish times of tasks when calculating the critical-path. Several pieces of literature~\cite{kwok1996dynamic,shi2006scheduling,arabnejad2014list} have redefined the earliest finish time to suit their needs. In order to define the critical-path in a heterogeneous setting, we also redefine the start and finish times of tasks.

\fi

\begin{mydef}
	A \textbf{Critical-Path} (CP) is the longest path in the DAG when it has a corresponding optimal partial assignment
\end{mydef}

This definition of the critical-path stems from Lemma~\ref{lemm:crit-path-inde} which states that the CP cannot be independent of its partial assignment. Hence, we define our CP as the path that has the longest path length when all the tasks in that path are mapped in the most effective way possible.

\begin{mydef}
	From these two restrictions imposed by the new definition of the CP, we redefine our earliest finish time as Critical Earliest Finish Time (CEFT) :
	\begin{multline}
	\label{eq:crit-path-ceft}
	\DParray(t_i, p_j) = \\\textbf{max}_{T_k \in \mathcal{P}(t_i)} \{ \textbf{min}_{p_l \in \{0, \cdots, \numberOfResources-1\}} \{\costTable(t_i, p_j)\\ + \DParray(T_k, p_l)+ comm(\{t_i, p_j\}, \{T_k, p_l\})\} \}
	\end{multline}
\end{mydef}

Note that the above definition~\ref{eq:crit-path-ceft} is entirely satisfactory in two circumstances: when we consider the length of a single path within the task graph and when tasks can be duplicated onto multiple processors. When we consider multiple paths through the graph simultaneously and task duplication is not possible, the problem becomes more difficult. In the next section we deal with the case where paths are considered in isolation or tasks may be duplicated.
\section{A dynamic programming solution}
\label{sec:paper-crit-path-crit-path-pape-dyna-prog-solu}

\begin{algorithm}[t!]
	\scriptsize
	\caption{Identify \& map the critical-path of a given DAG onto a set of processors}
	\label{algo:map-path}
	\begin{algorithmic}[1]
		\Require Given application graph is a DAG and vertex IDs are in topological order
		\Function{FindAndMapCriticalPath}{$G_t, G_r$}
		\For {$\currentTask=0 \cdots \numberOfTasks$} 
		\If {current task is a source task}
		\State Set \DParray($\currentTask$, $\currentProcessor$) as the execution time of $\currentTask$ on each processor
		\Else
		\For {$\currentProcessor=0 \cdots \numberOfResources$}
		\For {$\currentParent \in pred(\currentTask)$}
		\For {$\parentsProcessor=0 \cdots \numberOfResources$}
		\State $\currentTask$ is the current task under investigation
		\State $\currentProcessor$ is the current processor to which $\currentTask$ is mapped currently
		\State $\currentParent$ is a parent of $t_i$
		\State $\parentsProcessor$ is the processor to which $\currentParent$ is mapped currently
		\Let{$commCost$}{$comm(\{\currentParent, \parentsProcessor\},\{t_i,p_j\})$}
		\Let{$compCost$}{$\costTable(\currentTask,\currentProcessor)+\DParray(\currentParent,\parentsProcessor)$}
		\Let{$totalCost$}{$compCost+commCost$}
		\EndFor
		\State Choose the processor $\parentsProcessor$ that minimizes the EFT of $\currentParent$
		\EndFor
		\State From among these minimized choices of $(\currentParent, \parentsProcessor)$ pairs, choose the one that maximizes eq~\ref{eq:crit-path-ceft} and call it $(\currentParent^{max},\currentProcessor^{min})$
		\Let{$\DParray(\currentTask,\currentProcessor)$}{$totalCost$ belonging to the $(\currentParent^{max},\currentProcessor^{min})$ pair}
		\Let{$\DParray(\currentTask,\currentProcessor).path$}{$\DParray(\currentParent^{max},\currentProcessor^{min}).path$}
		\State $\DParray(\currentTask,\currentProcessor).path.push\_back((\currentTask,\currentProcessor))$
		\EndFor
		\EndIf
		\EndFor
		\For {$t_s \in listOfSinks$}
		\For {$p_s=0 \cdots \numberOfResources$}
		\Let{$cost$}{$\DParray(t_s,p_s)$}
		\EndFor
		\Let{$p_s^{min}$}{$p_s$ that minimizes $cost$}
		\EndFor
		\State From among these minimized costs, choose the task $t_s^{max}$ that maximizes the minimized cost
		\State The critical-path is the path represented by $\DParray(t_s^{max},p_s^{min}).path$
		\EndFunction
	\end{algorithmic}
\end{algorithm}

Our definition~\ref{eq:crit-path-ceft} of the critical earliest finish time (CEFT) includes an optimal mapping of tasks to processors, and allows us to define a more accurate critical path for heterogeneous architectures than previous approaches. However, there are an exponential number of allocations of tasks to processors, so any algorithm that considers all mappings individually will require exponential time. In this section, we present a dynamic programming approach that computes the length of a path in the dependence graph using our CEFT-based definition of dependence length. Using this approach we formulate a polynomial time algorithm that finds the CEFT-longest dependence path in the task graph. In the case where tasks can be duplicated to reduce communication costs, a longest path is also a critical path in the task graph.

Algorithm~\ref{algo:map-path} traverses the vertices of the task graph in topological order. The algorithm computes the critical earliest finish time ($CEFT(t_i,p_j)$) of each task, $t_i$, on each of the processors, $p_j$, in the machine. Note that where the machine contains groups of multiple identical processors (with the same computation and communication times) the entire group can be considered a single processor for the purposes of computing a critical path. A critical path computes a lower bound on the execution time of the task graph based solely on the length of dependence chains rather than on resource constraints. Therefore, having multiple instances of the same class of processor does not affect the critical path.

Where a task, $t_i$ has no predecessors (i.e. a source task) its $CEFT(t_i,p_j)$ on processor $p_j$ is simply the execution time of that task on the processor. Where a task has one or more parents in the task graph, the task cannot start executing until the predecessor task, $t_k$, has completed and its results have been communicated. However, we do not have a single critical earliest finish time for $t_k$. Instead we have a separate $CEFT(t_i,p_j)$ for each processor $p_j \in p$. To compute $CEFT(t_i, p_j)$, that is the earliest finish time of task $t_i$ on processor $p_j$ we consider each of the possible processor allocations, $p_l$, of the predecessor task, $t_k$. We select the allocation of $t_k$ to $p_l$ that results in the lowest value for $CEFT(t_i, p_j)$, taking into account the execution time of task $t_i$ on processor $p_j$ and the communication time between processor $p_l$ and $p_j$. Where tasks $t_k$ and $t_i$ are allocated to the same processor, we assume that communication costs are zero.

The resulting algorithm is significantly more complicated than the critical path algorithm for homogeneous architectures while having a higher time complexity. However, the time complexity is polynomial and we find the critical earliest finish time for each task without having to separately enumerate all possible allocations of all tasks.

The objective of our algorithm is to redefine the earliest finish time to more closely represent the intuitive idea behind the shortest possible execution time based on dependencies. To this end, we do not fix the allocation/mapping of a task to a processor as we progress through the algorithm by iterating over the dynamic programming array. We instead, compute where the parents of the current task should be mapped to minimize the $CEFT$ of the current task on the current processor. This is done in lines 16--18 in the algorithm. We set up our loops to be able to iterate over four variables : $t_i$ (current task), $p_j$ (current task's current processor), $t_k$ (current task's current parent) and $p_l$ (current parent's current processor). From lines 6--12, it is evident that for every combination of a current task and a specific processor for the current task ($t_i, p_j$) we examine all possible assignments of all possible parents and choose the set of assignments of the parents which minimizes the earliest finish time of task $t_i$ on processor $p_j$. 

We reiterate by stressing on the fact that this \textit{does not} fix the assignment for the current task on the current processor. It simply examines it. This also does not fix the assignment of any of the parents of task $t_i$. The algorithm only fixes the assignment of the parent that has led to the minimization of the earliest finish time of the task $t_i$, locally to the pair of $t_i, p_j$ under consideration\ifdefined\longversion\footnote{This gives us a major added benefit of incorporating lookahead features for much lesser complexity.~\cite{arabnejad2014list} claim to incorporate lookahead features into their algorithm by calculating an \ul{o}ptimistic \ul{c}ost \ul{t}able (OCT) for $O(\numberOfResources^2\numberOfEdges)$ complexity. The original lookahead based scheduling algorithm~\cite{bittencourt2010dag}, incorporates lookahead features at a much higher complexity of $O(\numberOfTasks^4\numberOfResources^3)$. We incorporate lookahead features into our algorithm without compromising as much complexity. We ensure tasks that are being inspected are not immediately assigned to a processor. Hence any a given task has a set of finish times that it could take potentially, based on the assignment of its parent and itself. Only when an all exit tasks have been visited and one is chosen, we fix the assignment of all the tasks in the CP. In doing so, we gain greater flexibility and the ability to find the right critical path.}\fi. 

This is reflected in the lines 19--20. Once the latest completing parent ($t_k^{max}$) has been identified, we copy the path information from this parent and append the pair ($t_i, p_j$) onto this path information. Lines 21--26 show how the critical-path can be fixed by examining the dynamic programming array ($\DParray$) entries of all the sink/exit nodes and identify $\DParray(t_s^{max},p_s^{min})$. This gives an idea of how the critical-path is always in a state of flux and is not fixed until the algorithm finishes. 

\begin{figure}[t!]
	\centering
	\includegraphics[scale=0.15]{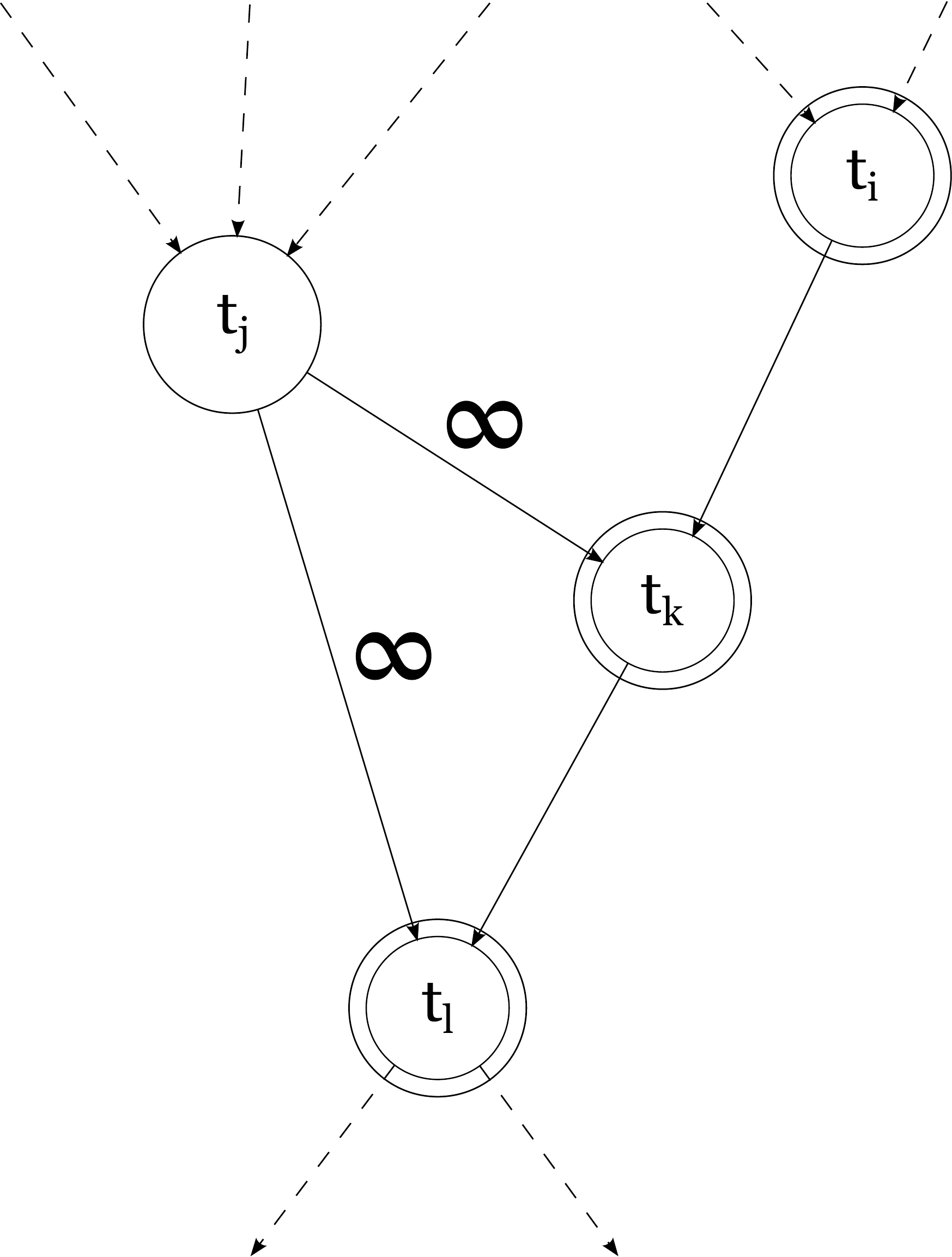}
	\caption{Section of a sample application graph}
	\label{fig:paper-crit-path-task-dupl-grap}
\end{figure}

\subsection{Task duplication}
\label{sec:paper-crit-path-task-dupl}

%

Our algorithm finds the critical path by identifying the longest dependence path through the task graph. We use our CEFT definition of path length, meaning the time required for computation and inter-processor communication, assuming an optimal allocation of the tasks on the path to various classes of processors. However, this definition of the longest path contains an important assumption. We assume that we can compute the length of two different paths independently. However, the length of a path depends on the allocation of the tasks in the path to processors. If a single task appears on multiple different paths, the different paths may require a different allocation of that task to minimize path length.


For instance, let us consider a section of an application graph shown in Figure~\ref{fig:paper-crit-path-task-dupl-grap}. Let us assume that all the tasks that are denoted using concentric circles are in the critical path of the application. For the sake of simplicity, let us assume that the amount of data to be transferred from task $t_j$ to tasks $t_k$ and $t_l$ is $\infty$. In this scenario when $t_k$ is being evaluated, its parent $t_j$ will be assigned to the same processor that $t_k$ gets assigned to and the same holds true in the case of $t_l$. When the final critical path is decided as we reach exit tasks in this application graph, there might be a situation where $t_k$ and $t_l$ might be assigned to different processors. If this situation arises, the task $t_j$ (even though it is not on the critical path) has to be assigned to two different processors to make sure that $t_k$ and $t_l$ stay on the critical path.


Many existing scheduling algorithms use task duplication~\cite{kruatrachue1988grain,ahmad1998exploiting} to reduce communication costs on heterogeneous parallel architectures. Where a parent task has more than one successor, it can sometimes improve the schedule to duplicate the parent so that identical copies of the task execute in parallel on two different processors. This can reduce communication time between processors, particularly on heterogeneous architectures where different tasks are often suited to different types of processor. Where task duplication is used, our Algorithm~\ref{algo:map-path} will compute a correct critical path in all cases. However, where the subsequent scheduling approach does not allow task duplication, our algorithm may result in an overly-optimistic critical path length. 

In the case where task duplication is not allowed, each task must be allocated to exactly one processor. As with the case where tasks may be duplicated, we must deal with two sets of costs when computing path lengths. We must consider the cost of executing a given task on a given processor, and the cost of the communication between tasks on each possible pair of processors. Where the communication costs between processors can vary arbitrarily, this is equivalent to the Partitioned Boolean Quadratic Problem (PBQP) which is known to be NP-complete \cite{Scholz:2002}. Thus, in the absence of task duplication, finding a critical path in a task graph on parallel architectures with heterogeneous execution times and communication costs is NP-complete.

\section{Complexity analysis}
\label{sec:paper-crit-path-crit-path-pape-comp-anal}
In this section we analyse the space and time complexity of the dynamic programming method from algorithm~\ref{algo:map-path} proposed in Section~\ref{sec:paper-crit-path-crit-path-pape-dyna-prog-solu}. The outermost loop in the algorithm runs from $t_i=0 \cdots \numberOfTasks$. This loop inspects all the tasks in the DAG. The second level loop inspects all possible processors for the current task $t_i$. This implies that, $p_j$ runs from $0 \cdots \numberOfResources$. For each \mbox{(task, processor)} pair, we need to inspect every parent of $t_i$ as the algorithm tends to fix the parents processors based on its child's requirements. This necessitates that $t_k$ runs from $0 \cdots pred(\currentTask)$. To fix the parents processor, we need to inspect all the processors again to see which processor for the parent gives the earliest start for the current child. Hence, $p_l$ runs from $0 \cdots \numberOfResources$. 

Then the complexity of the entire algorithm in the worst case of all the upper limits of these nested loops is: $\numberOfTasks\times\numberOfResources\times n_{pred(\currentTask)}\times\numberOfResources$. $n_{pred(\currentTask)}$ is the number of parents for any given task. In the general case, this can be assumed to be the average in-degree of the application DAG. The average in-degree of a DAG can be further written as $\numberOfEdges/\numberOfTasks$. Hence the complexity of the algorithm can be simplified as $O(\numberOfResources^2\numberOfEdges)$. In the worst case, where the DAG is a fully connected graph, the number of edges in the graph is equal to $\numberOfTasks^2$. In this case, the complexity of the algorithm is increased to $O(\numberOfTasks^2\numberOfResources^2)$ which is higher than the complexity of other list scheduling heuristics like HEFT and CPOP which is $O(\numberOfTasks^2\numberOfResources)$.

However, if processors can be divided into $\numberOfTypesOfResources$ processors (where processors in each class have identical computation and communication costs), then the algorithm only needs to deal with the number of such classes of processors rather than $\numberOfResources$. This is feasible as our algorithm is a critical path finding algorithm and hence doesn't need to keep track of availability of processors. When trying to map a task from the critical path all processors will be free. This greatly reduces the computational complexity of our algorithm from $O(\numberOfTasks^2\numberOfEdges)$ to $O(\numberOfTypesOfResources^2\numberOfEdges)$, where $\numberOfTypesOfResources$ is the number of types of processors.

The space complexity of the algorithm at first glance is $O(\numberOfResources\numberOfTasks)$ as the $\DParray$ is a two dimensional array of size $\numberOfTasks\times\numberOfResources$. But this can be further reduced by storing the path information of only a frontier that is moving down along the DAG. Since we incorporate the path information from the previous states into the current state, we can ignore the state information of all the $\DParray$ elements that have been absorbed into other $\DParray$ elements. This would in turn be a frontier that is moving down the DAG. Hence, the space complexity can be reduced down to $O(\beta\numberOfResources)$ where $\beta$ is the width parameter of the graph.

\section{From critical path to makespan: CEFT-CPOP}
\label{sec:paper-crit-path-from-crit-path-to-make}

\begin{algorithm}[t!]
	\scriptsize
	\caption{The critical path on a processor (CPOP) algorithm}
	\label{algo:cpop}
	\begin{algorithmic}[1]
		\Function{CPOP}{}
		
		\State Set the comp costs of tasks and comm costs of edges with mean values
		\State Compute $rank_u$ of tasks by traversing graph upward, starting from exit task
		\State Compute $rank_d$ of tasks by traversing graph downward, from entry task
		\State Compute $priority(t_i)=rank_d(t_i)+rank_u(t_i)$ for each task $t_i$ in the graph
		\State $|CP|=priority(t_{entry})$, where $t_{entry}$ is the entry task
		\State $SET_{CP}=\{t_{entry}\}$, where $SET_{CP}$ is the sect of tasks on the critical path
		\State $t_k \leftarrow t_entry$
		\While{$t_k$ is not the exit task}
		\State Select $t_j$ where (($t_j \in succ(t_k)$) and ($priority(t_i)==|CP|$))
		\State $SET_{CP}=SET_{CP} \bigcup \{t_j\}$
		\State $t_k \leftarrow t_j$
		\EndWhile
		\State Select the critical-path processor ($p_{cp}$), $p_{cp}$ minimizes $\sum_{t_i \in SET_{CP}} w_{i,j}$
		\State Initialize the priority queue with the entry task
		\While{there is an unscheduled task in the priority queue}
		\State Select the highest priority task $t_i$ from the priority queue
		\If{$t_i \in SET_{CP}$}
		\State Assign the task $t_i$ on $p_{cp}$
		\Else
		\State Assign the task $t_i$ to the processor $p_j$ which minimizes the $EFT(t_i, p_j)$
		\EndIf
		\State Update the priority-queue with the successors of $t_i$, if they become ready tasks
		\EndWhile
		\EndFunction
	\end{algorithmic}
\end{algorithm}

CEFT is a critical path finding algorithm for application DAGs on heterogeneous processors. We extend this critical path finding algorithm into a DAG scheduling algorithm for heterogeneous processors by incorporating the critical path obtained from CEFT into CPOP. We have cleverly named this \textit{CEFT-CPOP}.

Let us recall the CPOP algorithm from the brief discussions in section~\ref{sec:paper-crit-path-rela-work}. It is a critical path based list scheduling algorithm that calculates its critical path based on mean values of computation costs and communication costs as shown in line 2 of algorithm~\ref{algo:cpop}. In lines 3--5 the authors of CPOP calculate the priority function which orders the tasks according to their relative importance. The entry task is added to the CP and the graph is then traversed downward from the entry task. Then, a child $t_j$ of the entry task that has the same priority value as itself is added to the critical path. Consequently, $t_j$'s children are examined and the one that has the same priority value is added to the path and the algorithm continues until it reaches the exit task. This path is then assigned to a single processor $p_{cp}$ in line 13 of the algorithm, in an attempt to produce the smallest possible critical path length for the tasks in the critical path\ifdefined\longversion\footnote{As we have discussed before, we believe that the tasks in this path are faulty as they have been calculated based on average values.}\fi. Once the path has been assigned to the processor that minimizes the path length, a priority queue with the entry task in it is examined. The task that has the highest priority function value is popped out of this queue. If this path is part of the critical path calculated earlier, it is scheduled on $p_{cp}$, otherwise it is scheduled on the processor $p_j$ which minimizes the $EFT(t_i,p_j)$. If any of the successors of the task that was just scheduled are now ready to be scheduled, they are added onto the priority queue and the algorithm continues until all the tasks in the priority queue have been scheduled.

In order to extend our critical path finding algorithm into a scheduling algorithm, the only modification we make to the CPOP algorithm is one of finding the critical path. Hence, we remove lines 2 -- 13 of the CPOP algorithm and assign $SET_{CP}$ to the critical path found by our algorithm. The rest of the algorithm remains the same. Our main comparison in terms of makespan and related metrics is between CEFT-CPOP and CPOP. This provides us with a basis of a real comparison of the effectiveness of the critical path as the only difference between the two algorithms is the way the critical paths are calculated. We also provide a comparison against HEFT, to show far away our results are from the state of the art scheduling algorithm.
\section{Experimental Setup}
\label{sec:paper-crit-path-expe-resu}
In this section we present a statistical comparison of our algorithm with the current state of the art algorithms (CPOP and HEFT) in the context of a critical path finding algorithm and its ability to be adapted into a scheduling algorithm. In section~\ref{sec:rand-gene-grap} we outline the workloads on which the experiments are based upon and outline the experimental setup. Section~\ref{sec:paper-crit-path-comp-metr} defines the metrics based on which the effectiveness of our algorithm as a critical path finding algorithm is evaluated. The experimental set-up consists of a dual socket system consisting of Intel Xeon E5620 CPU running at 2.4 GHz with 24GB DDR3 RAM. The system is running Linux kernel ver 3.0.40-1. The code was compiled using GCC version 4.7.1 with `-O3' optimization flag.

\subsection{Randomly generated workloads}
\label{sec:rand-gene-grap}
In order to not bias the results towards any particular application, we present comparisons of our algorithm to its contemporaries on synthetically generated random graphs. We use a modified version of the random graph generator from~\cite{topcuoglu2002performance}. In the next subsection, we present four comparison metrics on which the relative performance of the three algorithms is compared. We generated four sets of input \ul{r}andomly \ul{g}enerated \ul{g}raphs (RGG) using the random graph generator: \textit{RGG-classic}, \textit{RGG-low}, \textit{RGG-medium} and \textit{RGG-high}. 

\textit{RGG-classic} is the first set of input application graphs that we generated to mimic the random graphs generated in the work presented by~\cite{topcuoglu2002performance} and \cite{arabnejad2014list}. These graphs use the heterogeneity factor that is embedded in them to generate the execution times of a given task on the different processors. Following on from the random graph generator used in the literature, the execution time for task $t_i$ on processor $p_j$ is randomly chosen from the following range:
\begin{equation}
w_i \times (1-\dfrac{\beta}{2}) \leq w_{i,j} \leq w_i \times (1+\dfrac{\beta}{2})
\end{equation}

\begin{figure}[t!]
	\centering
	\subfloat[Application graph]{
		\includegraphics[angle=0, scale=0.25]{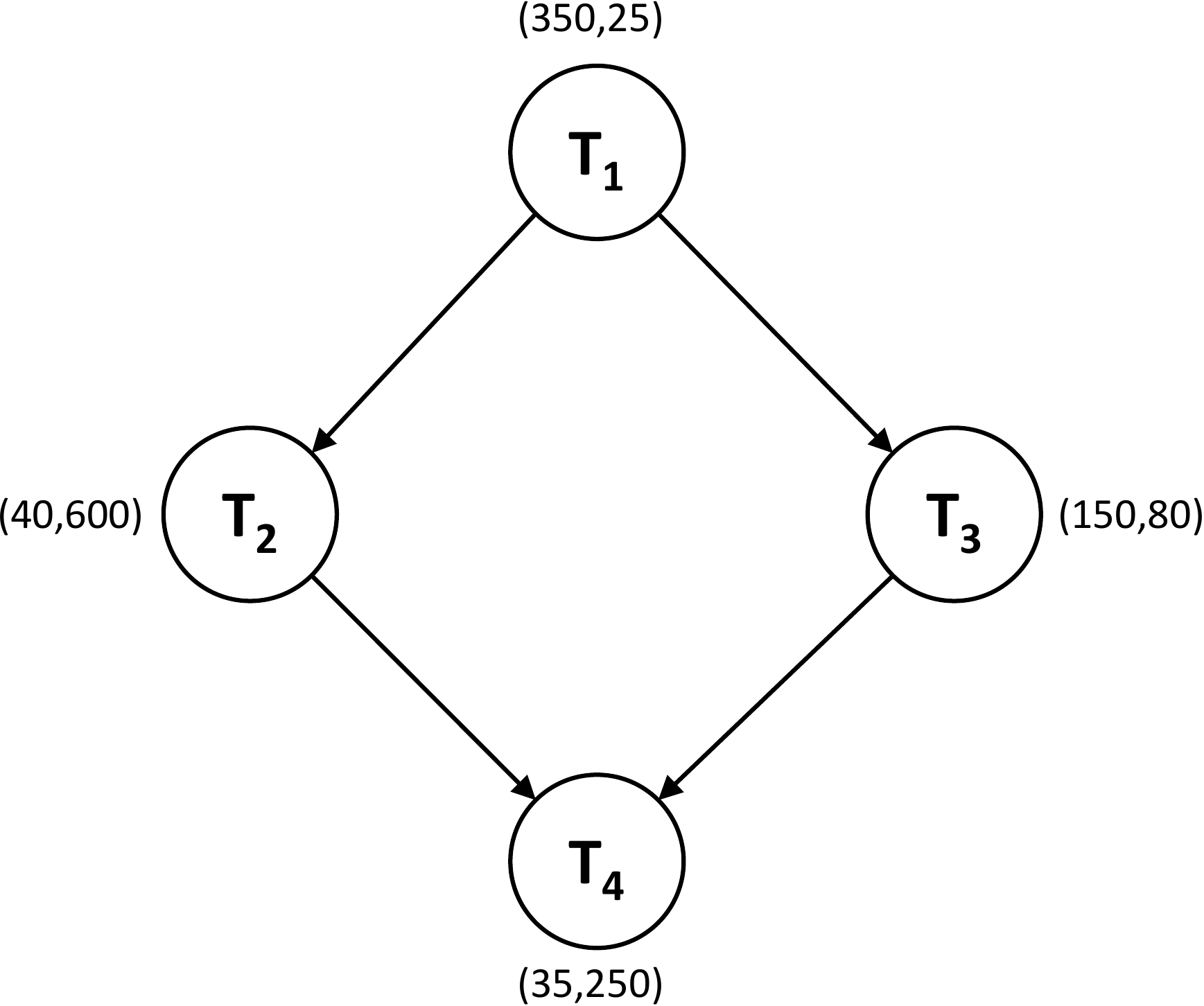}
		\label{fig:rgg-dag-app}
	}
	\qquad
	\centering
	\subfloat[Resource graph]{
		\includegraphics[angle=0, scale=0.25]{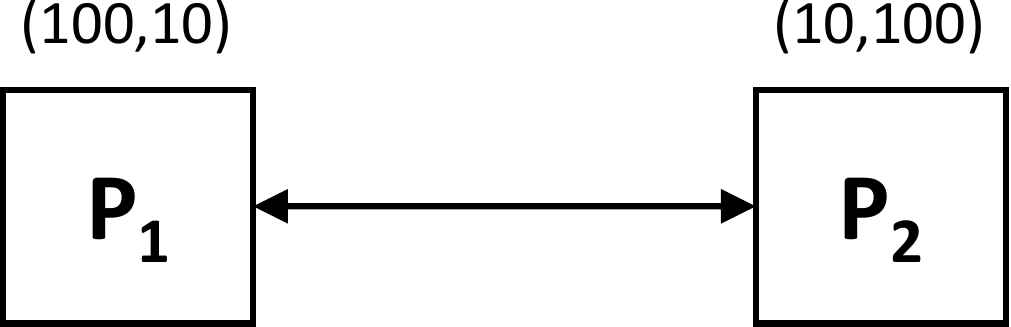}
		\label{fig:rgg-dag-res}
	}
	\caption{Sample graphs with 2 node-weights}
	\label{fig:rgg-dag}
\end{figure}

\noindent The possible range of values for $\beta$ is $0 \leq \beta \leq 1$, which means that $w_{i,j}$ can only possibly take values between $\dfrac{w_i}{2}$ and $\dfrac{3 \times w_i}{2}$. This implies that for any processor graph, a task can only be 3 times as fast on the fastest processor as it is on the slowest processor. This level of heterogeneity might not be representative of clusters which have certain processors with hardware accelerators. This is the major source of inspiration for us to generate the other three workloads. In the case of \textit{RGG-low}, \textit{RGG-medium} and \textit{RGG-high}, we use a modified version of the random graph generator. Every task in the modified graph from each of these workloads contains two node-weights as shown in figure~\ref{fig:rgg-dag}. Table~\ref{tab:rgg-dag-exec-time} shows the corresponding execution times of the tasks from figure~\ref{fig:rgg-dag-app} on the processors from figure~\ref{fig:rgg-dag-res}. 

\ifdefined\longversion
\begin{table}[b!]
\else
\begin{table}[t!]
\fi
	\centering
	\begin{tabular}{|c|c|c|}
		\hline
		& \textbf{P1} & \textbf{P2} \\
		\hline
		\textbf{T1} & 6 & 35.25\\
		\hline
		\textbf{T2} & 60.18 & 10 \\
		\hline
		\textbf{T3} & 9.5 & 15.8 \\
		\hline
		\textbf{T4} & 25.35 & 6 \\
		\hline
	\end{tabular}
	\caption{Execution time for the application and processor graph from figure~\ref{fig:rgg-dag}}
	\label{tab:rgg-dag-exec-time}
\end{table}

\begin{equation}
Cost(t_i, p_j)= \left[\dfrac{w^t_1(t_i)}{W^r_1(p_j)} + \dfrac{w^t_0(t_i)}{W^r_0(p_j)}\right]
\label{eq:latency}
\end{equation}

These execution times are calculated using a simple two-part cost model based on equation~\ref{eq:latency}. Every task and resource has two weights. The execution time of a task on a resource is given by the sum of the ratio of the corresponding node weights. We draw inspiration for this cost model from~\cite{vasudevan2014improved}. This cost model yields a higher variability in execution times with some tasks being fast on certain processors; while those processors being not universally faster for all tasks in the application graph. 

Consider the graphs from figure~\ref{fig:rgg-dag}; the value of node-weights of the tasks and processors determine the execution time of these tasks. We generated the same set of six processor graphs for the RGG-low, RGG-medium and RGG-high workloads. While creating said processor graphs, the values for the two node-weights are chosen from two intervals : $\{\mathcal{I}_1, \mathcal{I}_2\}$. At every node, a random number between 0 and 1 is chosen and if it is lower than $\beta$, the first node-weight is chosen from $\mathcal{I}_1$ and the second node-weight is chosen from $\mathcal{I}_2$. If it is higher than $\beta$ however, the two intervals are interchanged. This process of using the intervals to fill in the node-weights of the nodes in the graph is adopted for the application graphs too. For the workloads mentioned above, the following intervals were used:
\begin{itemize}
	\item Resource graph -- $\mathcal{I}_1 = \{10^2, 10^3\}$ and $\mathcal{I}_2=\{10^3, 10^4\}$
	\item \textit{RGG-low} -- $\mathcal{I}_1 = \{10^2, 10^3\}$ and $\mathcal{I}_2=\{10^3, 10^4\}$
	\item \textit{RGG-medium} -- $\mathcal{I}_1 = \{10^2, 10^3\}$ and $\mathcal{I}_2=\{10^4, 10^5\}$
	\item \textit{RGG-high} -- $\mathcal{I}_1 = \{10^2, 10^3\}$ and $\mathcal{I}_2=\{10^5, 10^6\}$
\end{itemize}

The result of this kind of workload generation, enables us to create workloads that have significantly different execution times. These workloads have the same structure, but differ in the execution times of the tasks as discussed in the previous paragraph. In order to generate the structure of the graphs, we use the random graph generator from the literature with the following parameters:
\begin{itemize}
	\item $n$ -- Number of tasks in the graph; -- $\textbf{\{128, 256, 512, 1024, 2048, 4096, 8192, 16384\}}$
	\item $o$ -- The average outdegree of a node in the graph; -- $\textbf{\{2, 4, 8\}}$
	\item $c$ -- Communication-to-Computation ratio (CCR)\ifdefined\longversion. It is the ratio of the weight of an edge leaving a vertex (i.e. communication cost) to the vertex weight (i.e. the computation cost). In order to incorporate heterogeneity in the communication backbone, the weight is chosen randomly in the range $w_i \times c \times ( 1-\dfrac{\beta}{2} ), w_i \times c \times ( 1+\dfrac{\beta}{2} )$; where $w_i$ represents the computation cost or the weight of the vertex and $\beta$ denotes the heterogeneity factor as described below; \fi-- $\textbf{\{0.001, 0.01, 0.1, 1, 5, 10\}}$
	\item $\alpha$ -- Shape parameter of the graph\ifdefined\longversion. The height of the graph depends on this parameter as $\dfrac{\sqrt{n}}{\alpha}$. The width of the graph is randomly chosen from a uniform distribution with a mean equal to $\alpha \times \sqrt{n}$. Hence smaller values of $\alpha$ gives tall and skinny graphs, while larger values of $\alpha$ gives short and fat graphs; \fi-- $\textbf{\{0.1, 0.25, 0.75, 1.0\}}$
	\item $\beta$ -- Heterogeneity factor of the graph\ifdefined\longversion. This parameter dictates the weights of the vertices in the graphs (i.e. computation costs) which is randomly chosen from the following range:
	\begin{equation}
	w_i \times (1-\dfrac{\beta}{2}) \leq w_{i,j} \leq w_i \times (1+\dfrac{\beta}{2})
	\end{equation}
	
	\noindent where $w_i$ is the weight of the vertex or the computation cost which is chosen randomly from a uniform distribution in the range $[0, 2 \times w_{DAG}]$. $w_{DAG}$ is the average computation cost of the graph and is chosen randomly. This is the way heterogeneity is incorporated into the application graphs throughout this paper unless otherwise stated explicitly; \fi-- $\textbf{\{10, 25, 50, 75, 95\}}$
	\item $\gamma$ -- Skewness parameter of the graph\ifdefined\longversion. It denotes how computation is spread across the graph. Smaller values of $\gamma$ gives uniformly distributed graphs while larger values give skewed graphs where pockets of the graph are more computationally intensive compared to other parts of the graph; \fi-- $\textbf{\{0.1, 0.25, 0.5, 0.75, 0.95\}}$
\end{itemize}

With the above configuration of parameters, a total of 14400 graphs were created. Each of these randomly generated graphs are scheduled on six different processor graphs ($p$ -- $\textbf{\{2, 4, 8, 16, 32, 64\}}$; where $p$ is the number of processors). This amounts to 86400 experiments (an experiment corresponds to an input application DAG, processor graph pair) for every workload and a total of 345600 experiments across all the workloads. To our best knowledge, our experiments are the only ones to use application graphs that have a large number of nodes (between 128 and 16384) as benchmarks. Previous evaluation of other heuristics such as HEFT and PEFT have a maximum of 500 nodes in the randomly generated graphs. \ifdefined\longversion\footnote{To explore the different workloads further, we encourage the readers to download the code from Github at \url{https://github.com/aravind-vasudevan/graphgen} and experiment with the different input parameters.}\fi

\ifdefined\longversion
\else
\vspace{-0.2cm}
\fi
\subsection{Real world graphs}
\label{sec:paper-crit-path-real-worl-grap}

In addition to the variants of the randomly generated graphs (RGG-classic, RGG-low, RGG-medium and RGG-high), we evaluate the performance of our algorithm on real-world applications, namely Gaussian elimination~\cite{cosnard1988parallel}, Fast Fourier Transform~\cite{topcuoglu1999task}, Molecular dynamics~\cite{kim1988general} and Epigenomics workflow~\cite{bharathi2008characterization}. As is a common trend in the scheduling research community~\cite{topcuoglu2002performance,arabnejad2014list}, we generate graphs based on the known structure of these real-world applications. These graphs are generated with differing values of some of the parameters discussed in section~\ref{sec:rand-gene-grap}. Since the structure is known for these applications, the $\alpha$ parameter of the graphs cannot be changed. The range of values for the other parameters are as follows: $\{\textbf{0.001, 0.01, 0.1, 0.5, 1, 5, 10}\}$ for $c$ (CCR) and $\{\textbf{10, 25, 50, 75, 95}\}$ for $\beta$ (heterogeneity). These real world applications are also run on the six different processor graphs as mentioned in the previous section.

\ifdefined\longversion
\begin{figure}[t]
	\centering
	\hspace*{\fill}%
	\csubfloat[Gaussian Elimination]{
		\includegraphics[angle=0, scale=0.15]{./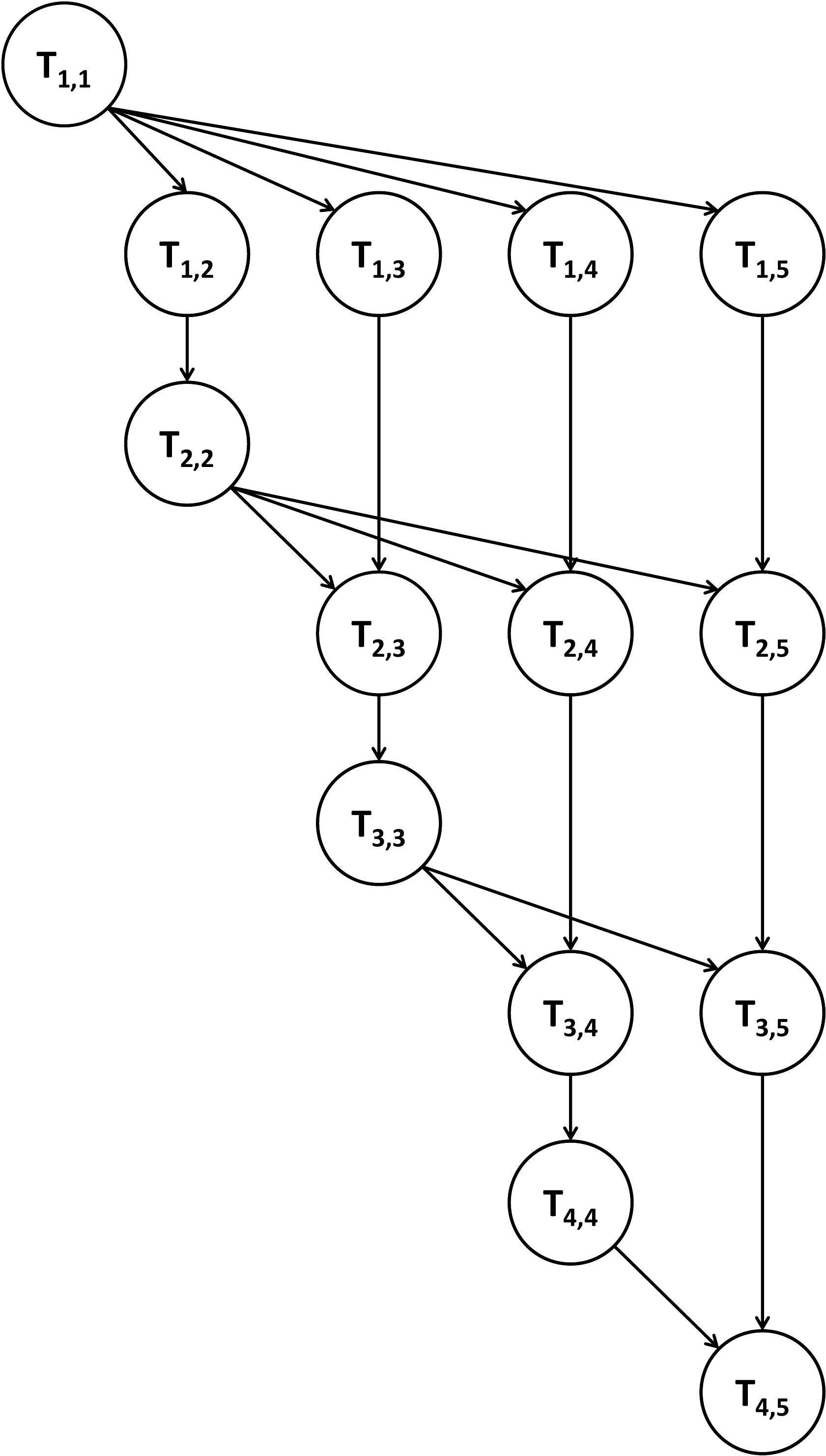}
		\label{fig:gaussian}
	}
	\centerhfill
	\csubfloat[Fast Fourier Transform]{
		\includegraphics[angle=0, scale=0.2]{./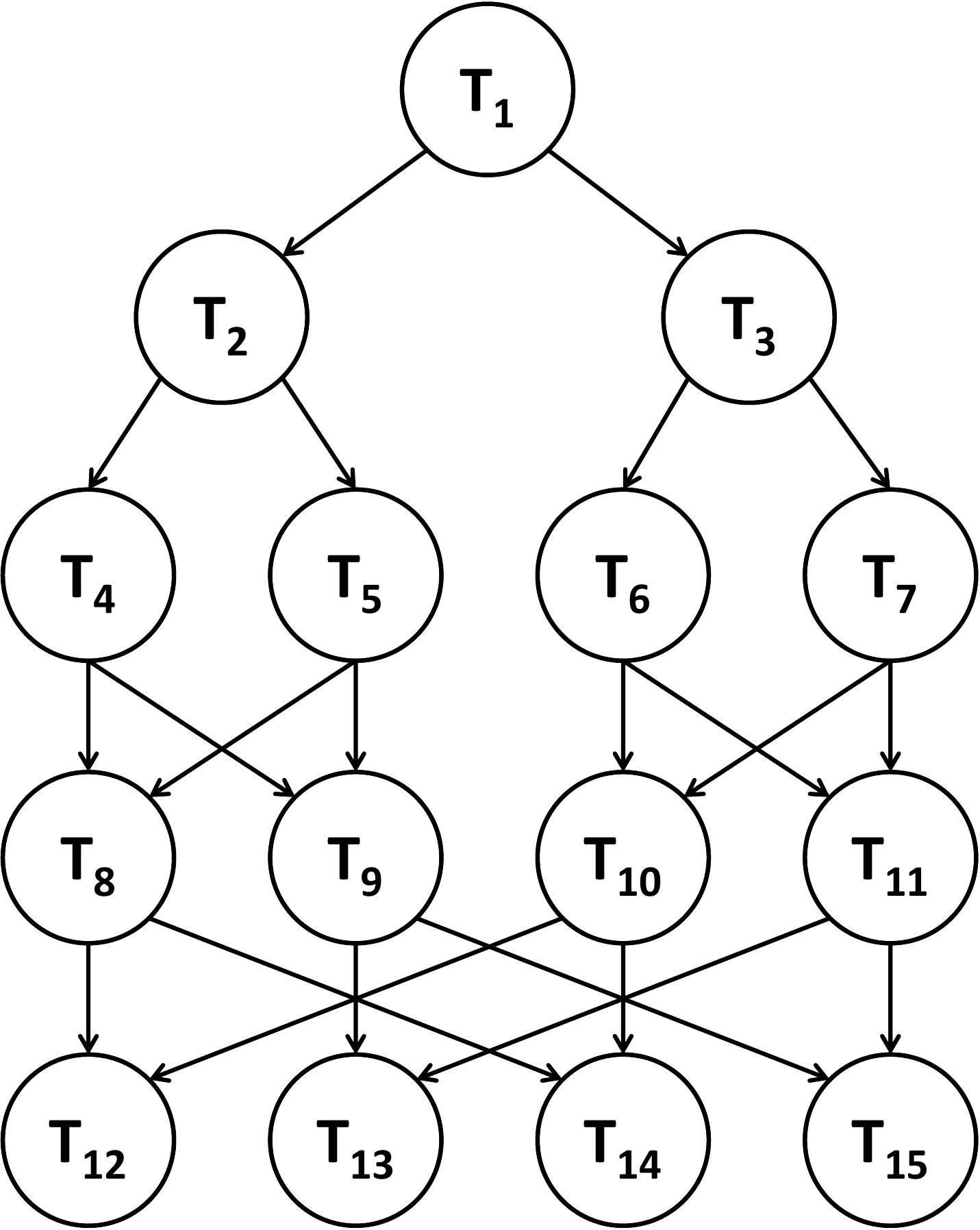}
		\label{fig:fft}
	}
	\hspace*{\fill}%
	\caption[Application DAGs for Gaussian Elimination and Fast Fourier Transform]{Application DAGs for Gaussian Elimination and Fast Fourier Transform. Redrawn from~\cite{wu1990hypertool,cosnard1988parallel,topcuoglu2002performance}}
	\label{fig:gaussian-fft}
\end{figure}

\subsubsection{Fast Fourier Transform (FFT)}
\label{sec:paper-crit-path-fft}
Figure~\ref{fig:fft} shows the task graph of another real world application, \textit{Fast Fourier Transform}. The FFT algorithm can be split into two parts: recursive calls and the butterfly operation~\cite{topcuoglu2002performance} represented by the dashed line in the figure. All the tasks above the line represent the recursive calls and the ones below are the butterfly operation tasks. For a given input vector of size $m$ which is a power of two, there are $2 \times m-1$ recursive calls and $m \times {log}_2 m$ butterfly operations. This application is especially unique in that all the paths in this application are the \textit{critical-path} and they all have the same weight~\cite{cosnard1988parallel}.

\subsubsection{Gaussian Elimination (GE)}
\label{sec:paper-crit-path-gaus}
\textit{Gaussian Elimination}~\cite{wu1990hypertool,cosnard1988parallel}, is an algorithm which solves a linear system of equations by performing a sequence of operations on the associated matrix of coefficients. Figure~\ref{fig:gaussian} shows the task graph for the Gaussian Elimination algorithm operating on a matrix of size 5. The total number of tasks in a Gaussian Elimination graph is given by $\dfrac{m^2+m-2}{2}$ and in the case of the figure, the number of tasks when $m=5$ is 14.

\subsubsection{Molecular Dynamics (MD)}
\label{sec:paper-crit-path-mole-dyna}
\begin{figure}[hb!]
	\centering
	\includegraphics[angle=0, scale=0.45]{./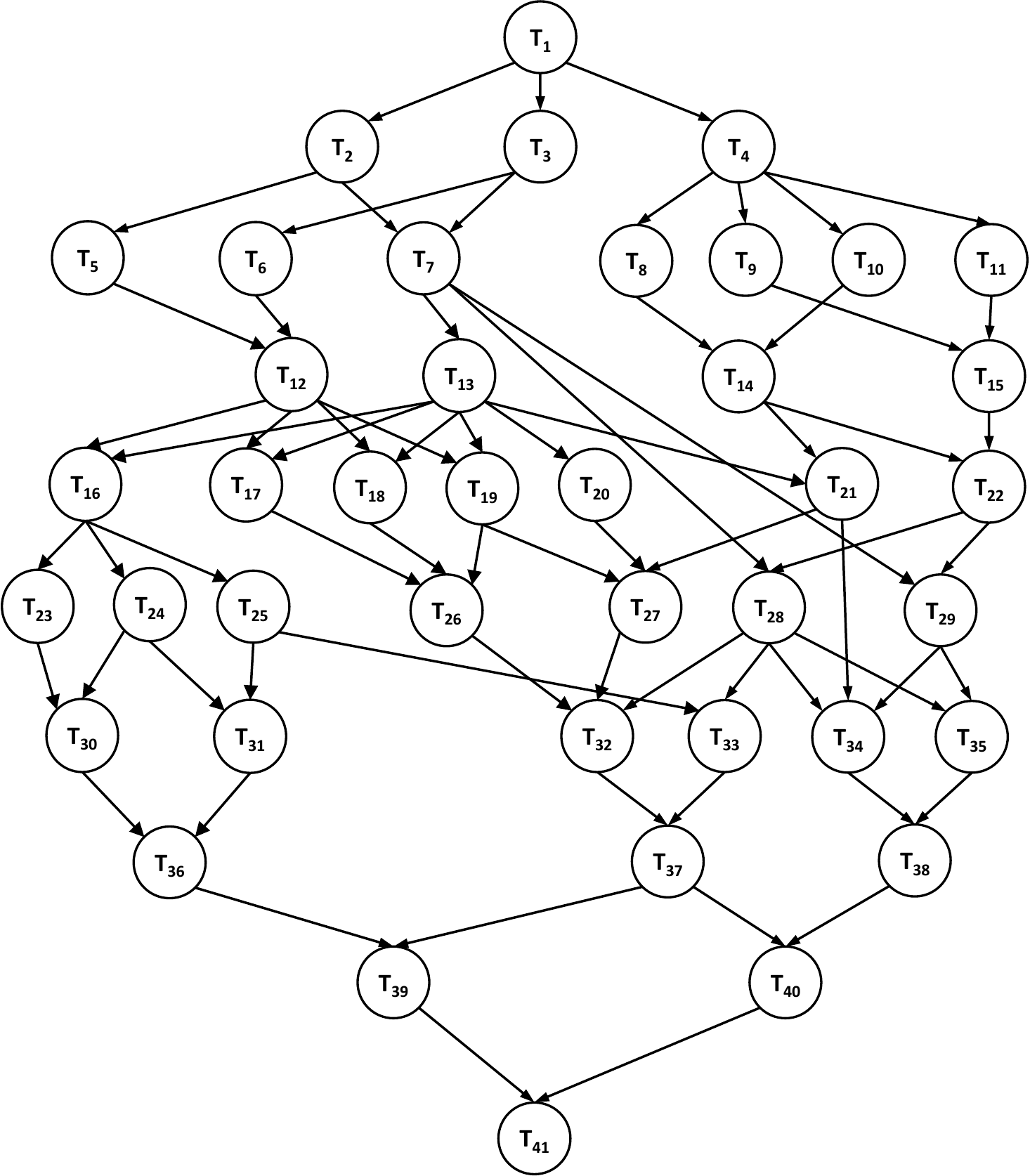}
	\caption[Application DAG for Molecular dynamics code]{Application DAG for Molecular dynamics code. Redrawn from~\cite{kim1988general}}
	\label{fig:molecular-dynamics}
\end{figure}
A commonly found application in the literature is the modified molecular dynamic code from~\cite{kim1988general}. The task graph of this code is presented in Figure~\ref{fig:molecular-dynamics}. This application serves as a benchmark for scheduling algorithms due the shape of its irregular task graph. The task graph was modified by Browne~\cite{kim1988general} from its original structure in order to increase the number of tasks and edges. They also modified the computation and communication times of the tasks and edges while synthetically generating the architecture on which this task graph was run. This was done in an attempt to increase the variability in the graph. In a similar vein, all the scheduling algorithms presented in this paper were tested on synthetically generated application and processor graphs, unless explicitly stated otherwise.

\subsubsection{Epigenomics Workflow (EW)}
\label{sec:paper-crit-path-epig-work}
The epigenomics workflow is a data processing pipeline that automates the execution of various genome sequencing operations. It maps the epigenetic state of the human cells on a genome-wide scale. Parts of this application can be split into independent chunks (split on inputs) which can be executed in parallel. The outputs from these independent chunks are further processed to filter noise and contaminating sequences. The graph has a very compact parallel structure and is generally wider than it is taller. 

\else
\vspace{-0.2cm}
\fi
\subsection{Comparison metrics}
\label{sec:paper-crit-path-comp-metr}

We compare the algorithms based on the following comparison metrics : critical path length (CPL), schedule length (makespan), speedup, schedule length ratio (SLR), slack and a pairwise comparison of number of occurrences of better solutions which are common heuristics used to compare the performance of scheduling algorithms~\cite{topcuoglu2002performance,kwok1996dynamic,arabnejad2014list,ahmad1998exploiting,topcuoglu1999task}.

\ifdefined\longversion

\subsubsection{Critical path length (CPL)}
\label{sec:paper-crit-path-crit-path-leng}
As we have discussed in section~\ref{sec:paper-crit-path-form}, the critical path is the longest path from the entry node to the exit node in the application graph. The length of the critical path in turn becomes a key metric as it serves as a hard lower bound for the schedule length (makespan). As our algorithm is primarily a critical path finding algorithm, this metric is of key importance and we compare the lengths of the paths produced by our algorithm and CPOP for a given input application graph and processor graph pair. HEFT is not a critical path based scheduling algorithm and hence we cannot present the statistics for it under this comparison metric.

\subsubsection{Speedup}	
\label{sec:paper-crit-path-effi}

Speedup is defined as the ratio of the sequential execution time to the parallel execution time (\textit{makespan}). The sequential execution time is calculated by assigning all tasks onto the processor which minimizes the total execution time of the task graph as shown in the following equation :

\begin{equation}
\label{eq:spee}
Speedup = \frac{min_{p_j\in P}[\Sigma_{t_i \in T}\costTable(t_i, p_j)]}{makespan}
\end{equation}

\noindent In equation~\ref{eq:spee}, the numerator represents the sequential execution time of the input application graph for the given processor graph. This value is independent of the choice of the scheduling algorithm and is therefore a constant for all the three algorithms (our critical path algorithm, CPOP and HEFT) under scrutiny here. This implies that the speedup is the makespan normalised against the sequential execution time which is constant across all the algorithms compared. Hence, speedup is often used as a better replacement metric for the makespan as it returns a normalised  score.

\subsubsection{Scheduling length ratio (SLR)}
\label{sec:paper-crit-path-sche-leng-rati}

The most commonly used metric to compare the performance of scheduling algorithms is the \textit{makespan}. Its use as a comparison metric has been well established in the literature~\cite{topcuoglu2002performance, kwok1996dynamic, topcuoglu1999task, braun2001comparison}. But in order to normalize the schedule length against any topology/processor graph, we adopt the normalized schedule length (NSL)~\cite{daoud2008high} which is also called the scheduling length ratio (SLR). It is defined as follows :

\begin{equation}
\label{eq:slr}
SLR = \dfrac{makespan}{\underset{t_i \in CP}{\Sigma}min_{p_j \in P}[\costTable(t_i, p_j)]}
\end{equation}

\noindent where $CP$ is the critical path. The denominator\footnote{The denominator of SLR is often confused with the numerator of speedup. They are not the same as the task set to which they are applied to is different. In the denominator of the SLR only tasks from the CP are considered, while the numerator of speedup considers all the tasks in the task graph.} of the equation gives the sum of the computation costs of the critical path tasks assuming they are assigned to the processors which minimize their individual execution times. The SLR of an application DAG (under an optimal assignment or using any other scheduling algorithm) cannot be less than one as no valid schedule of tasks on the processors can produce a smaller makespan than the denominator. Since the critical path serves as lower bound for the makespan, one can identify that the denominator value might be smaller\footnote{It is equal in the case where the input application graph is a linear DAG} than the \textit{true} critical path; as this formulation ignores communication cost and hence produces shorter critical path lengths than the true critical path length.

\subsubsection{Slack}
\label{sec:paper-crit-path-slac-metr}
Slack is a commonly used metric in the context of comparing scheduling algorithms~\cite{shi2006robust}. It represents the ability of a schedule to deal with delays in the execution of some tasks. It represents how accommodative a schedule is and acts as representative of robustness in the scheduling algorithms literature~\cite{boloni2002robust}. The slack of a task represents the time window within which the task can be delayed without extending the makespan. Slack is defined as,

\begin{equation}
Slack = \dfrac{\mathlarger{\mathlarger{\sum}}_{t_i \in V}M-b_{level}(t_i)-t_{level}(t_i)}{\numberOfTasks}
\end{equation}

\noindent It is important to note that makespan and slack are conflicting metrics. Makespan is representative of the efficiency of the scheduling algorithms in terms of its capability to lower the execution time of the application DAG; whereas, slack is representative of the forgiving nature of the schedule. \footnote{If one were to reduce the problem \textit{ad absurdum}, a schedule that never finishes or finishes at infinity will have the highest (infinite) slack.}

Static algorithms however, deal with the time dependence of certain application DAGs by using stochastic models where task execution times are random variables as discussed in~\cite{adam1974comparison}. Braun et al. also suggest that scheduling algorithms having higher slack are more \textit{robust} for DAGs who employ a stochastic model for task execution times. In our experiments, we do not use any such dynamic DAGs as our workloads are comprised entirely of static DAGs. In this case, a higher slack usually means that the schedule is not \textit{tight} enough. However, if a scheduling algorithm creates a schedule with low SLR and high slack, it means that there is still scope for improvement in the schedule and hence even lower SLR values can be obtained. 

\fi

\section{Results and analysis}
\label{sec:paper-crit-path-resu-anal}

\ifdefined\longversion

\begin{table}[t!]
	\resizebox{\linewidth}{!}{%
		\centering
		\begin{tabular}{|c|c|c|c|c|}
			\hline
			\textbf{Workload} & \textbf{\# of experiments} & \textbf{CPL(\%)} & \textbf{makespan(\%)}\\
			\hline
			\multirow{3}{*}{RGG-classic}& 99346 & Longer & 60.06 & 26.95 \\
			\cline{2-5}
			& 99346 & Equal & 39.93 & 57.12\\
			\cline{2-5}
			& 99346 & Shorter & 0 & 15.9\\
			\hline
			\multirow{3}{*}{RGG-low}& 100800 & Longer & 40.61 & 23.15\\
			\cline{2-5}
			& 100800 & Equal & 0.46 & 0.89\\
			\cline{2-5}
			& 100800 & Shorter & 58.92 & 75.94\\
			\hline
			\multirow{3}{*}{RGG-medium}& 100800 & Longer & 16.52 & 7.96\\
			\cline{2-5}
			& 100800 & Equal & 0.33 & 1.74\\
			\cline{2-5}
			& 100800 & Shorter & 83.14 & 90.29\\
			\hline
			\multirow{3}{*}{RGG-high}& 100800 & Longer & 15.20 & 7.66\\
			\cline{2-5}
			& 100800 & Equal & 0.8 & 2.64\\
			\cline{2-5}
			& 100800 & Shorter & 83.99 & 89.69\\
			\hline
			
		\end{tabular}
	}
	\caption{Percentage of instances in the experiments where CEFT's CPL and makespan are longer, equal or shorter than CPOP's corresponding values}
	\label{tab:ceft-cpop}
\end{table}

\else

\begin{table}[t!]
			\centering
	\resizebox{0.85\linewidth}{!}{%

		\begin{tabular}{|c|c|c|c|}
			\hline
			\textbf{Workload} & & \textbf{CPL(\%)} & \textbf{makespan(\%)}\\
			\hline
			\multirow{3}{*}{RGG-classic}& Longer & 60.06 & 26.95 \\
			\cline{2-4}
			& Equal & 39.93 & 57.12\\
			\cline{2-4}
			& Shorter & 0 & 15.9\\
			\hline
			\multirow{3}{*}{RGG-low}& Longer & 40.61 & 23.15\\
			\cline{2-4}
			& Equal & 0.46 & 0.89\\
			\cline{2-4}
			& Shorter & 58.92 & 75.94\\
			\hline
			\multirow{3}{*}{RGG-medium}& Longer & 16.52 & 7.96\\
			\cline{2-4}
			& Equal & 0.33 & 1.74\\
			\cline{2-4}
			& Shorter & 83.14 & 90.29\\
			\hline
			\multirow{3}{*}{RGG-high}& Longer & 15.20 & 7.66\\
			\cline{2-4}
			& Equal & 0.8 & 2.64\\
			\cline{2-4}
			& Shorter & 83.99 & 89.69\\
			\hline
			
		\end{tabular}
	}
	\caption{Percentage of instances in the experiments where CEFT's CPL and makespan are longer, equal or shorter than CPOP's corresponding values}
	\label{tab:ceft-cpop}
\end{table}

\fi

\begin{figure}[t!]
	\centering
	\includegraphics[width=0.92\linewidth]{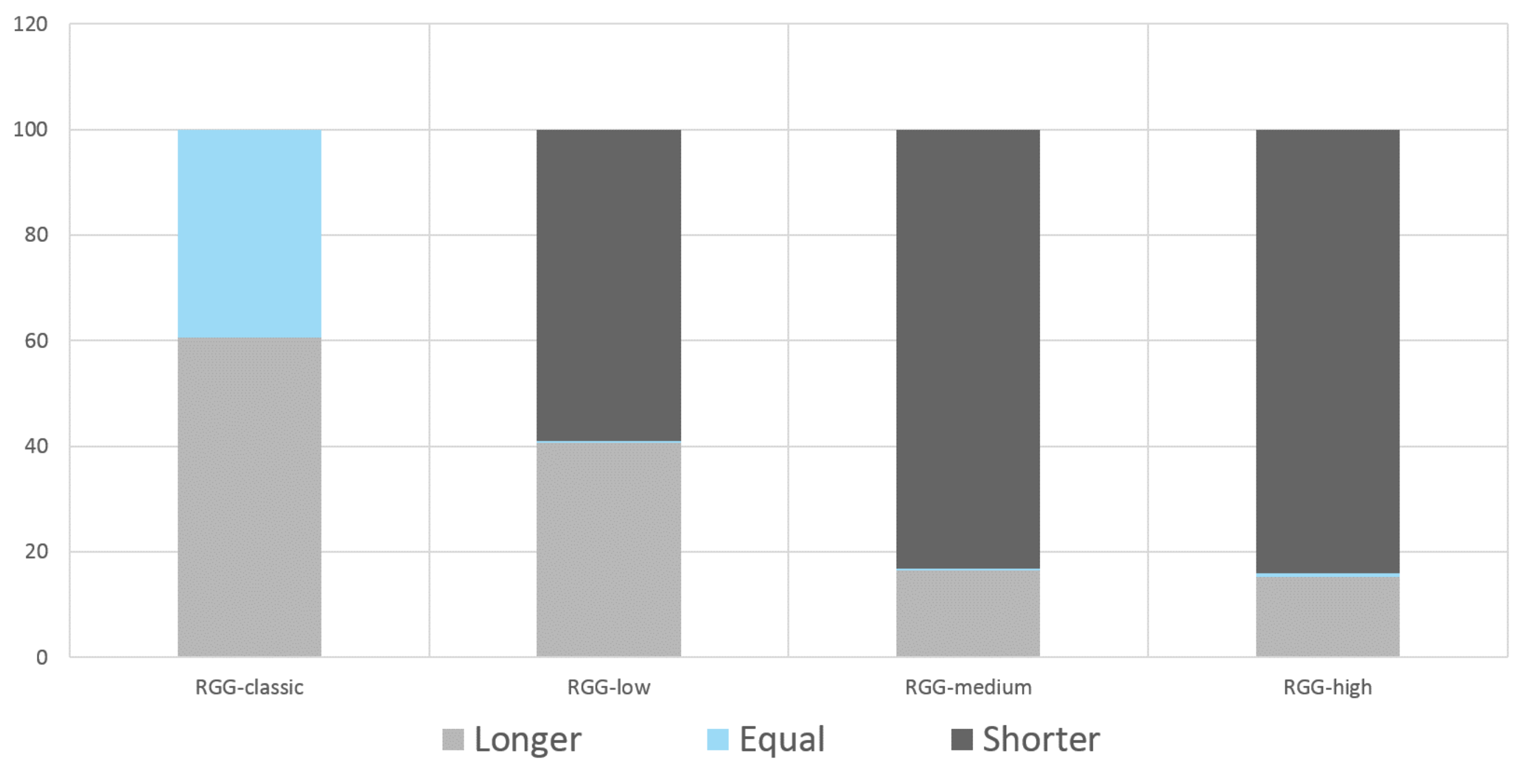}
	\caption{Percentage of instances CEFT's CPL is longer, equal or shorter compared to CPOP's CPL}
	\label{fig:cpl-comparison}
\end{figure}

\begin{figure}[b!]
	\centering
	\includegraphics[width=0.92\linewidth]{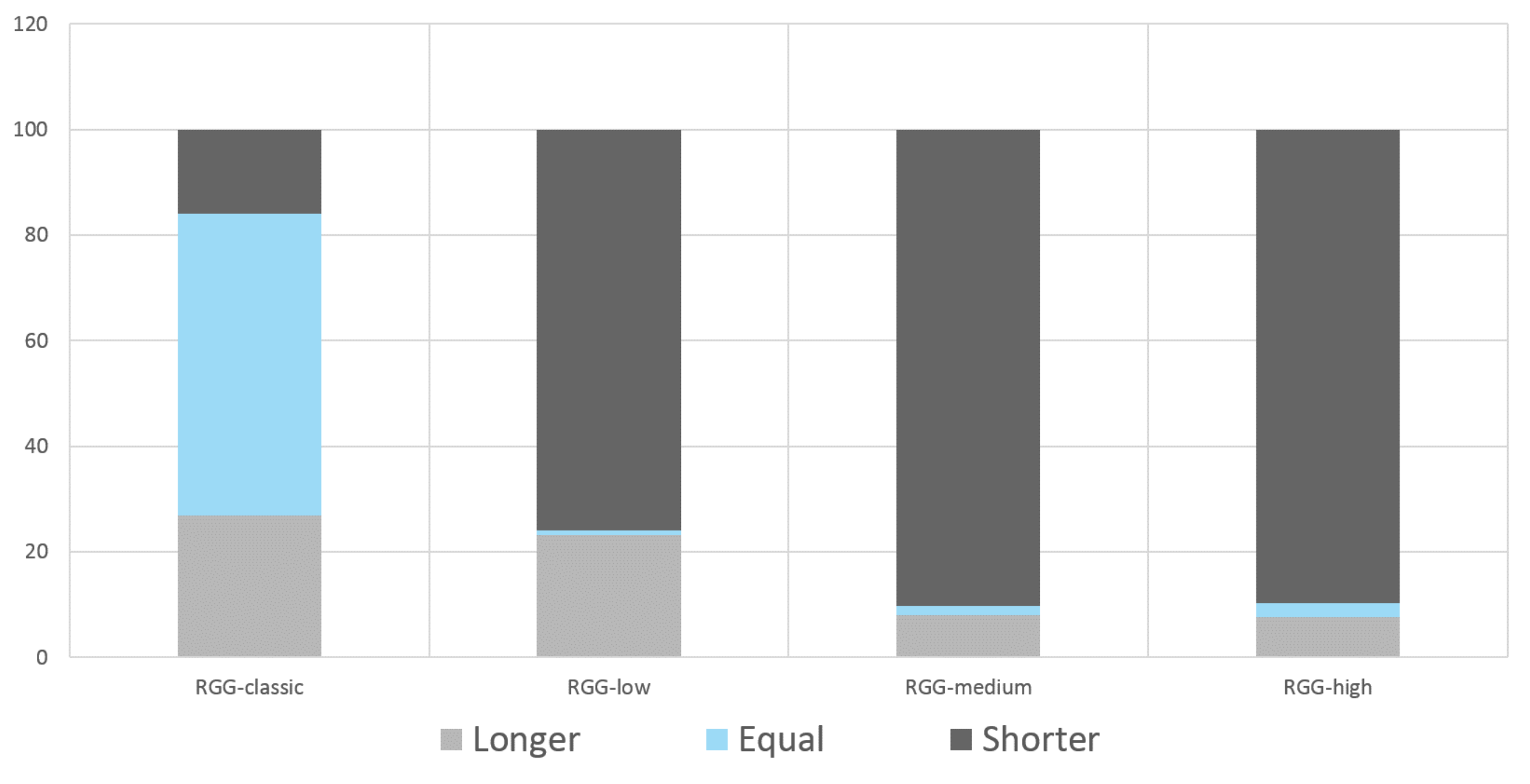}
	\caption{Percentage of instances CEFT's makespan is longer, equal or shorter compared to CPOP's makespan}
	\label{fig:makespan-comparison}
\end{figure}

\ifdefined\longversion
\else
\begin{figure*}[h]
	\centering
	\hspace*{\fill}%
	\csubfloat[RGG-classic]{
		\includegraphics[width=0.22\linewidth]{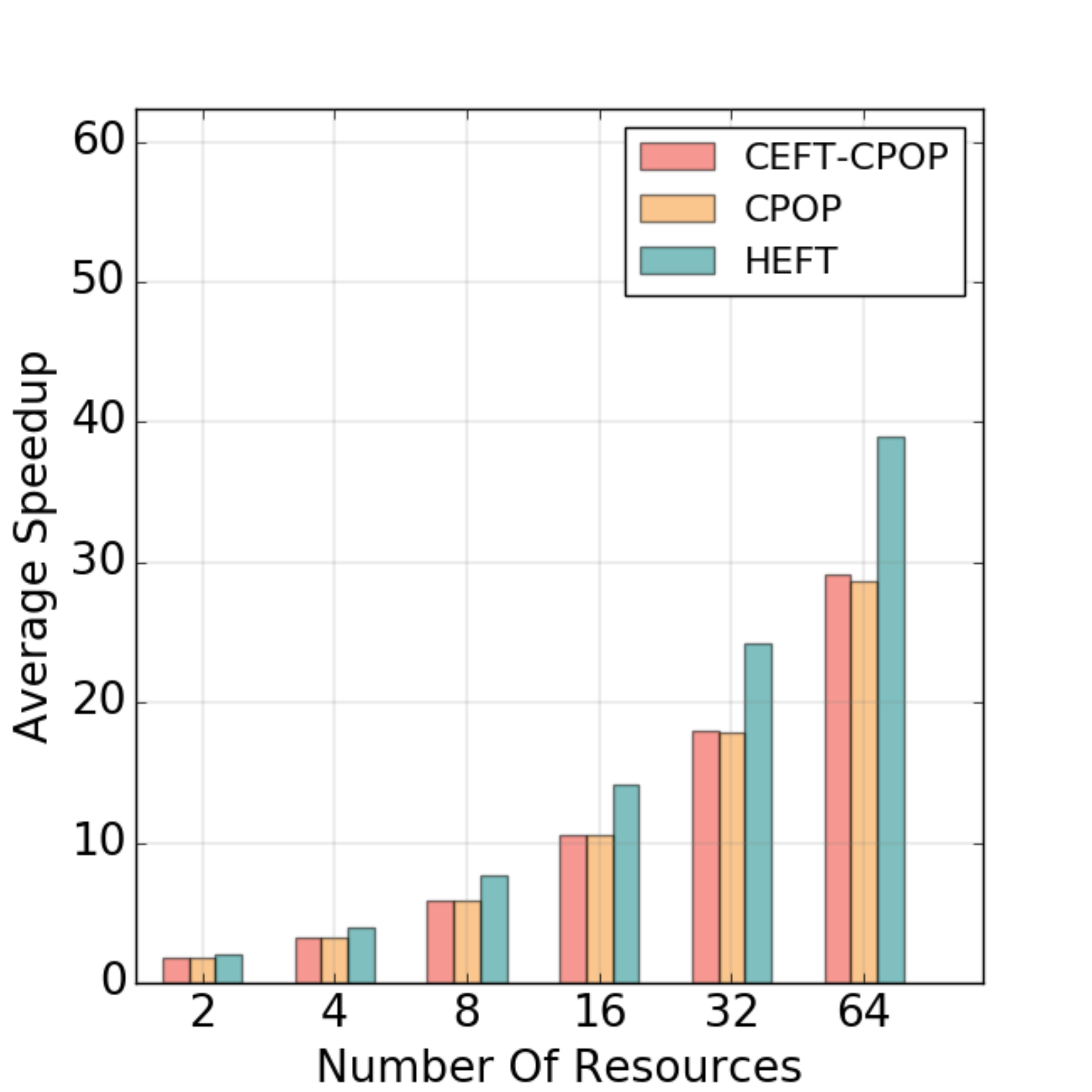}
		\label{fig:rgg-clas-res-speedup}
	}
	\centerhfill
	\csubfloat[RGG-low]{
		\includegraphics[width=0.22\linewidth]{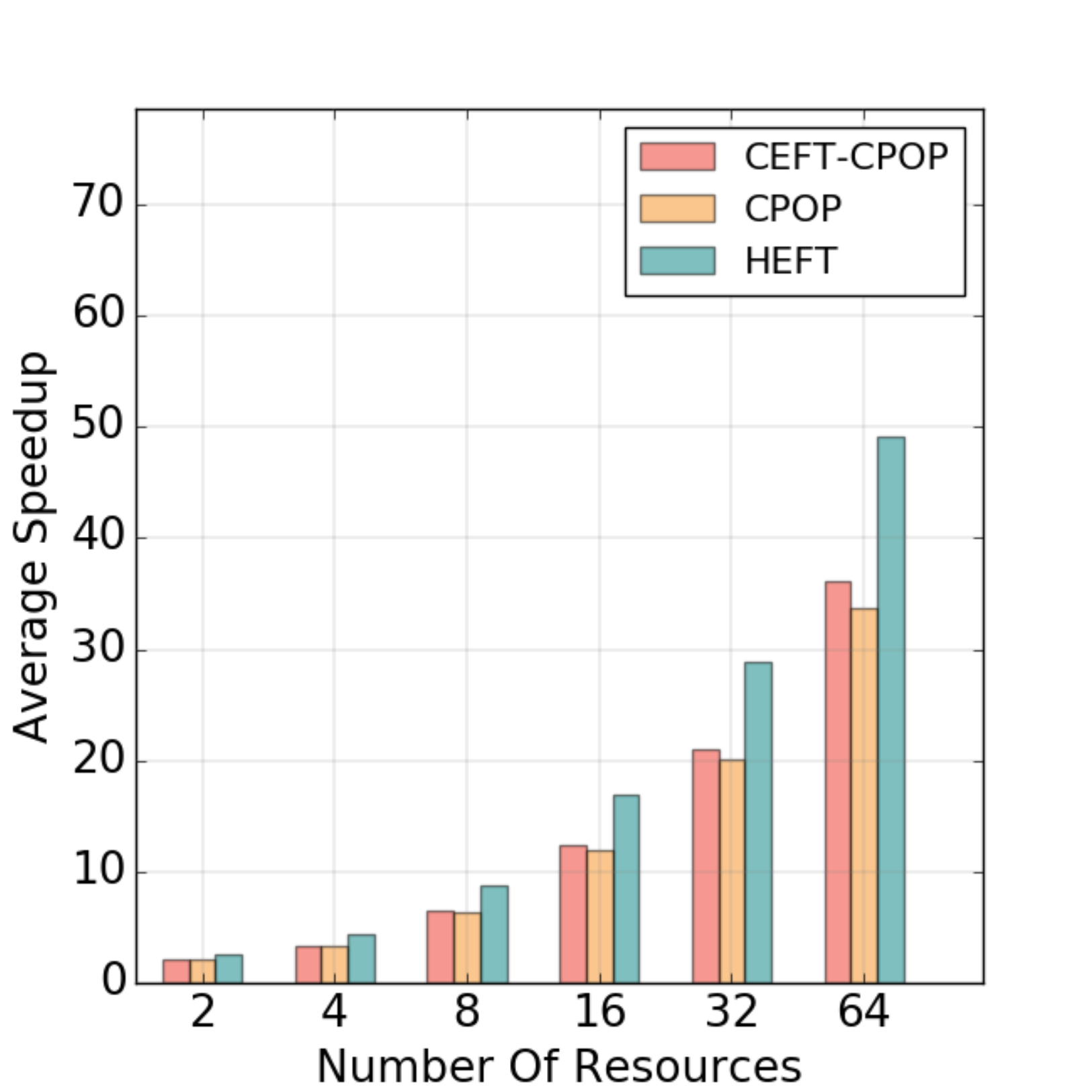}
		\label{fig:rgg-low-res-speedup}
	}
	\centerhfill
	\csubfloat[RGG-medium]{
		\includegraphics[width=0.22\linewidth]{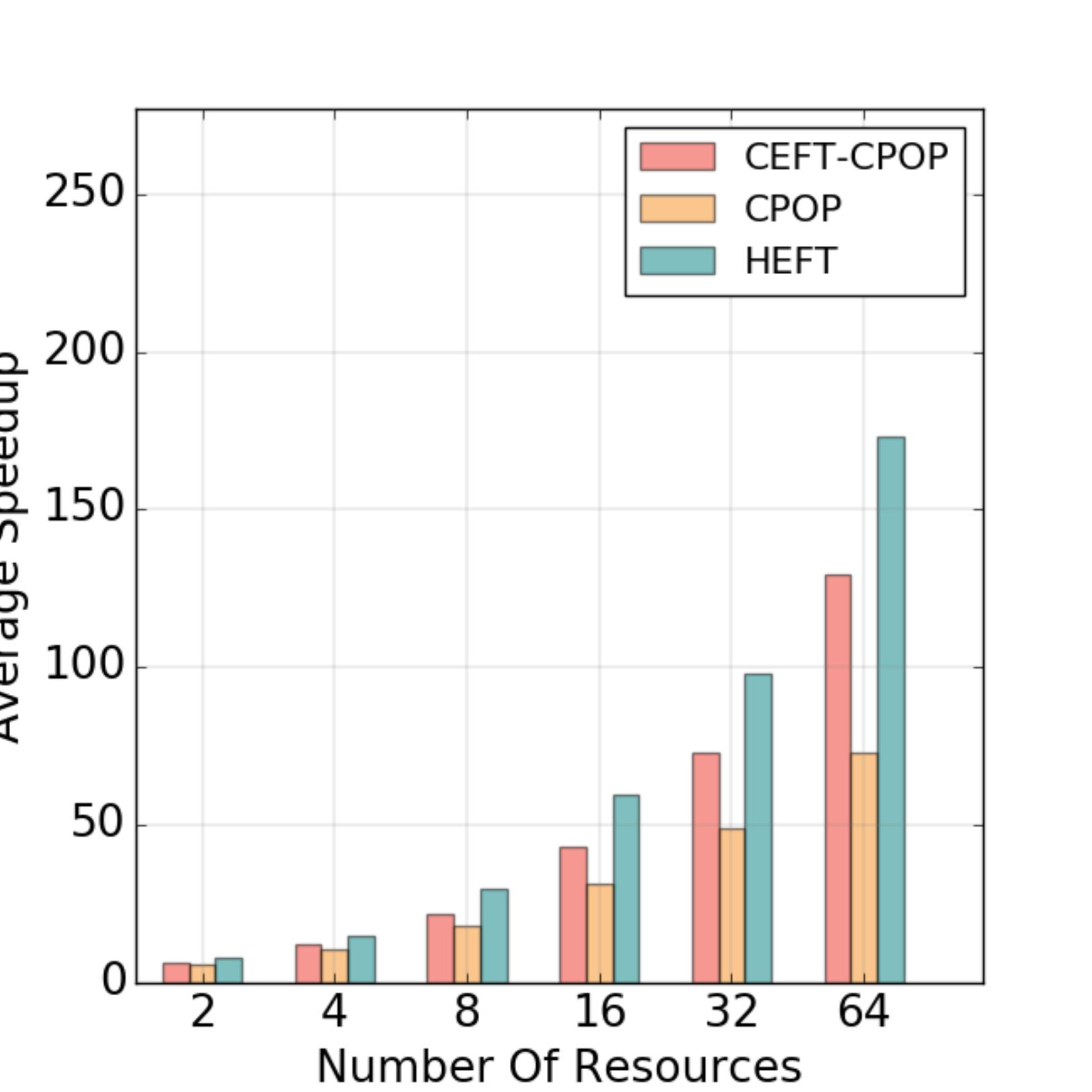}
		\label{fig:rgg-med-res-speedup}
	}
	\centerhfill
	\csubfloat[RGG-high]{
		\includegraphics[width=0.22\linewidth]{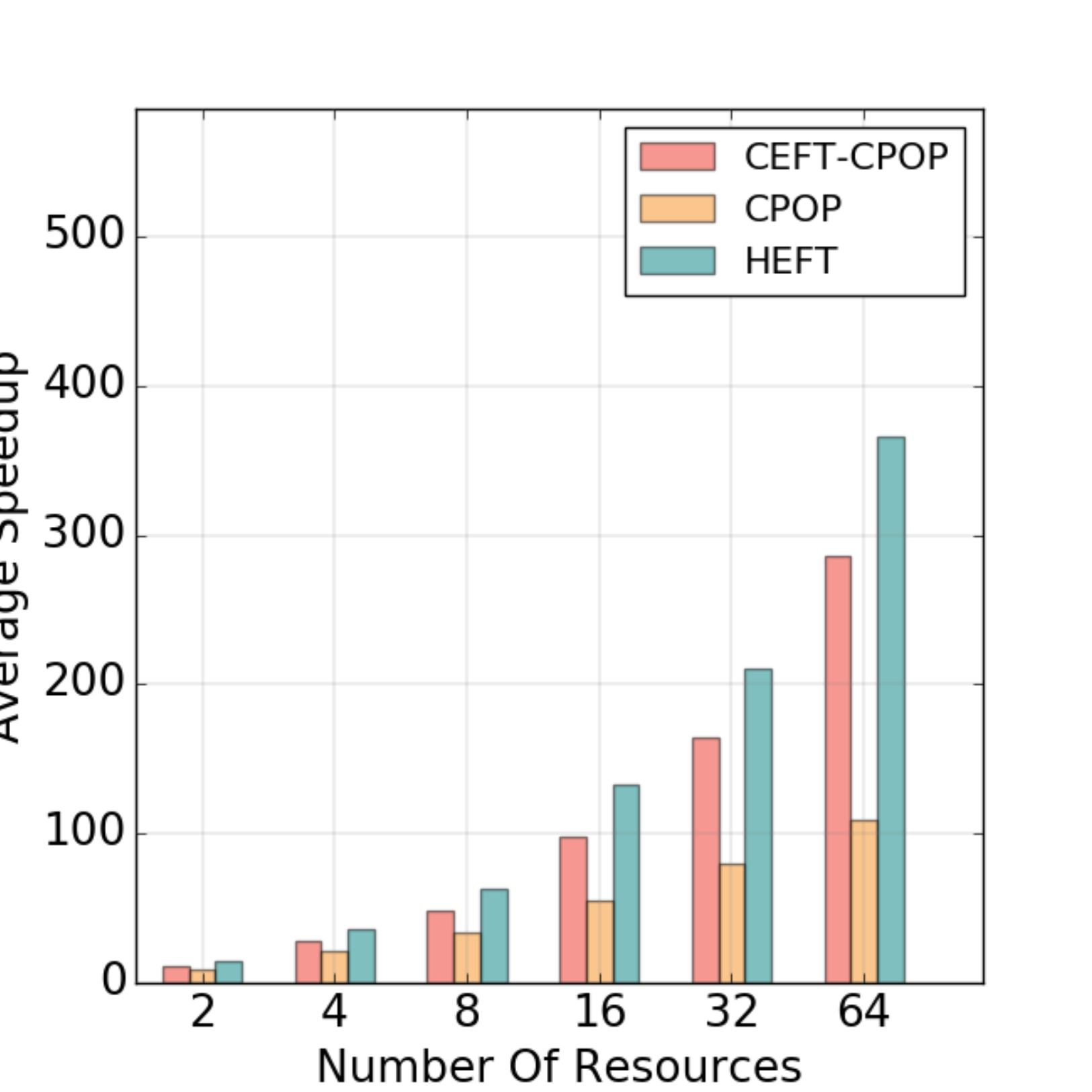}
		\label{fig:rgg-high-res-speedup}
	}
	\hspace*{\fill}%
	\caption{Comparing speedup across different workloads in terms of the number of processors in the processor graph. Higher is better.}
	\label{fig:res-speedup}
\end{figure*}

\begin{figure*}[t!]
	\centering
	\begin{minipage}{.49\textwidth}
		\centering
		\hspace*{\fill}%
		\csubfloat[RGG-classic]{
			\includegraphics[width=0.49\linewidth]{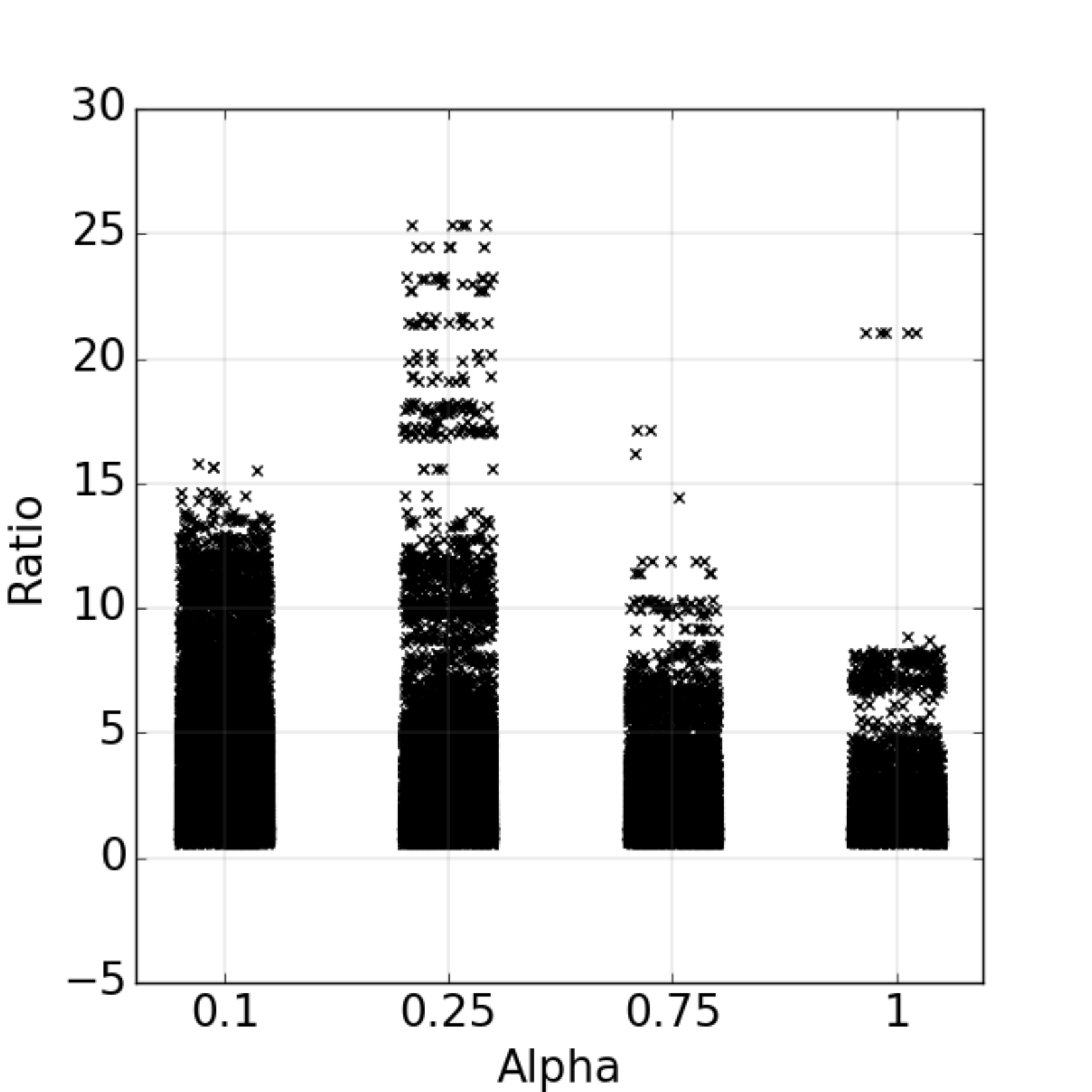}
			\label{fig:rgg-clas-res-cpl}
		}
		\centerhfill
		\csubfloat[RGG-high]{
			\includegraphics[width=0.49\linewidth]{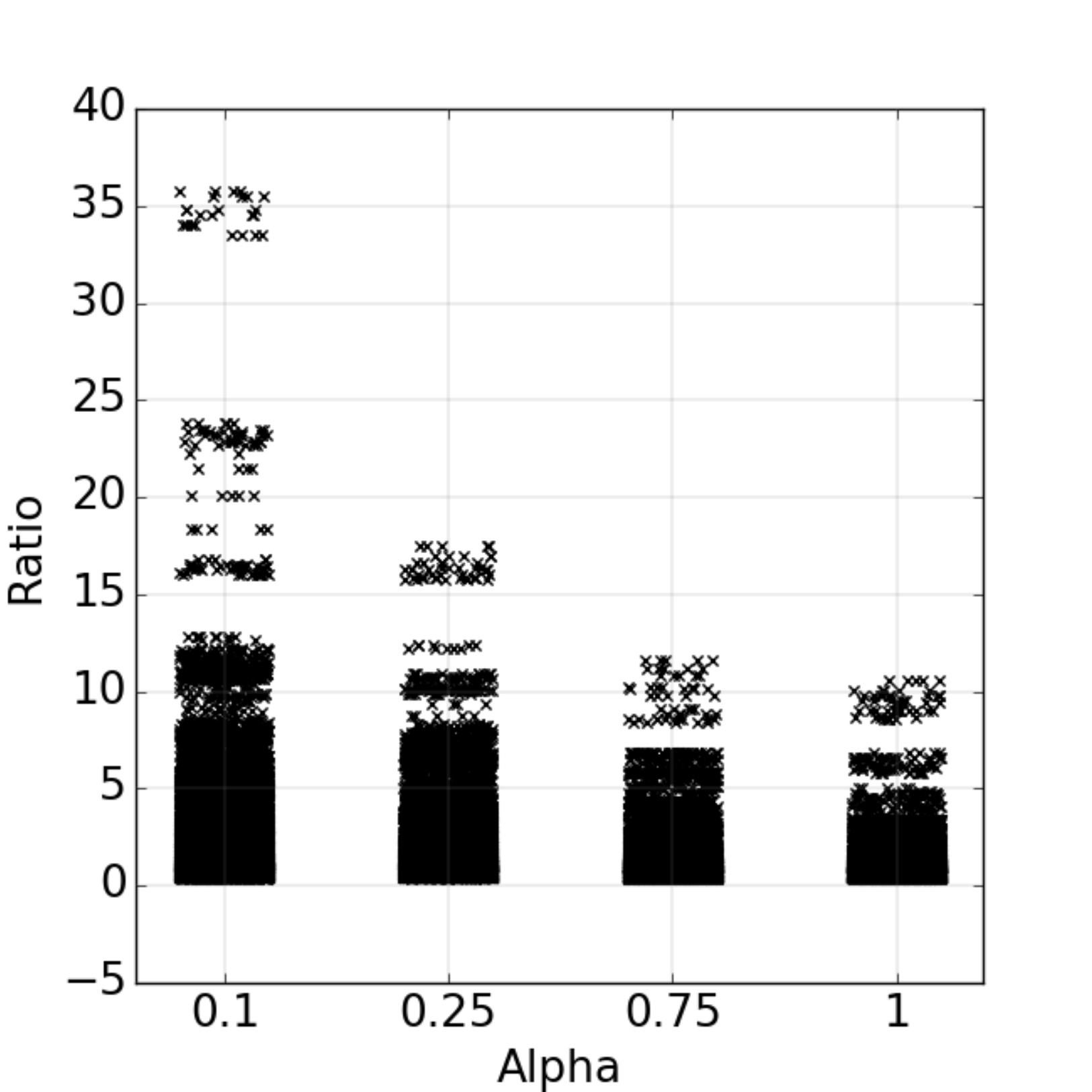}
			\label{fig:rgg-high-cpl}
		}
		\hspace*{\fill}%
		\caption{Comparing the lengths of the critical paths across RGG-classic and RGG-high workloads in terms of $\alpha$ of the application graph}
		\label{fig:alpha-cpl}
	\end{minipage}
	\hfill
	\begin{minipage}{.24\textwidth}
		\includegraphics[width=.97\linewidth]{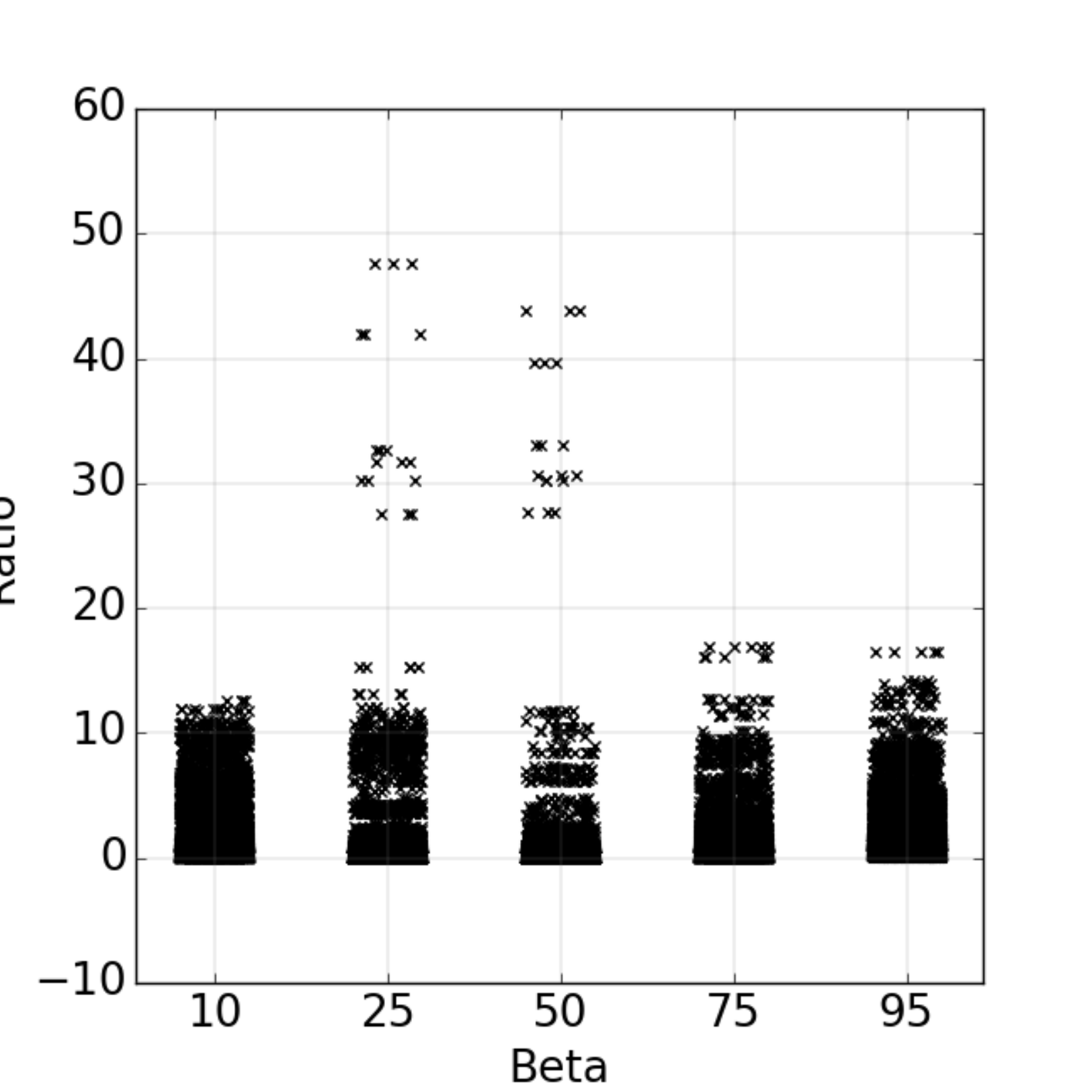}
		\caption{Comparing CPL for RGG-medium in terms of different values of $\beta$ in the input graphs}
		\label{fig:cpl-not-changed-for-beta}
	\end{minipage}
	\hfill
	\begin{minipage}{.24\textwidth}
		\centering
		\includegraphics[width=.97\linewidth]{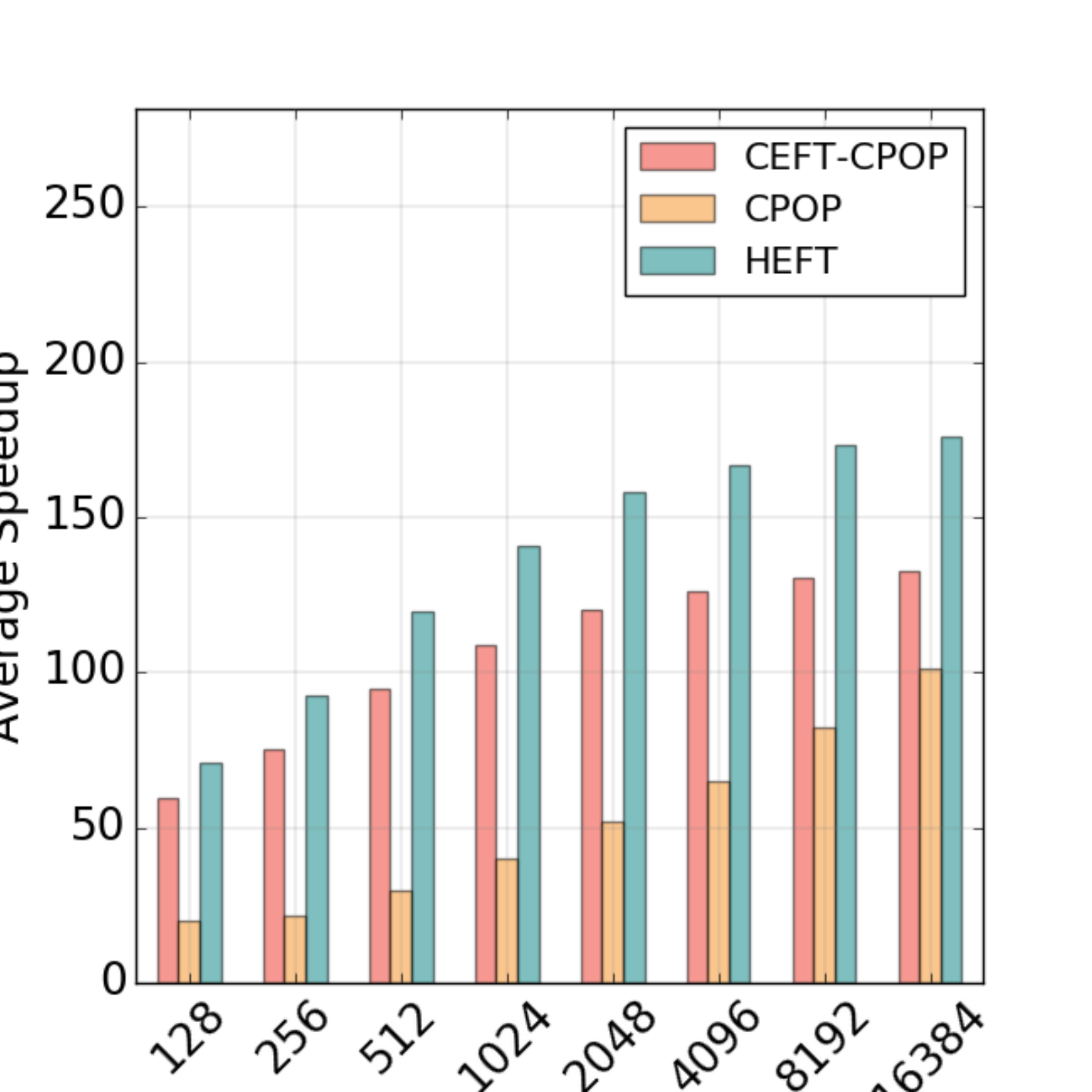}
		\caption{Comparing speedup for RGG-high in terms of number of tasks in the input graphs. Higher is better.}
		\label{fig:task-speedup}
	\end{minipage}%
\end{figure*}

\fi

In this section we compare the performance of our critical path finding algorithm (CEFT) against the current state of the art critical path algorithm (CPOP). We also present a brief comparison of the extension of our critical path algorithm (CEFT-CPOP) to function as a scheduling algorithm and compare its results against CPOP. Since the only difference between CEFT-CPOP and CPOP is the method by which the critical paths are found and mapped, makespan related metrics between these two algorithms help us clearly understand the effects of finding the right critical path.

Table~\ref{tab:ceft-cpop} compares CEFT and CPOP in terms of the critical path lengths produced and corresponding makespans. Figures~\ref{fig:cpl-comparison} and \ref{fig:makespan-comparison} put table~\ref{tab:ceft-cpop} into graphical context. We can observe from these graphs that CEFT produces either longer or same length critical paths as CPOP in the classic workload. However, when heterogeneity is better expressed, we produce shorter makespans in about 83\% of the cases. This is similarly reflected in the corresponding makespans produced by CEFT. Note however, that the table only provides the percentage of the number of instances in which path lengths and corresponding makespans are \textit{longer}, \textit{equal} or \textit{shorter} and discloses nothing about the relative quality of the solutions obtained. 

Figures~\ref{fig:rgg-clas-res-cpl} and \ref{fig:rgg-high-cpl} on the other hand help understand the relative quality of the solutions obtained by the two algorithms. Both the plots shown here are scatter plots. As the density of the points in the scatter plot is so high, we chose to offset the points that are on the line corresponding to a particular $\alpha$, by a small random amount (in the x-axis; within a preset range) to form a ``bar'' that better displays how the ratios are distributed. All the points inside the bar correspond to the value of $\alpha$ that the bar sits on top of. As the graphs become wider (with increasing values of $\alpha$), the critical path lengths found by CEFT become shorter. This stems from the fact that, while no other application graph parameter changes, the increase in the width of the graph gives rise to more shorter paths from the source task to the exit task. Since the objective of CEFT is to find the longest shortest path from all the possible paths, the critical path lengths produced by it decrease as well. This holds true in the case of the high heterogeneity workloads as well (RGG-high). \footnote{\ifdefined\longversion At this juncture, we have to mention that this way of representing the critical path length ratio is a bit misleading. The density of the points in the lower portions of the graph is not clearly visible and hence it seems like CEFT always produces longer critical paths than CPOP. While this is true in the case of RGG-classic, the CPL produced by our algorithm (CEFT) is shorter in 83.99\% of the experiments in RGG-high. Another note on the graph, is that it appears as if t\else T\fi here are some values that look like they are below the zero line (which is impossible since the critical path length ratio between any two algorithms can never be negative). This is because the plot uses an 'x' marker to plot the points.}

\ifdefined\longversion
\begin{figure*}[ht!]
	\centering
	\begin{minipage}{.46\textwidth}
		\centering
		\hspace*{\fill}%
		\csubfloat[RGG-classic]{
			\includegraphics[width=0.49\linewidth]{figures/results/RGG-classic/Alpha-cpl.pdf}
			\label{fig:rgg-clas-res-cpl}
		}
		\centerhfill
		\csubfloat[RGG-high]{
			\includegraphics[width=0.49\linewidth]{figures/results/RGG-low/Alpha-cpl.pdf}
			\label{fig:rgg-high-cpl}
		}
		\hspace*{\fill}%
		\caption{Comparing the lengths of the critical paths across RGG-classic and RGG-high workloads in terms of $\alpha$ of the application graph}
		\label{fig:alpha-cpl}
	\end{minipage}
	\hfill
	\begin{minipage}{.22\textwidth}
		\includegraphics[width=.97\linewidth]{figures/results/RGG-medium/Beta-cpl.pdf}
		\caption{Comparing CPL for RGG-medium in terms of different values of $\beta$ in the input graphs}
		\label{fig:cpl-not-changed-for-beta}
	\end{minipage}
	\hfill
	\begin{minipage}{.22\textwidth}
		\centering
		\includegraphics[width=.97\linewidth]{figures/results/RGG-high/Number_Of_Tasks-speedup.pdf}
		\caption{Comparing speedup for RGG-high in terms of number of tasks in the input graphs. Higher is better.}
		\label{fig:task-speedup}
	\end{minipage}%
\end{figure*}
\fi

\ifdefined\longversion
\begin{figure*}[ht]
	\centering
	\hspace*{\fill}%
	\csubfloat[RGG-classic]{
		\includegraphics[width=0.22\linewidth]{figures/results/RGG-classic/Number_Of_Resources-speedup.pdf}
		\label{fig:rgg-clas-res-speedup}
	}
	\centerhfill
	\csubfloat[RGG-low]{
		\includegraphics[width=0.22\linewidth]{figures/results/RGG-low/Number_Of_Resources-speedup.pdf}
		\label{fig:rgg-low-res-speedup}
	}
	\centerhfill
	\csubfloat[RGG-medium]{
		\includegraphics[width=0.22\linewidth]{figures/results/RGG-medium/Number_Of_Resources-speedup.pdf}
		\label{fig:rgg-med-res-speedup}
	}
	\centerhfill
	\csubfloat[RGG-high]{
		\includegraphics[width=0.22\linewidth]{figures/results/RGG-high/Number_Of_Resources-speedup.pdf}
		\label{fig:rgg-high-res-speedup}
	}
	\hspace*{\fill}%
	\caption{Comparing speedup across different workloads in terms of the number of processors in the processor graph. Higher is better.}
	\label{fig:res-speedup}
\end{figure*}

\fi

It is evident from table~\ref{tab:ceft-cpop} that as heterogeneity in the workload becomes more apparent, CEFT outperforms CPOP in terms of both the critical path length and the makespan. In RGG-classic CEFT never produces a critical path that was shorter than the critical paths produced by CPOP which resulted in makespans that were longer in 26.95\% of the experiments. Another interesting thing to note here is that even in the case where heterogeneity is well expressed (RGG-high) our algorithm does not perform as well as CPOP only in 7.66\% of the experiments. 

\ifdefined\longversion
\else
\begin{figure*}[h]
	\centering
	\hspace*{\fill}%
	\csubfloat[RGG-classic]{
		\includegraphics[width=0.22\linewidth]{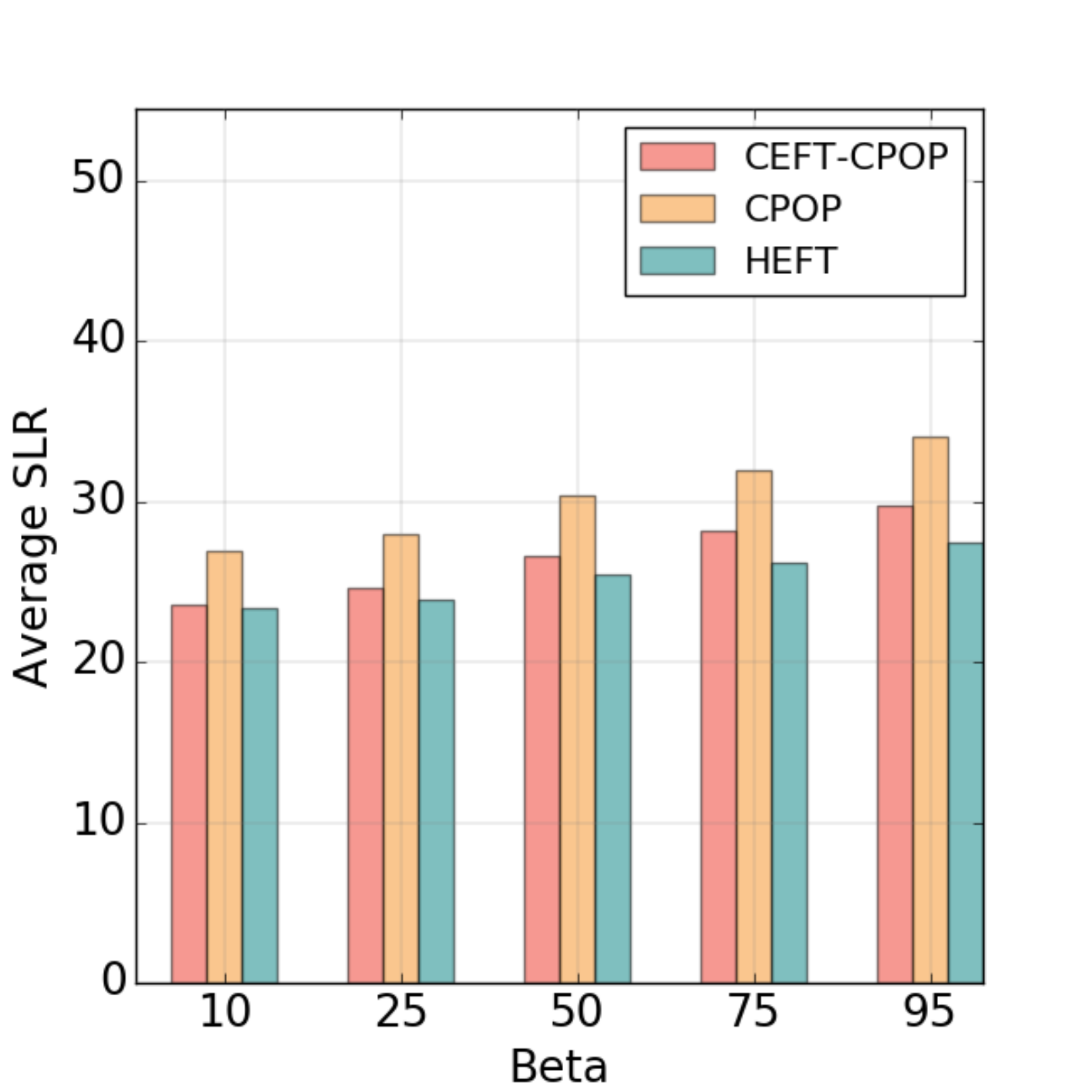}
		\label{fig:rgg-clas-beta-slr}
	}\centerhfill
	\csubfloat[RGG-low]{
		\includegraphics[width=0.22\linewidth]{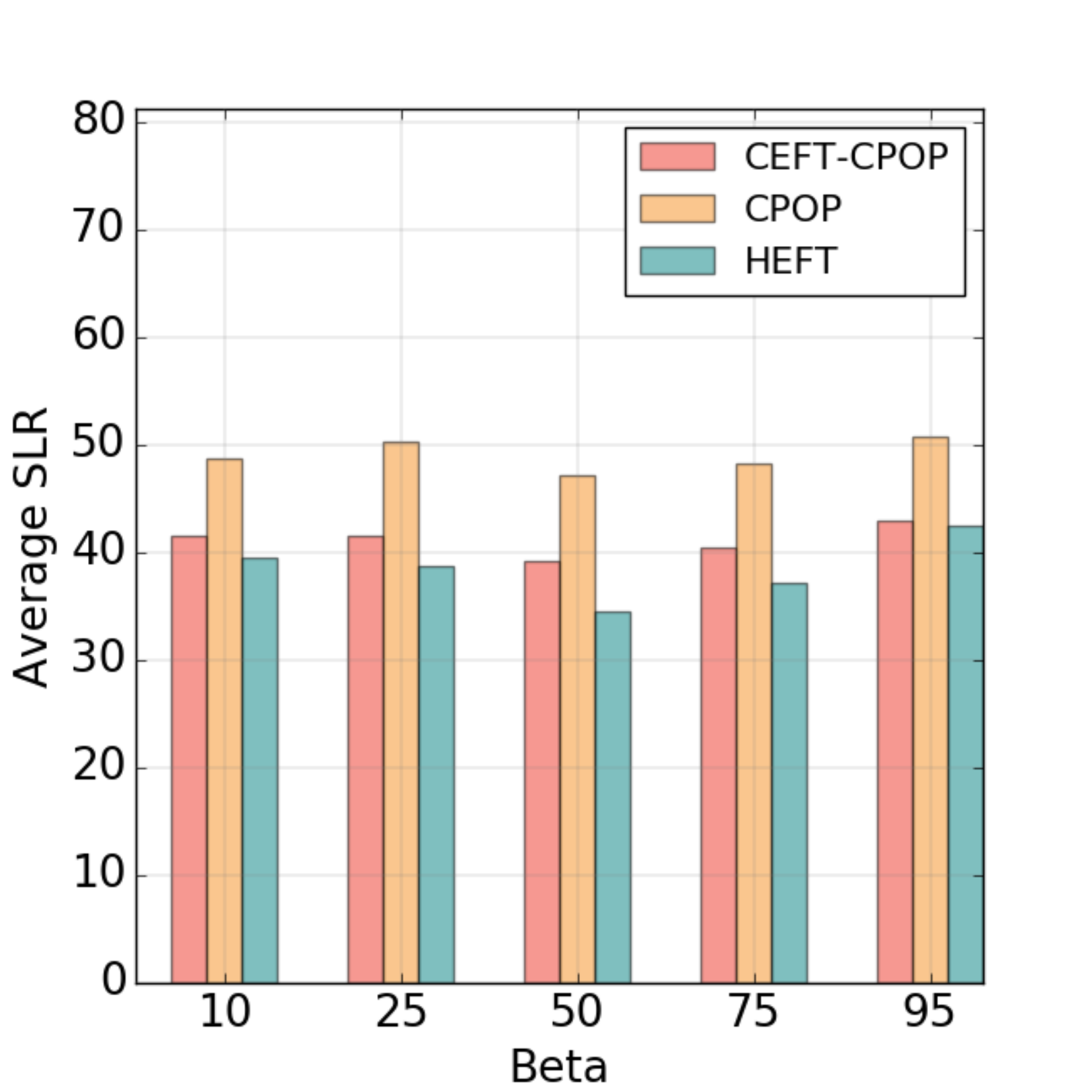}
		\label{fig:rgg-low-beta-slr}
	}\centerhfill
	\csubfloat[RGG-medium]{
		\includegraphics[width=0.22\linewidth]{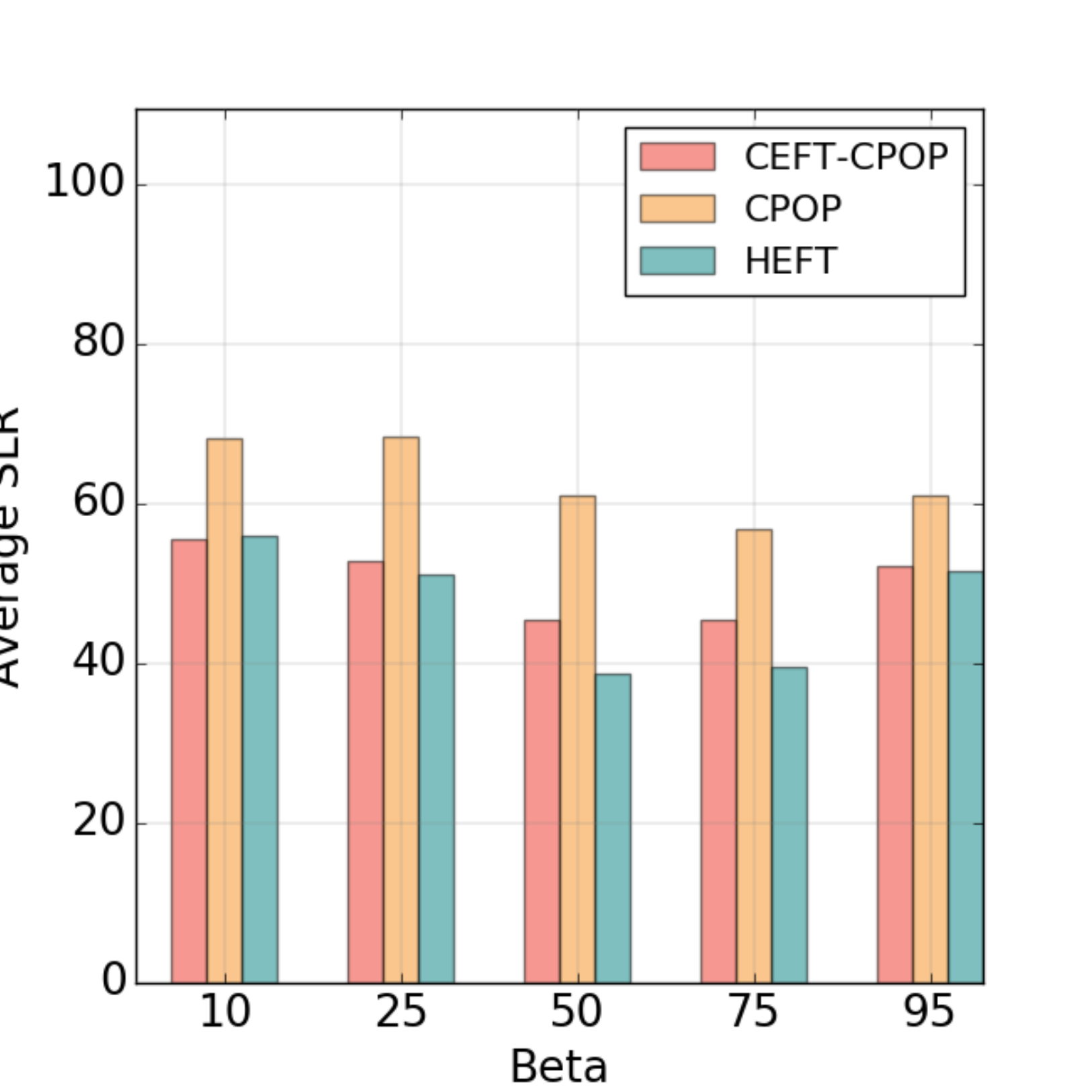}
		\label{fig:rgg-med-beta-slr}
	}\centerhfill
	\csubfloat[RGG-high]{
		\includegraphics[width=0.22\linewidth]{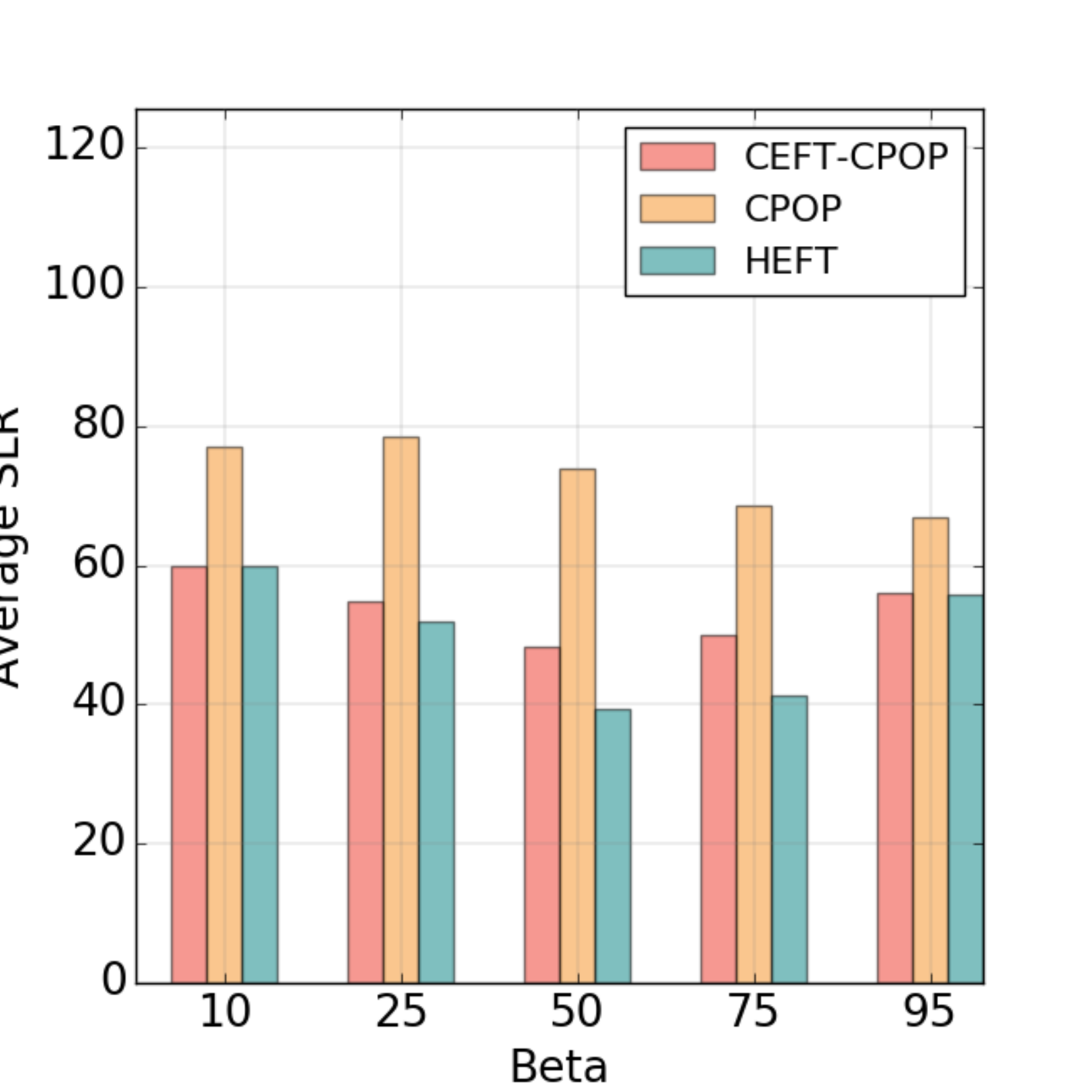}
		\label{fig:rgg-high-beta-slr}
	}\hspace*{\fill}
	\caption{Comparing SLR across different workloads in terms of $\beta$ of the input graphs. Lower is better.}
	\label{fig:Beta-slr}
\end{figure*}
\fi

In the most heterogeneous workload RGG-high, CEFT produces shorter critical path lengths in 83.99\% of the experiments which lead to shorter makespans in 89.69\% of the experiments. There seems to be a strong correlation between shorter critical path lengths and shorter makespans. However, one cannot conclude that shorter critical path lengths result in shorter makespans as it is important to identify the \textit{correct} shorter critical path which would lead to shorter makespans. From the results in this table, our algorithm does well in terms of selecting the correct critical paths.

\ifdefined\longversion

\begin{figure*}[ht]
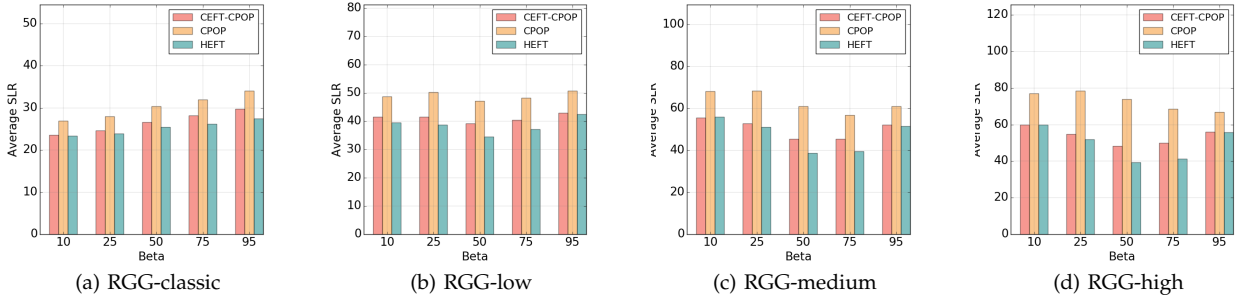

	\centering
	\hspace*{\fill}%
	\csubfloat[RGG-classic]{
		\includegraphics[width=0.22\linewidth]{figures/results/RGG-classic/Beta-slr.pdf}
		\label{fig:rgg-clas-beta-slr}
	}\centerhfill
	\csubfloat[RGG-low]{
		\includegraphics[width=0.22\linewidth]{figures/results/RGG-low/Beta-slr.pdf}
		\label{fig:rgg-low-beta-slr}
	}\centerhfill
	\csubfloat[RGG-medium]{
		\includegraphics[width=0.22\linewidth]{figures/results/RGG-medium/Beta-slr.pdf}
		\label{fig:rgg-med-beta-slr}
	}\centerhfill
	\csubfloat[RGG-high]{
		\includegraphics[width=0.22\linewidth]{figures/results/RGG-high/Beta-slr.pdf}
		\label{fig:rgg-high-beta-slr}
	}\hspace*{\fill}
	\caption{Comparing SLR across different workloads in terms of $\beta$ of the input graphs. Lower is better.}
	\label{fig:Beta-slr}
\end{figure*}

\begin{figure*}[t]
	\centering
	\hspace*{\fill}%
	\csubfloat[RGG-classic]{
		\includegraphics[width=0.22\linewidth]{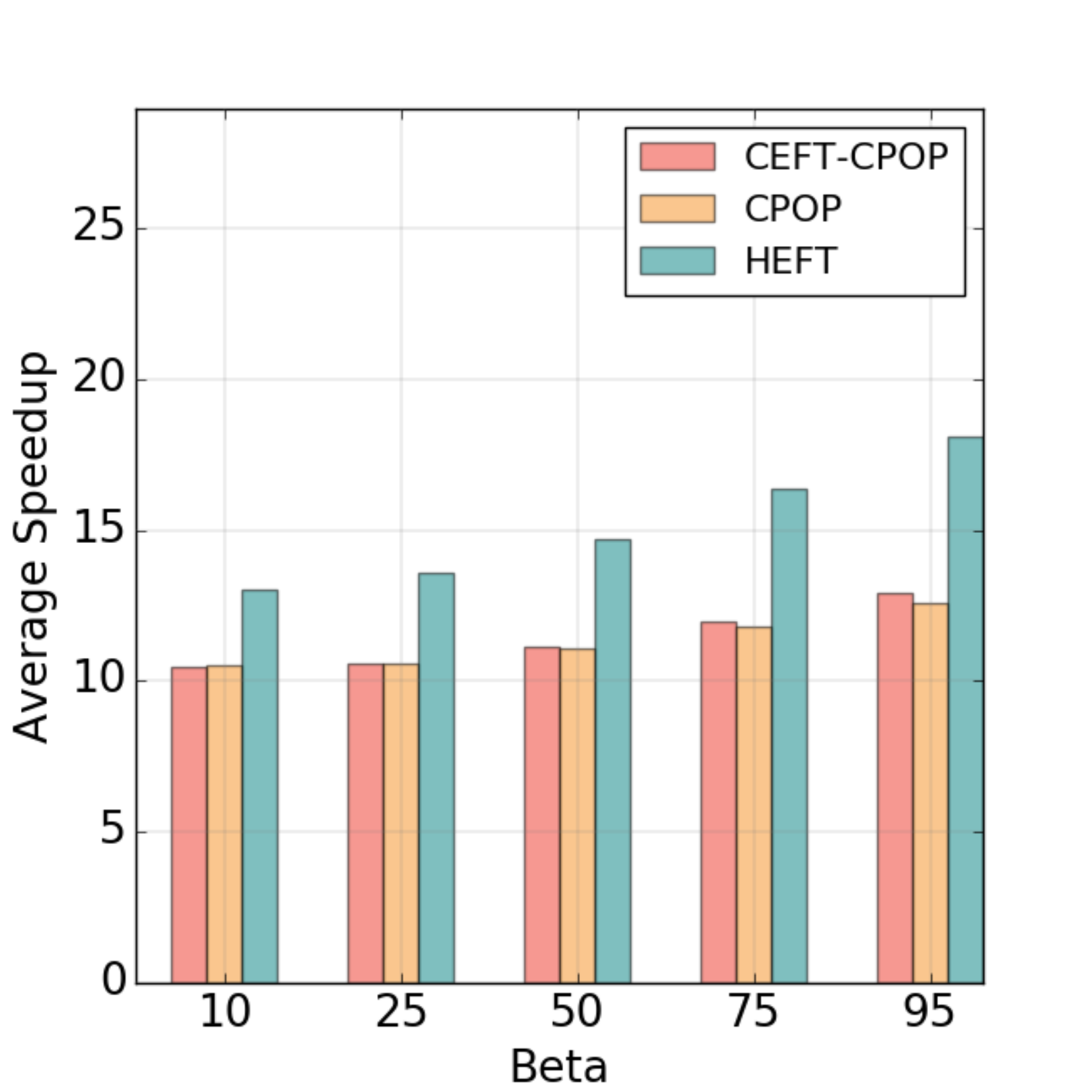}
		\label{fig:rgg-clas-beta-speedup}
	}
	\centerhfill
	\csubfloat[RGG-low]{
		\includegraphics[width=0.22\linewidth]{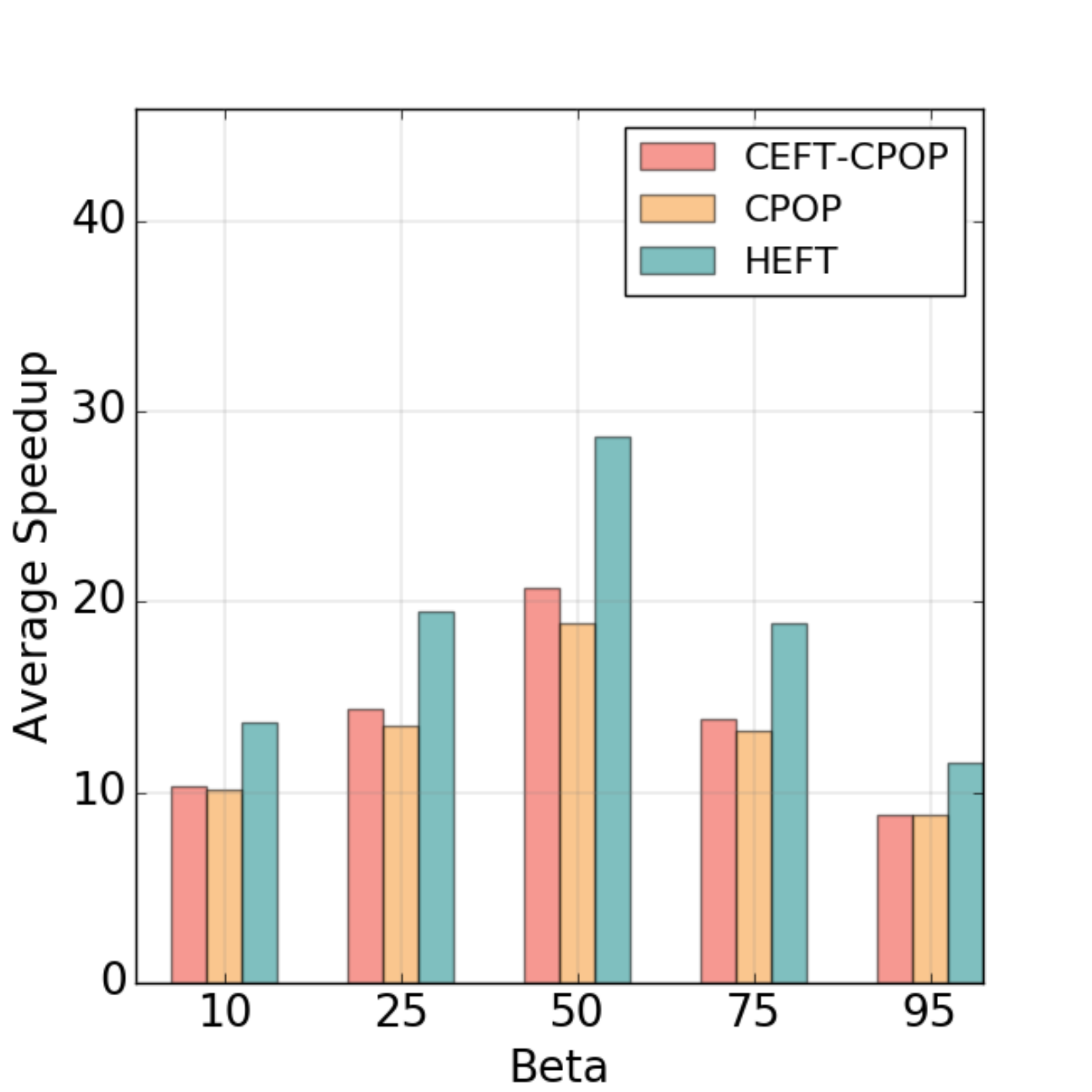}
		\label{fig:rgg-low-beta-speedup}
	}
	\centerhfill
	\csubfloat[RGG-medium]{
		\includegraphics[width=0.22\linewidth]{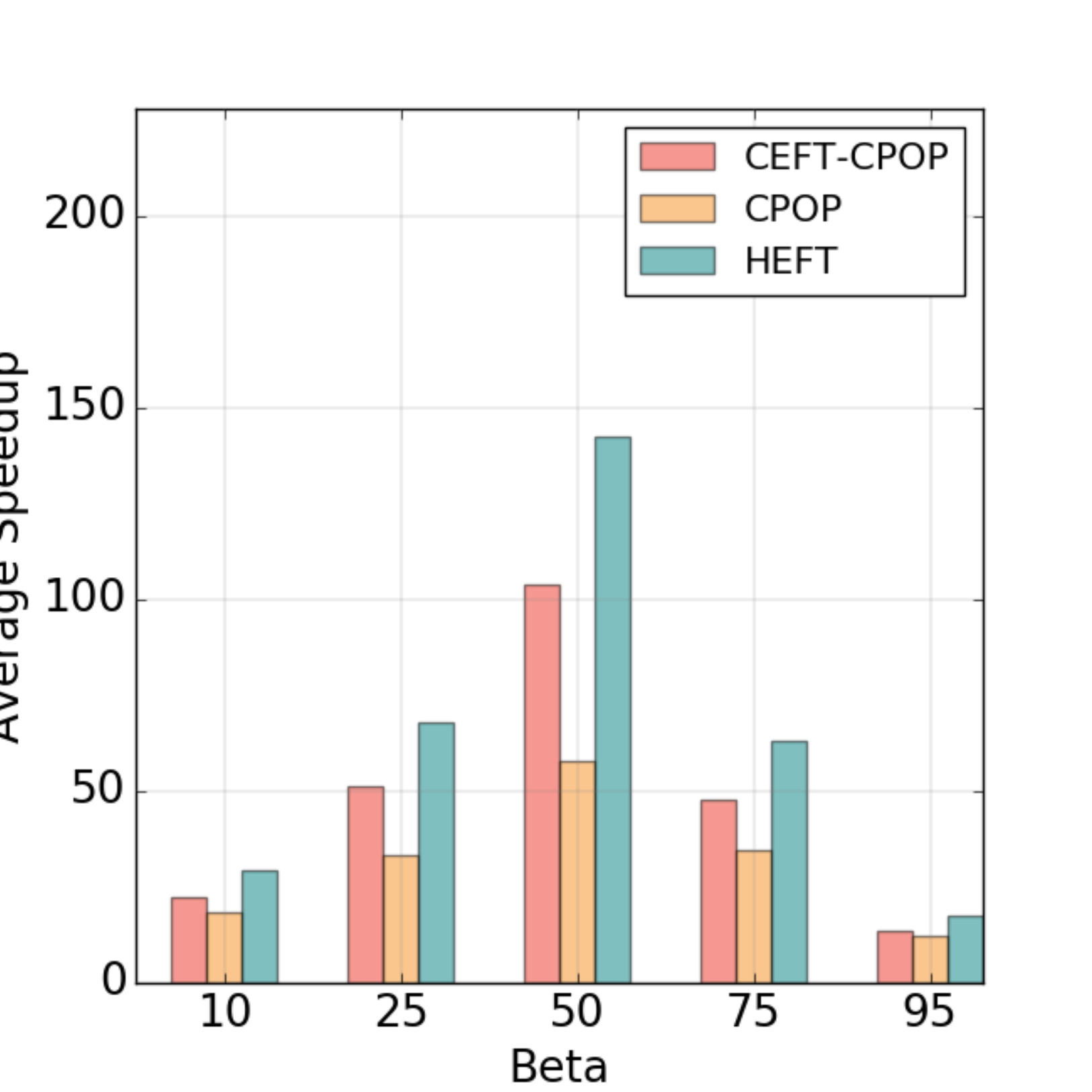}
		\label{fig:rgg-med-beta-speedup}
	}
	\centerhfill
	\csubfloat[RGG-high]{
		\includegraphics[width=0.22\linewidth]{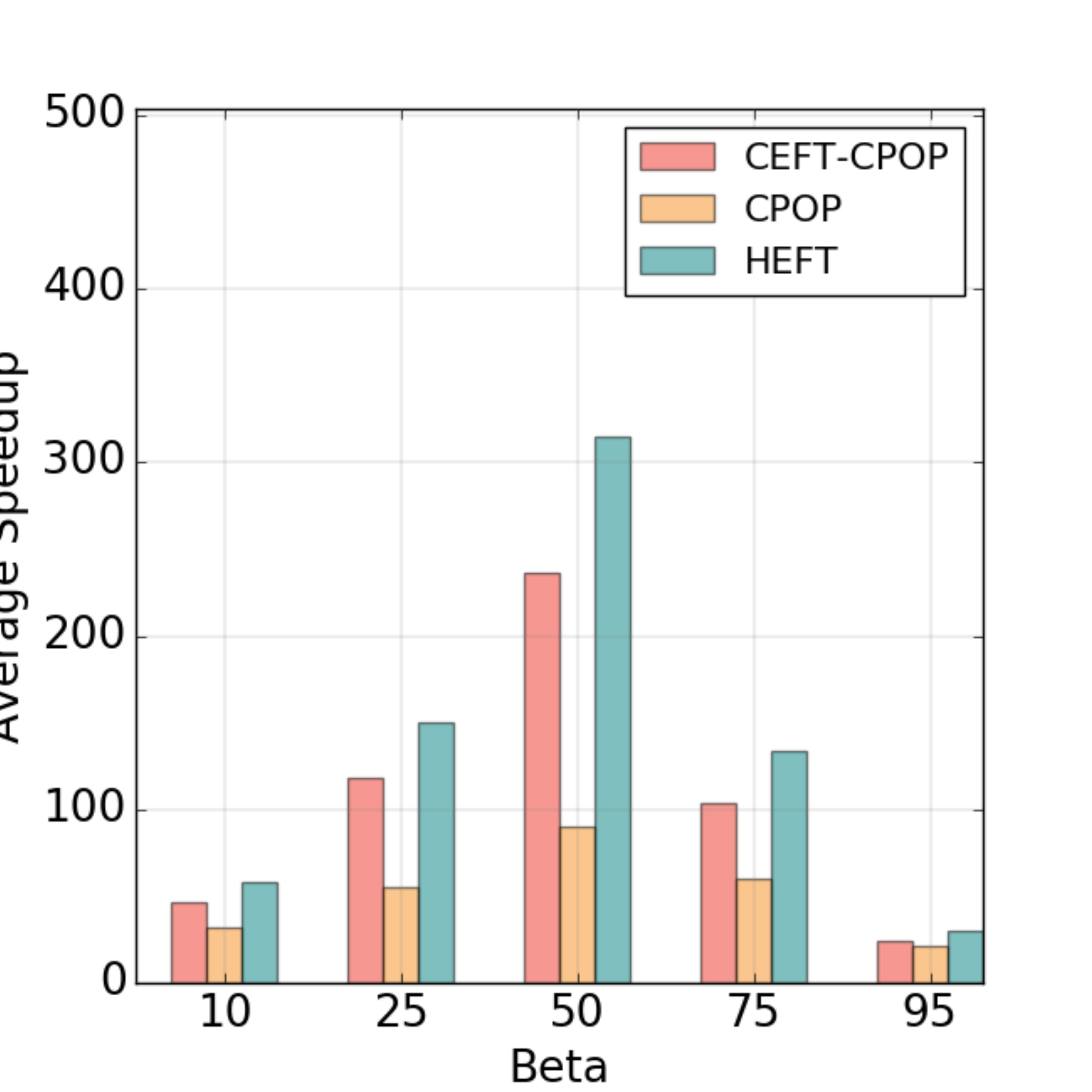}
		\label{fig:rgg-high-beta-speedup}
	}
	\hspace*{\fill}%
	\caption{Comparing speedup across different workloads in terms of $\beta$ of the input graphs. Higher is better.}
	\label{fig:beta-speedup}
\end{figure*}
\fi

Figure~\ref{fig:res-speedup} shows how the speedup metric \ifdefined\longversion from section~\ref{sec:paper-crit-path-effi} \fi varies when the number of processor is varied. As the number of processors is increased, the average speedup achieved is naturally higher which is clearly reflected in the graphs. In the standard workload, RGG-classic all the algorithms perform nearly equally. When the heterogeneity is increased, it is evident from figures~\ref{fig:rgg-med-res-speedup} and \ref{fig:rgg-high-res-speedup} that the average speedups achieved by CPOP become progressively lesser with increase in number of processors. This is mainly because of the fact that CPOP assigns all the tasks from the critical path onto a single processor. 

The choice of assigning the entire critical path on a single processor might prove to be an excellent choice in scenarios where the communication-to-computation ratio (CCR) is very high, thereby making communication costs high. But in the majority of the cases, where the computation costs dominate the communication costs, this proves to be a wrong decision and the makespans suffer in kind. This trend of CPOP not being able to catch-up with CEFT-CPOP is evident in graphs based on metrics like number of tasks, $\alpha$, $\beta$ etc. Figure~\ref{fig:task-speedup}, highlights another interesting result. Upon careful inspection, the average speedup of CEFT-CPOP is the highest among all three comparison algorithms, until the number of tasks cross 1024 (incidentally, 512 is the highest number of tasks in synthetic workloads that have been used for testing the efficiency of scheduling algorithms previously~\cite{arabnejad2014list}).


In figure~\ref{fig:Beta-slr}, we compare the schedule length ratio metric across the different workloads through this context. It is evident from the graphs that RGG-classic and RGG-low exhibit similar SLR patterns (RGG-low has a slightly lower SLR value on average). It is interesting to note however, that in RGG-medium and RGG-high, our algorithm produces the lowest average SLR value when $\beta\sim=50$. Setting $\beta$ to values close to 50 during the input graph generation, generates a good mix of tasks that require the different types of processors in the processor graph\footnote{This is better understood by referring back to our discussion of the two intervals $\mathcal{I}_1$ and $\mathcal{I}_2$ from section~\ref{sec:rand-gene-grap}. When $\beta\approx50$, there are approximately an equal number of tasks that use $\mathcal{I}_1$ for its first node-weight $\mathcal{I}_2$ for its second, as the number of tasks that use it vice-versa. This results in the most varied execution times across tasks. When $\beta$ goes farther away from the mean value of 50, it results in graphs that have more tasks that confirm to a specific ordering of the two intervals, thereby making the tasks more similar which leads to a less varied execution time table.}.

When there is such a good mix of the types of tasks, it is easier to schedule it onto  different types of processors as contention for the same kind of processor would be low. However, when $\beta$ goes away from 50 either side, it leads to increased demand for a certain type of processor, hence increasing the contention among tasks. This leads to increased makespan values when $\beta$ is farther away from 50. However, the critical path lengths calculated by our algorithm remains unaffected as shown figure~\ref{fig:cpl-not-changed-for-beta}, as one does not need to account for processor availability while calculating the critical path (since, this is equivalent to scheduling a linear DAG where all processors are available whenever a task is ready to be scheduled.

\ifdefined\longversion
This notion of the graph generator producing graphs that could potentially lead to lower makespans is further accentuated by figure~\ref{fig:beta-speedup}. In RGG-classic, since heterogeneity is incorporated differently (recall our discussion about the conventional way of implementing a random graph generator, like the one presented by Topcuoglu et al. in  \cite{topcuoglu2002performance}) the speedup values across the different algorithms are very similar and we do not observe the U-shaped curve from figure~\ref{fig:rgg-high-beta-slr}. However, in the heavier workloads, we can clearly see the curve forming again. Once again, CPOP's method of calculating the critical path lets it down. As CPOP assigns all the tasks from the critical path (which might be composed of any number of different types of tasks, i.e. tasks requiring different amounts of the different types of processors) onto the same processor, the makespan suffers which leads to reduced speedup.
\fi

Another important parameter in the graph generation process is $\alpha$ which dictates the \textit{width} of the graph. Lower values of $\alpha$ produce tall skinny graphs, while larger values produce larger wide graphs. It is evident from figure~\ref{fig:alph-slr}, that the average SLR produced by the schedules found by our algorithm are lower than both CPOP and HEFT for all the different values of $\alpha$.

\begin{figure*}[t]
	\centering
	\hspace*{\fill}%
	\csubfloat[Alpha - SLR]{
		\includegraphics[width=0.25\linewidth]{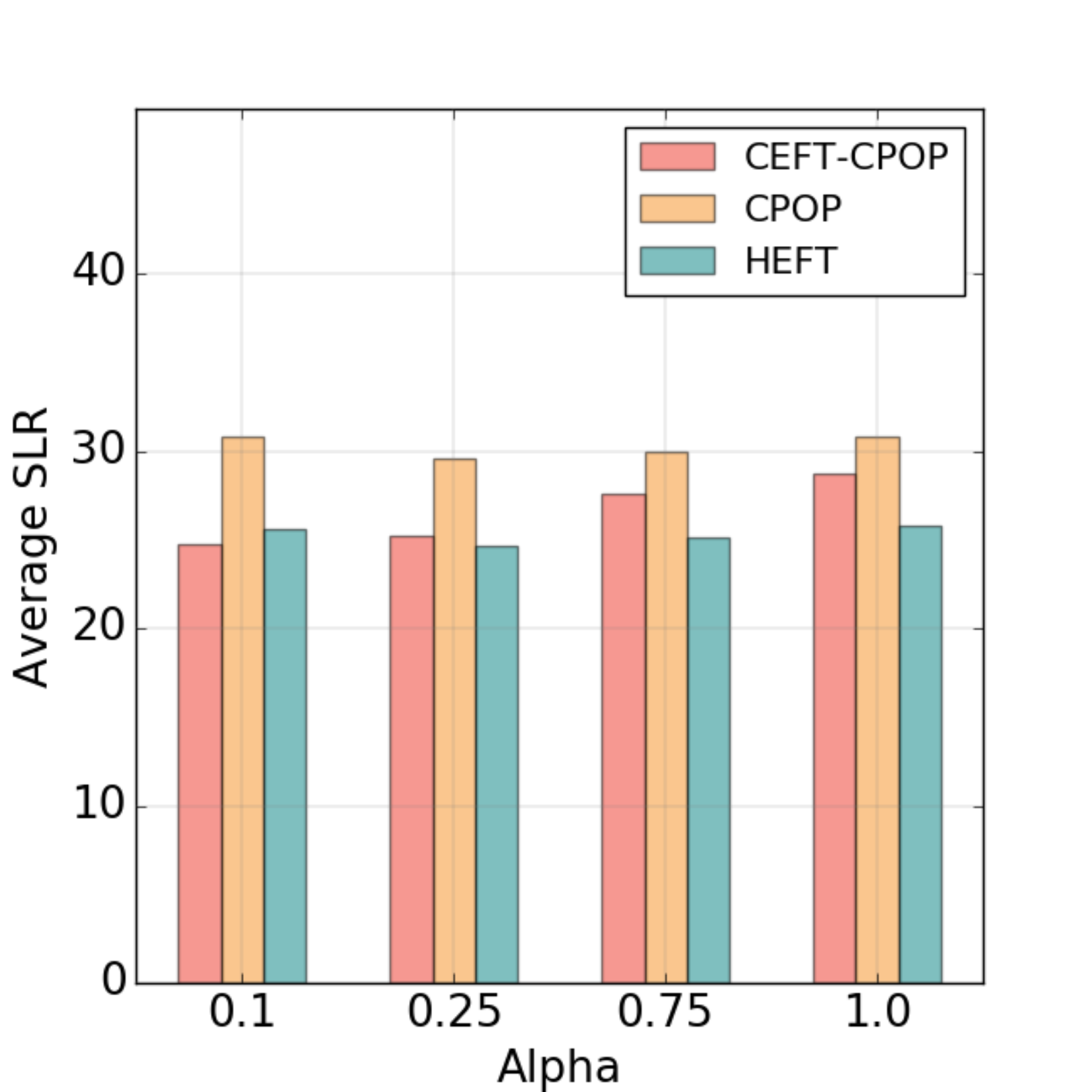}
		\label{fig:alph-slr}
	}%
	\centerhfill
	\csubfloat[CCR - SLR]{
		\includegraphics[width=0.25\linewidth]{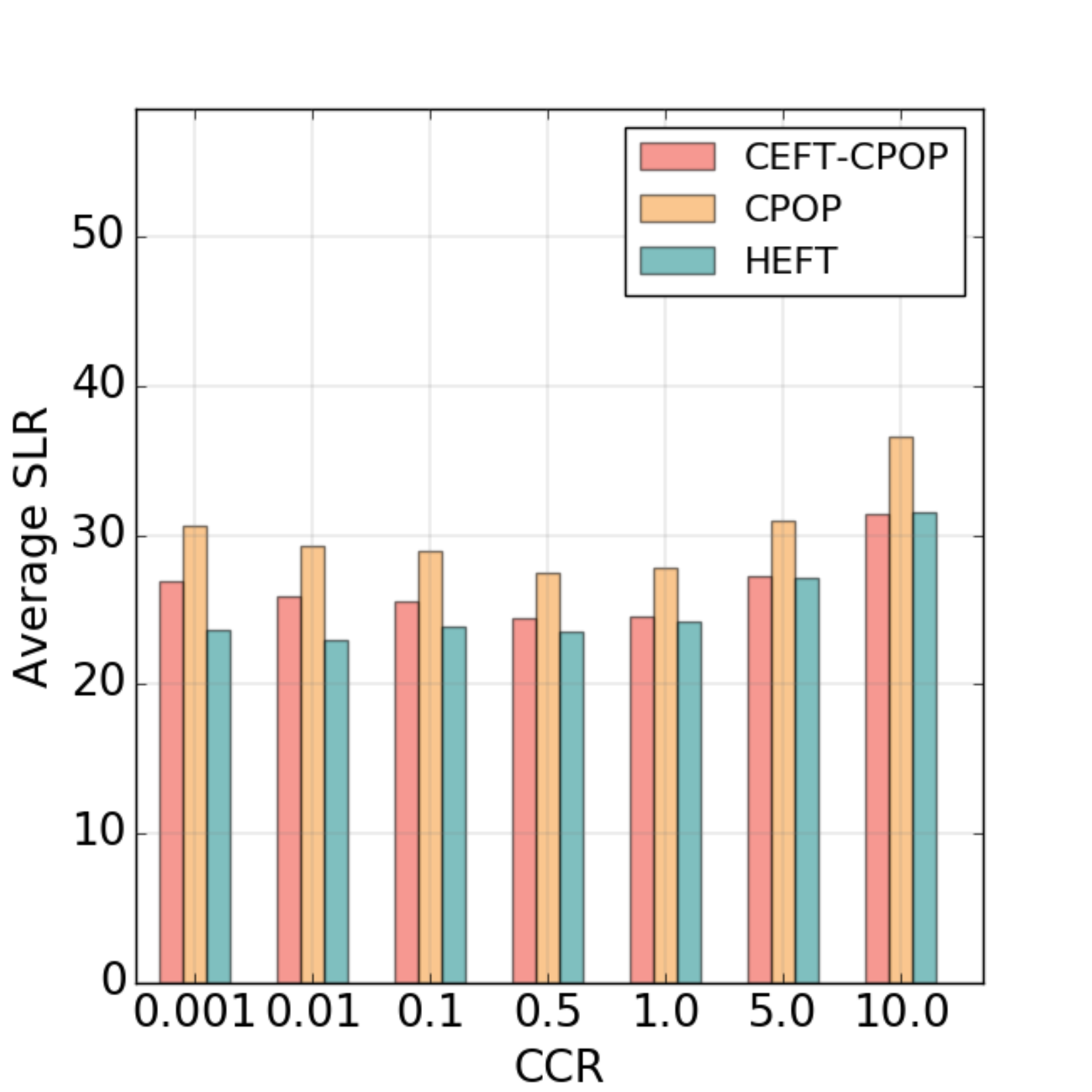}
		\label{fig:ccr-slr}
	}%
	\centerhfill
	\csubfloat[CCR - Slack]{
		\includegraphics[width=0.25\linewidth]{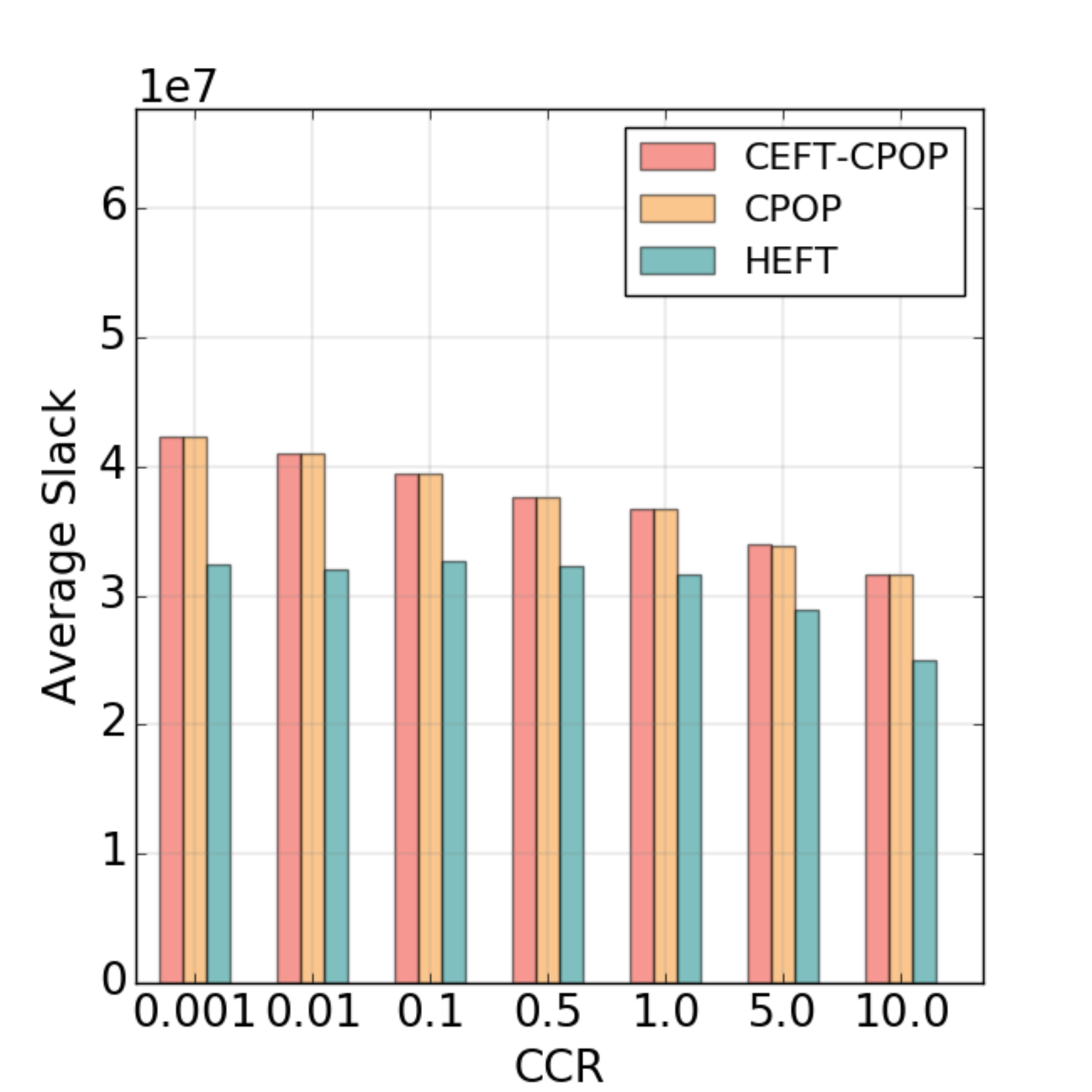}
		\label{fig:ccr-slack}
	}%
	\hspace*{\fill}%
	\caption{Comparing Slack and SLR for RGG-classic in terms of $\alpha$ and CCR of the input graphs. Lower is better for SLR.}
	\label{fig:alph-ccr-slr}
\end{figure*}

For smaller values of $\alpha$, our algorithm's SLR is lower than CPOP's SLR by $\sim$19\% and lower than HEFT's SLR by $\sim$6\%, which we denote by the tuple $[19, 6]$. This gap however reduces as the graph becomes wider, with the gap between our algorithm and HEFT vanishing at high values of $\alpha$. We do not highlight the results obtained from the other workloads as they exhibit similar patterns to RGG-classic.


As the workload gets heavier (RGG-classic to RGG-high) the gap between average SLR produced by our algorithm (CEFT-CPOP) and CPOP is increased. In the case of RGG-high, for smaller values of $\alpha$, CEFT-CPOP's SLR is lower than CPOP's SLR by $\sim$34\%. 

In terms of robustness, the value of slack increases for all three algorithms for increasing values of $\alpha$. The schedules produced for thinner graphs have a lower tolerance to accommodate delays in execution of certain tasks. In the trivial case of the thinnest graph (which is a linear DAG), any schedule produced by a static scheduling algorithm will have zero slack as there is no possibility to overlap computation and communication (due to the serial nature of the graph). As the graph gets wider, there is more scope for overlapping computation with communication which in turns helps in schedules being more accommodative to delays in task execution; this in turn increases the slack as the graphs get wider. 


\begin{figure}[b]
	\centering
	\hspace*{\fill}%
	\csubfloat[Number of tasks]{
		\includegraphics[width=0.49\linewidth]{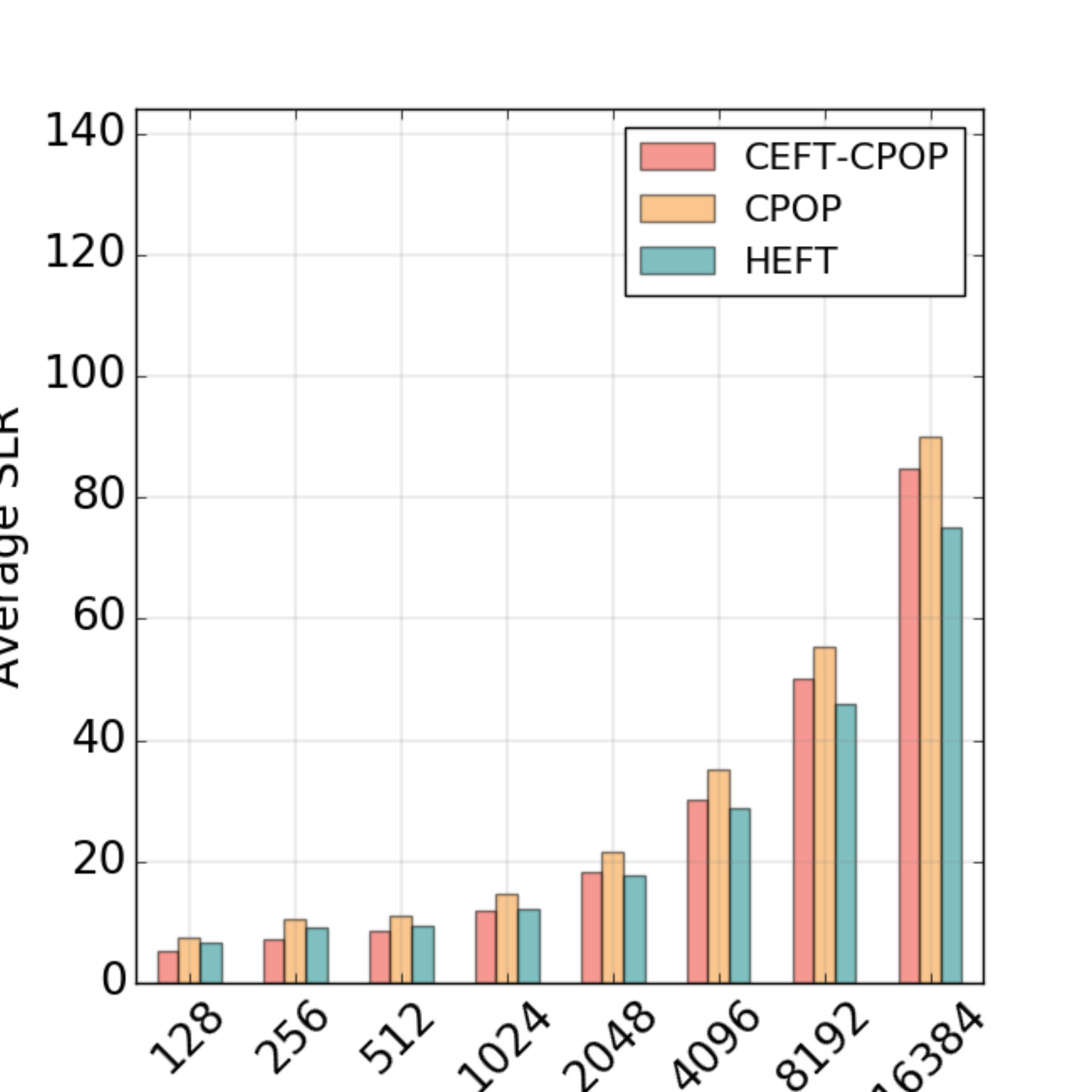}
		\label{fig:rgg-clas-task-slr}
	}
	\centerhfill
	\csubfloat[Number of resources]{
		\includegraphics[width=0.49\linewidth]{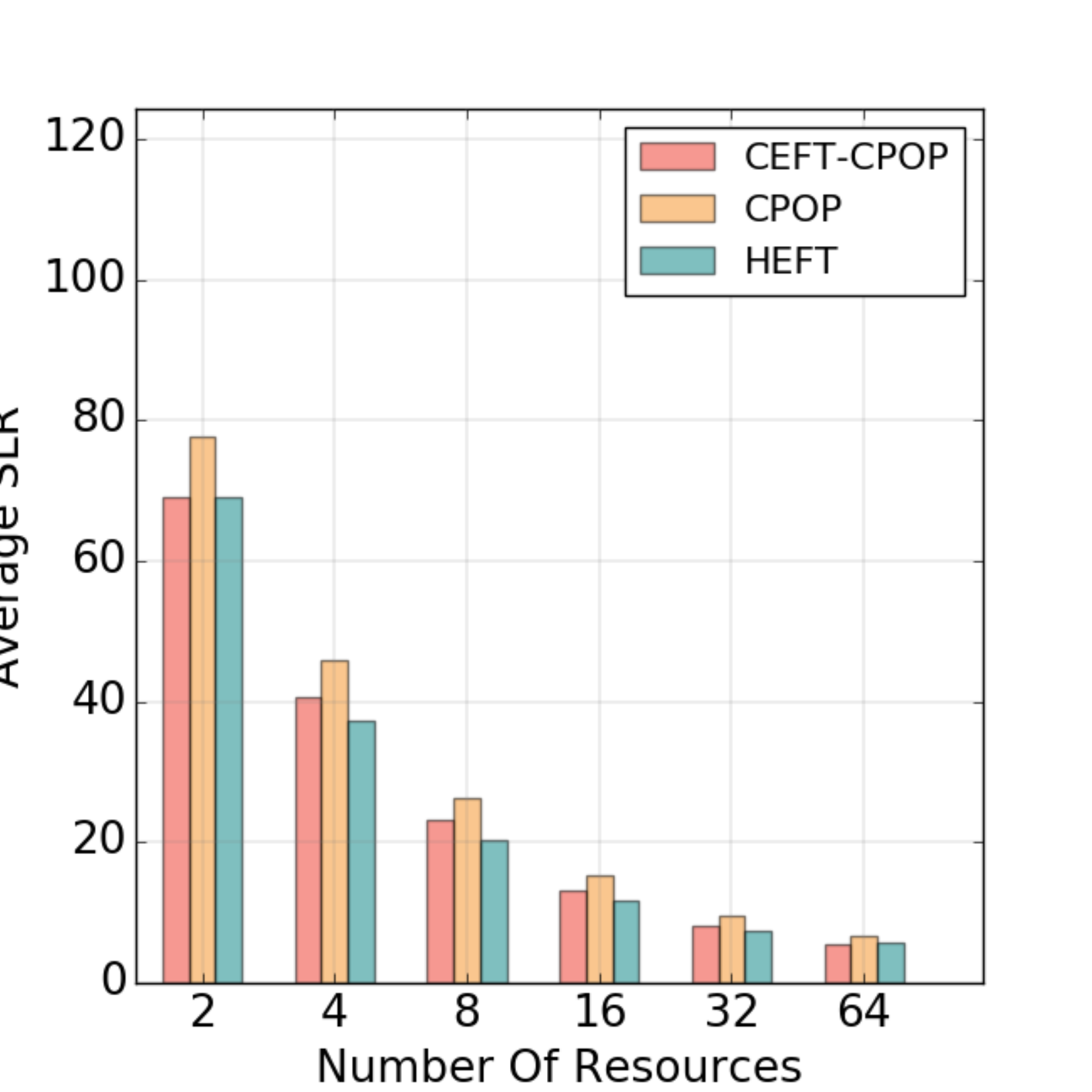}
		\label{fig:rgg-clas-res-slr}
	}
	\hspace*{\fill}%
	\caption{Comparing SLR across different workloads in terms of number of tasks and resources. Lower is better.}
	\label{fig:task-res-slr}
\end{figure}

As the value of the communication-to-computation ratio (CCR) increases, interprocessor communication overhead dominates computation and hence, the performance of all three scheduling algorithms tends to degrade. This is shown by the use of the schedule length ration (SLR) metric in figure~\ref{fig:ccr-slr}. Our algorithm produces SLRs which are lower than CPOP's SLR by $\sim13\%$, for lower CCR values with HEFT producing better average SLRs. CEFT-CPOP continues to outperform CPOP for all the CCR values and produces similar average SLR to HEFT, for extremely large CCR values. 

Slack on the other hand, which is a measure of robustness\ifdefined\longversion (section~\ref{sec:paper-crit-path-slac-metr})\fi, decreases for all the three scheduling algorithms with increasing CCR values. This trend of the schedules becoming less tolerant to delays for increasing values of CCR is highlighted in figure~\ref{fig:ccr-slack}. Our algorithm provides the highest slack from the three algorithms compared here, while HEFT provides the lowest\ifdefined\longversion\footnote{Heterogeneous Earliest Finish Time (HEFT) is a \ul{greedy} list scheduling heuristic as we have discussed before. It provides the lowest slack compared to all the algorithms thereby making it less robust, but more efficient in terms of schedule length. However, our algorithm provides a slightly higher slack than HEFT, while still providing a much lower SLR value compared to HEFT. This leads us to believe that there is more room for optimization (as higher slack usually translates to larger windows of time in which a tasks start time can be moved) with the schedules generated by CEFT-CPOP which can be utilised to further decrease its makespan.}\fi. The slack produced by CPOP and CEFT-CPOP are similar (our algorithm produces slacks that are $\sim 1\%$ -- $\sim 2\%$ larger).


\begin{figure*}[ht]
	\centering
	\hspace*{\fill}%
	\csubfloat[FFT-medium]{
		\includegraphics[width=0.22\linewidth]{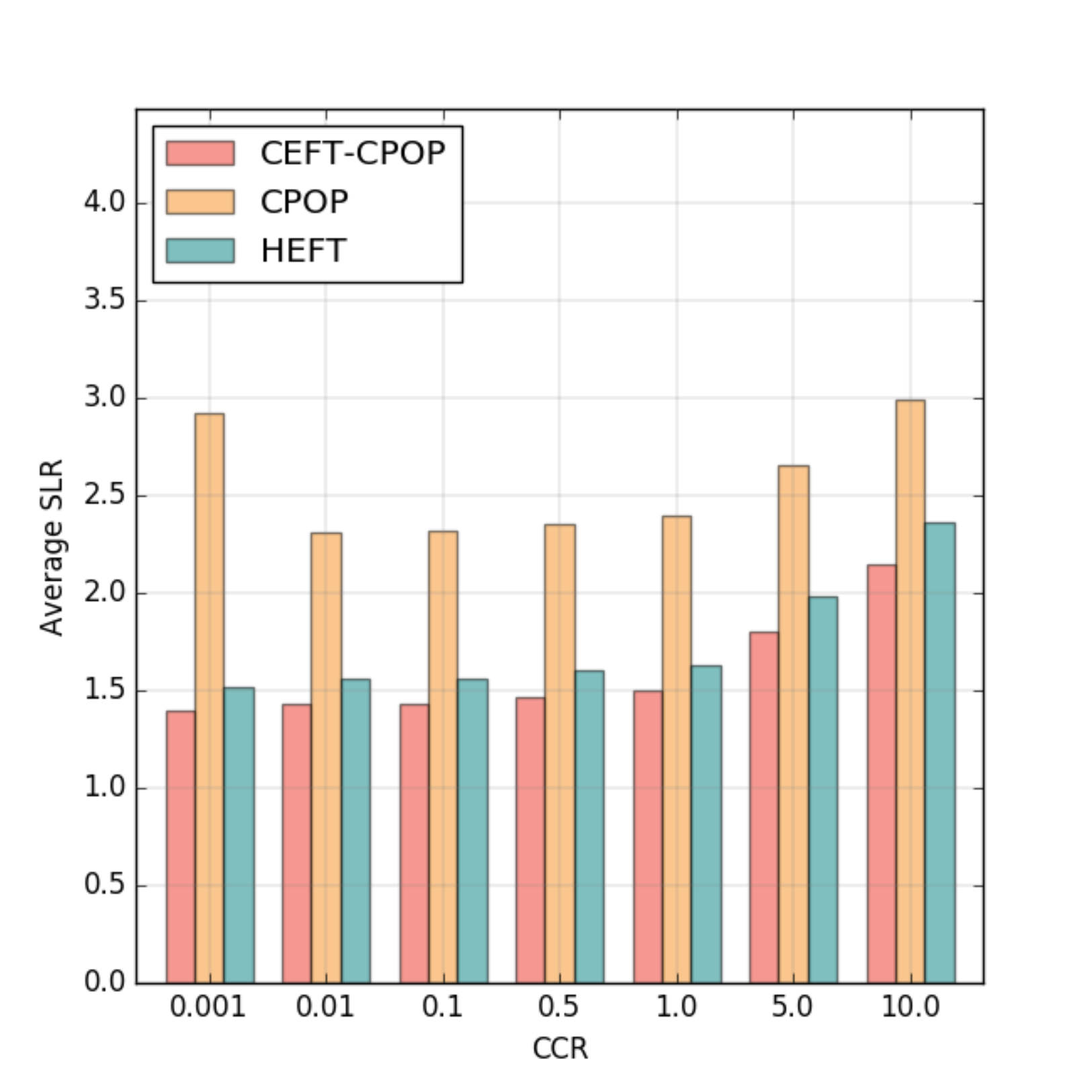}
		\label{fig:fft-med-ccr-slr}
	}
	\centerhfill
	\csubfloat[GE-medium]{
		\includegraphics[width=0.22\linewidth]{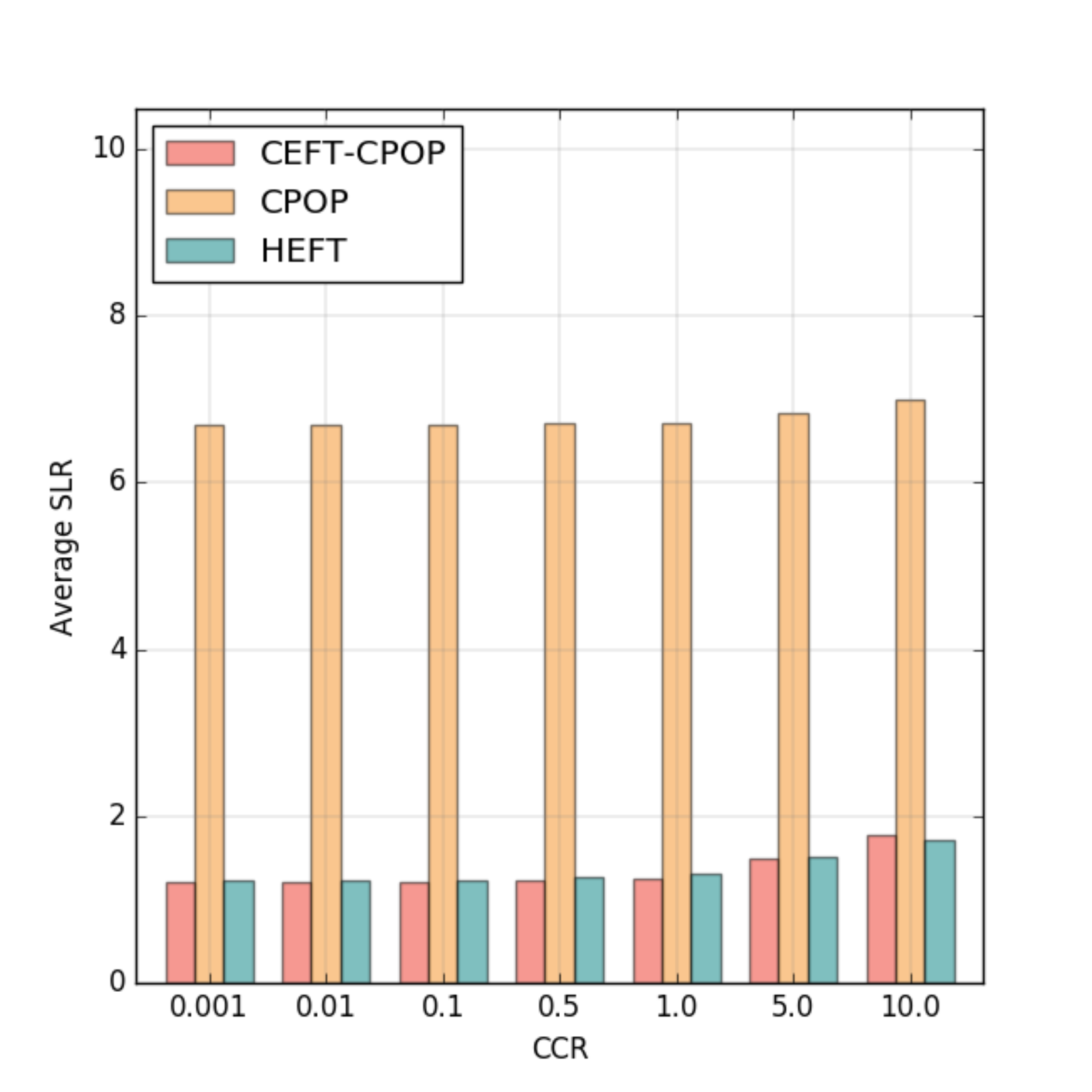}
		\label{fig:ge-med-ccr-slr}
	}
	\centerhfill
	\csubfloat[MD-medium]{
		\includegraphics[width=0.22\linewidth]{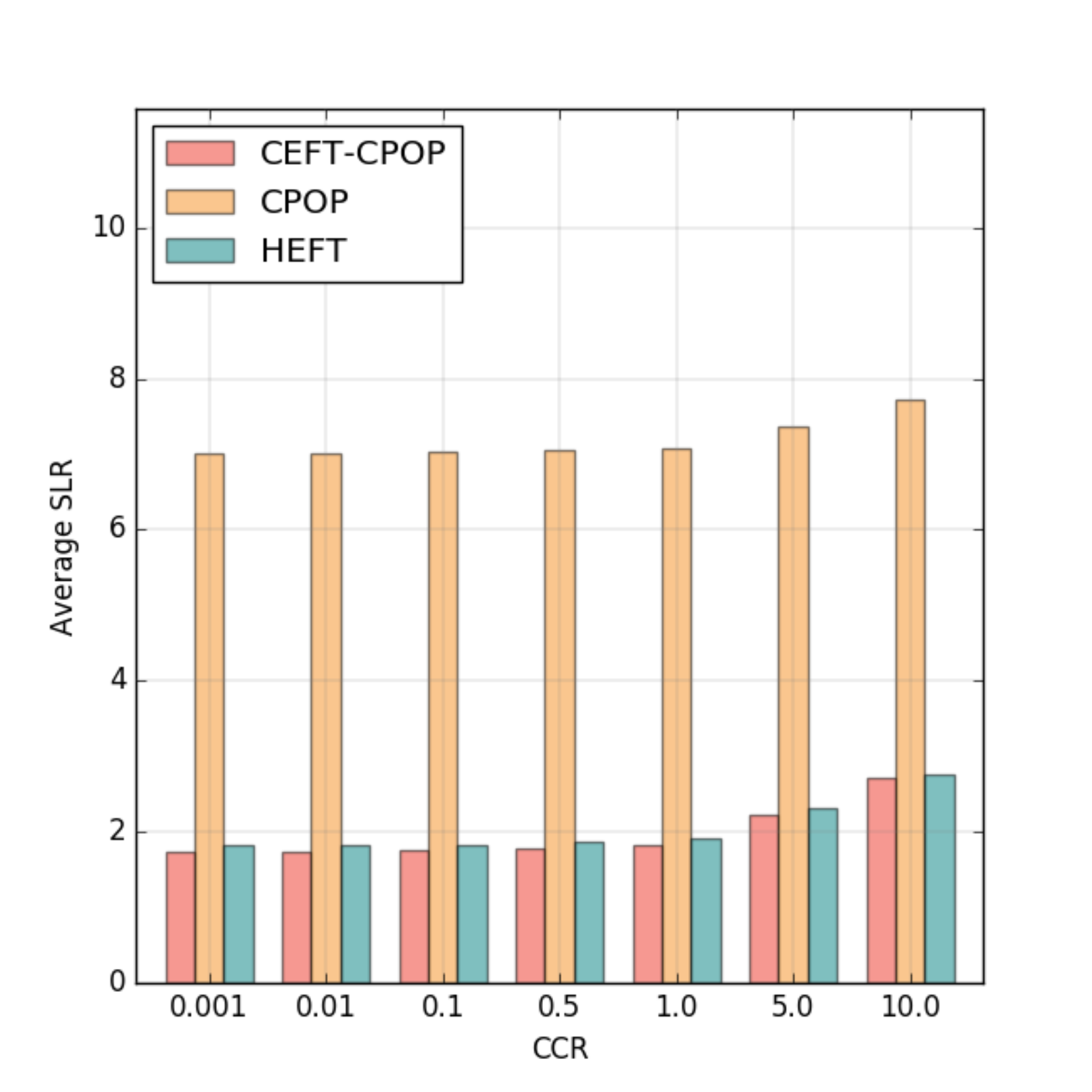}
		\label{fig:md-med-ccr-slr}
	}
	\centerhfill
	\csubfloat[EW-medium]{
		\includegraphics[width=0.22\linewidth]{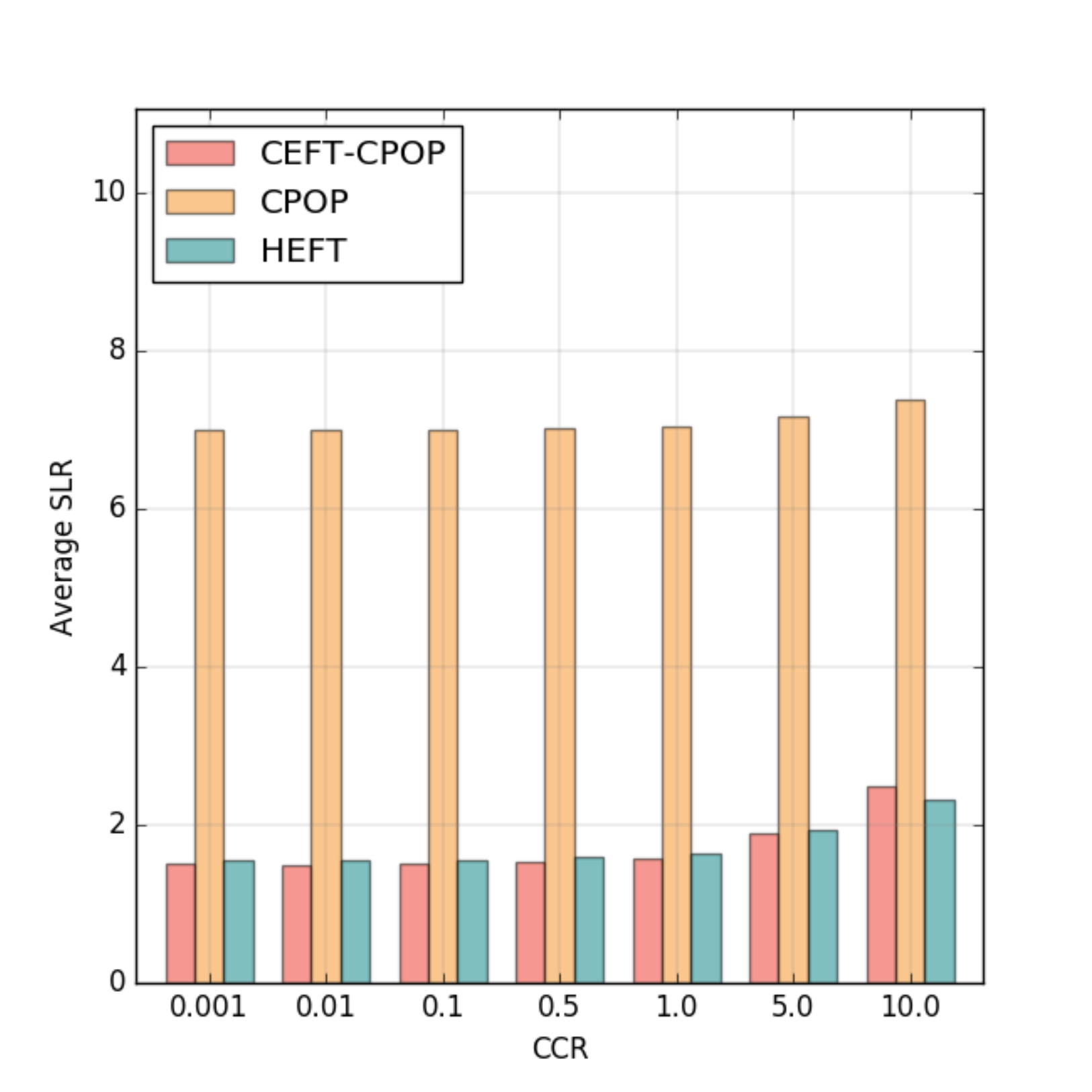}
		\label{fig:ew-med-ccr-slr}
	}
	\hspace*{\fill}%
	\caption{Comparing SLR across the different real-world benchmarks (medium variants) in terms of CCR of the input graphs. Lower is better.}
	\label{fig:realworld-med-ccr-slr}
\end{figure*}

\begin{figure*}[t]
	\centering
	\hspace*{\fill}%
	\csubfloat[FFT-classic]{
		\includegraphics[width=0.22\linewidth]{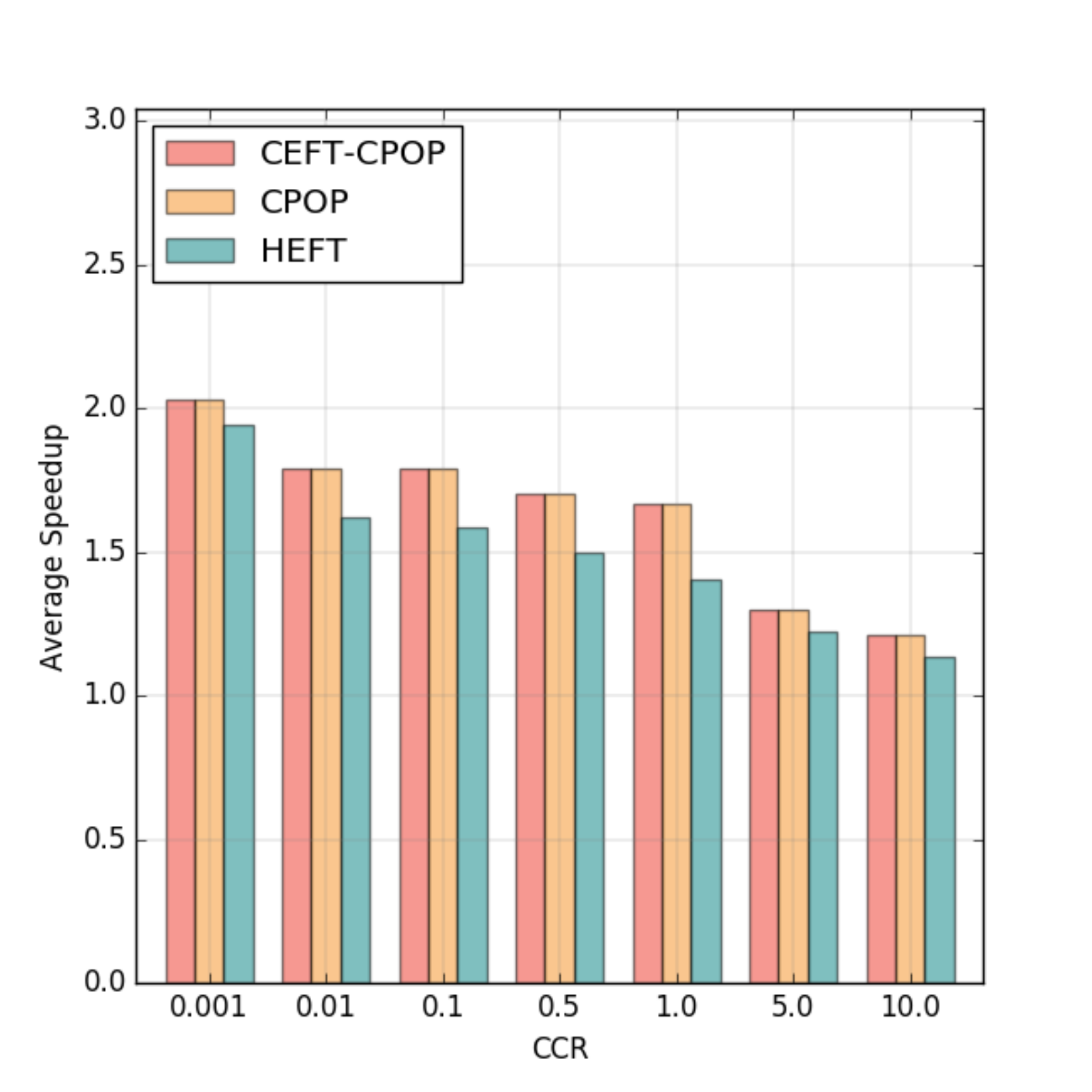}
		\label{fig:fft-clas-ccr-speedup}
	}
	\centerhfill
	\csubfloat[GE-classic]{
		\includegraphics[width=0.22\linewidth]{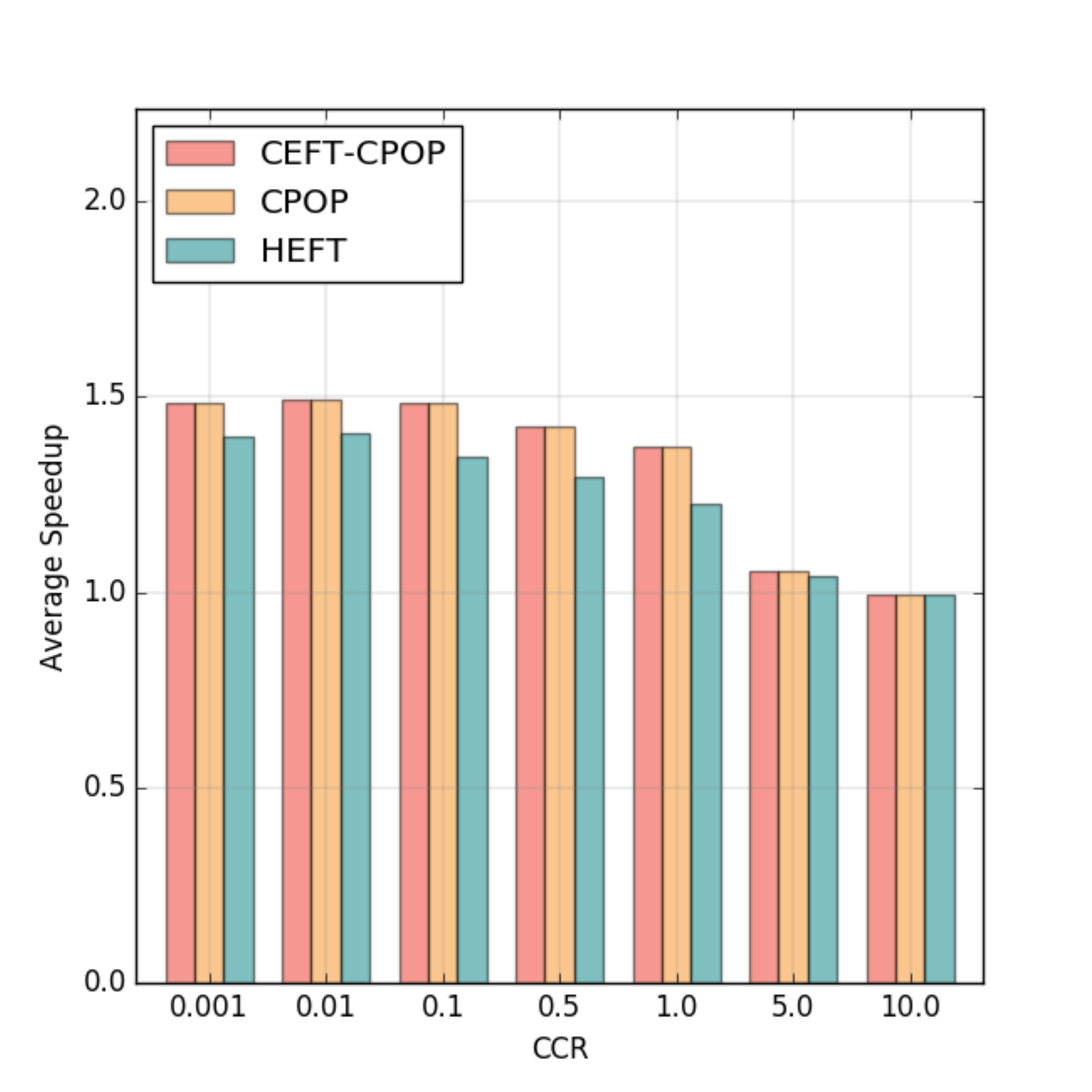}
		\label{fig:ge-clas-ccr-speedup}
	}
	\centerhfill
	\csubfloat[MD-classic]{
		\includegraphics[width=0.22\linewidth]{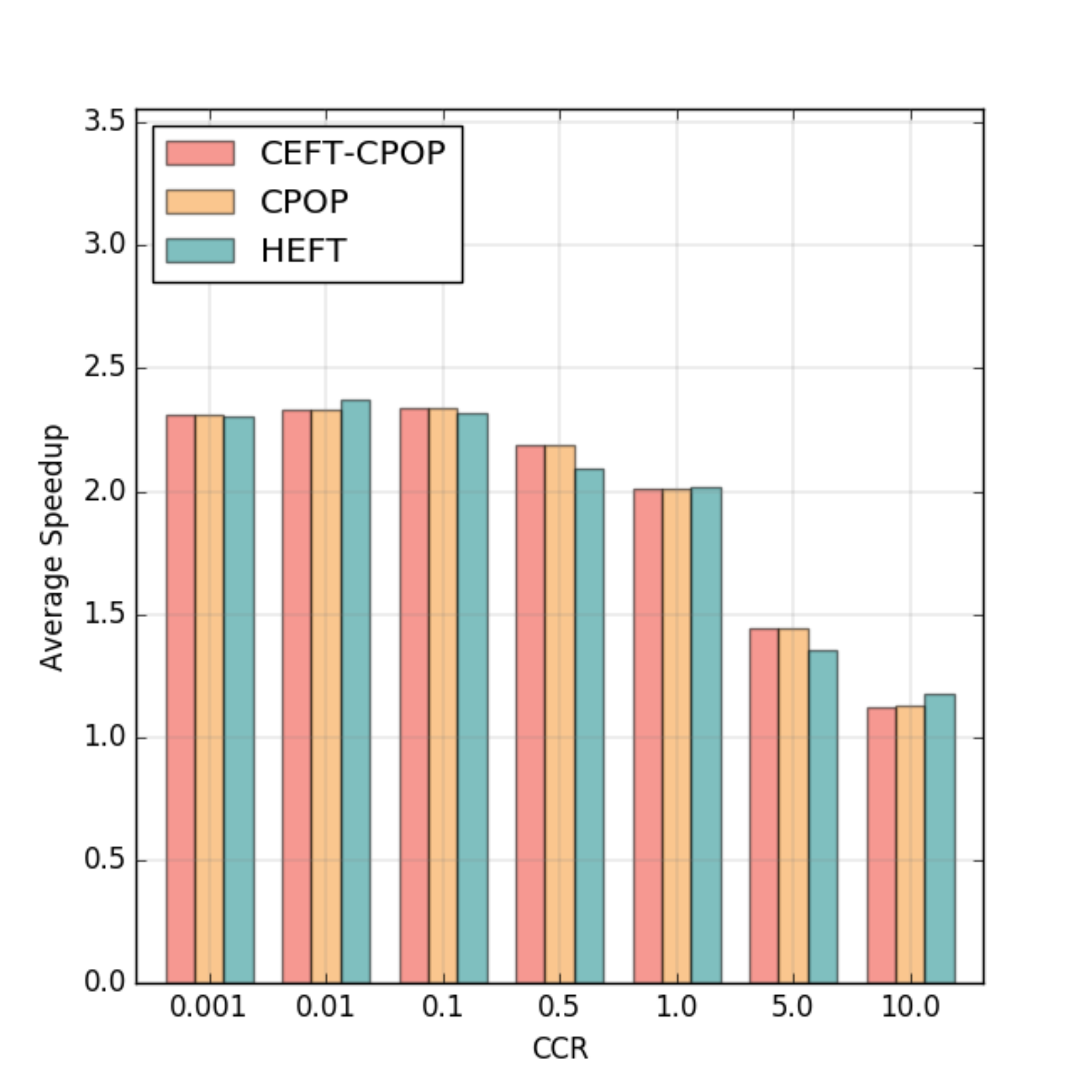}
		\label{fig:md-clas-ccr-speedup}
	}
	\centerhfill
	\csubfloat[EW-classic]{
		\includegraphics[width=0.22\linewidth]{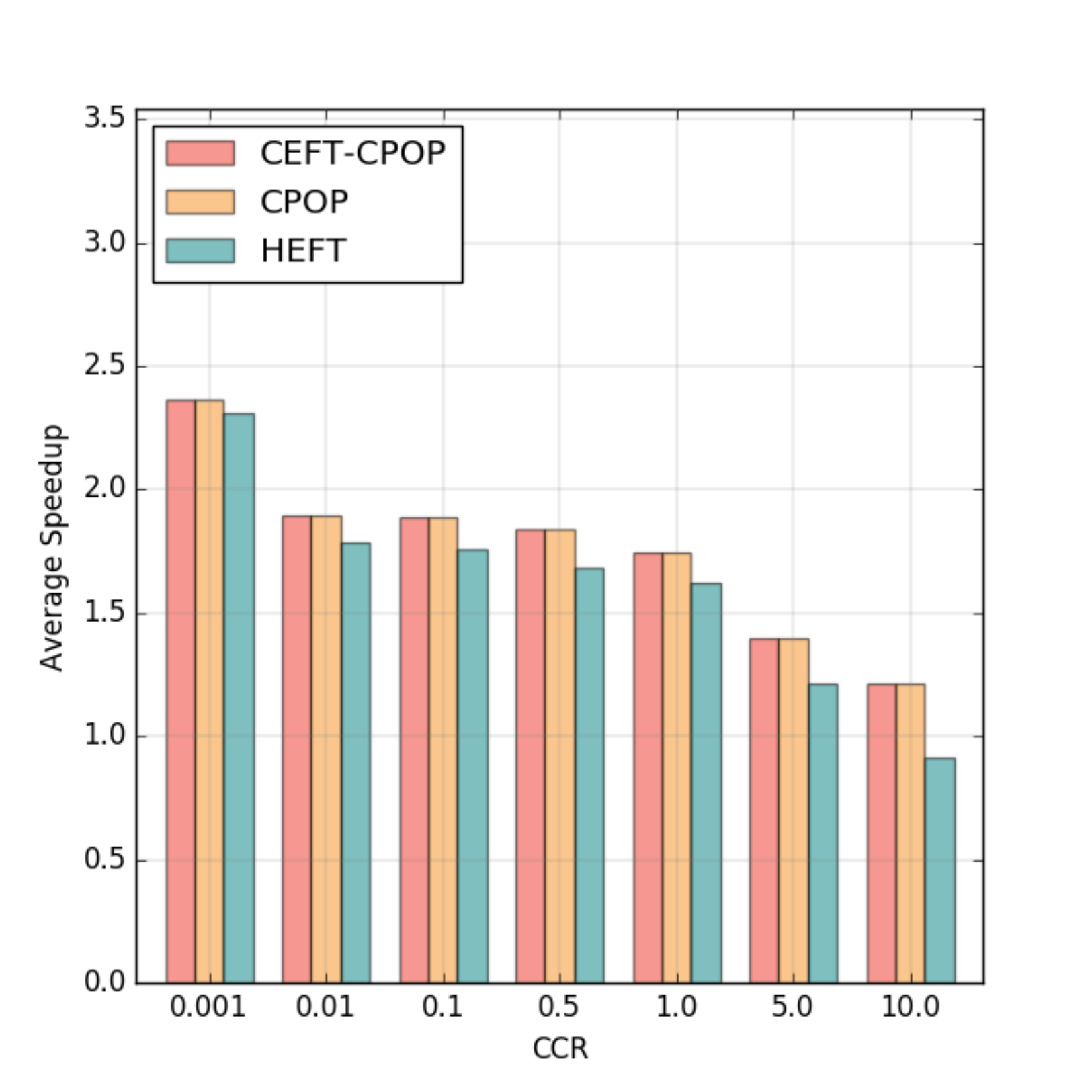}
		\label{fig:ew-clas-ccr-speedup}
	}
	\hspace*{\fill}%
	\caption{Comparing speedup across the different real-world benchmarks (classic variants) in terms of CCR of the input graphs. Higher is better.}
	\label{fig:realworld-clas-ccr-speedup}
\end{figure*}

In figure~\ref{fig:task-res-slr}, we present how the SLR varies with increasing number of tasks in the application DAG and increasing number of resources in the resource graph. Arabnejad et al. suggest that the decrease in performance of the scheduling algorithms with the increase in number of tasks is due to a marked increase in the number of concurrent tasks. According to them, algorithms that have lookahead features tend to suffer more as these algorithms tend to base the decision of scheduling the current task heavily on its children tasks. For some DAGs that have many concurrent tasks to schedule, the processor load is substantially changed by the concurrent tasks to be scheduled after the current task. Therefore, the conditions at the time of scheduling the current task are different than the conditions at the time of scheduling its child tasks. This implies that the decision made by the algorithm to to schedule the parent task might not be valid, hence leading to poorer solutions. Our algorithm, in spite of incorporating lookahead features, provides the lowest SLR of all the three algorithms compared for smaller number of tasks ($n=128$ to $n=1024$). For larger graphs, HEFT manages to produce better makespans and hence better SLR values, but our algorithm continues to outperform CPOP.

\ifdefined\longversion
\else
For more results and analysis we encourage the readers read a longer version of this paper on arxiv~\ref{ceft-arxiv}.
\fi

\ifdefined\longversion
\begin{figure*}[ht]
	\centering
	\hspace*{\fill}%
	\csubfloat[FFT-classic]{
		\includegraphics[width=0.22\linewidth]{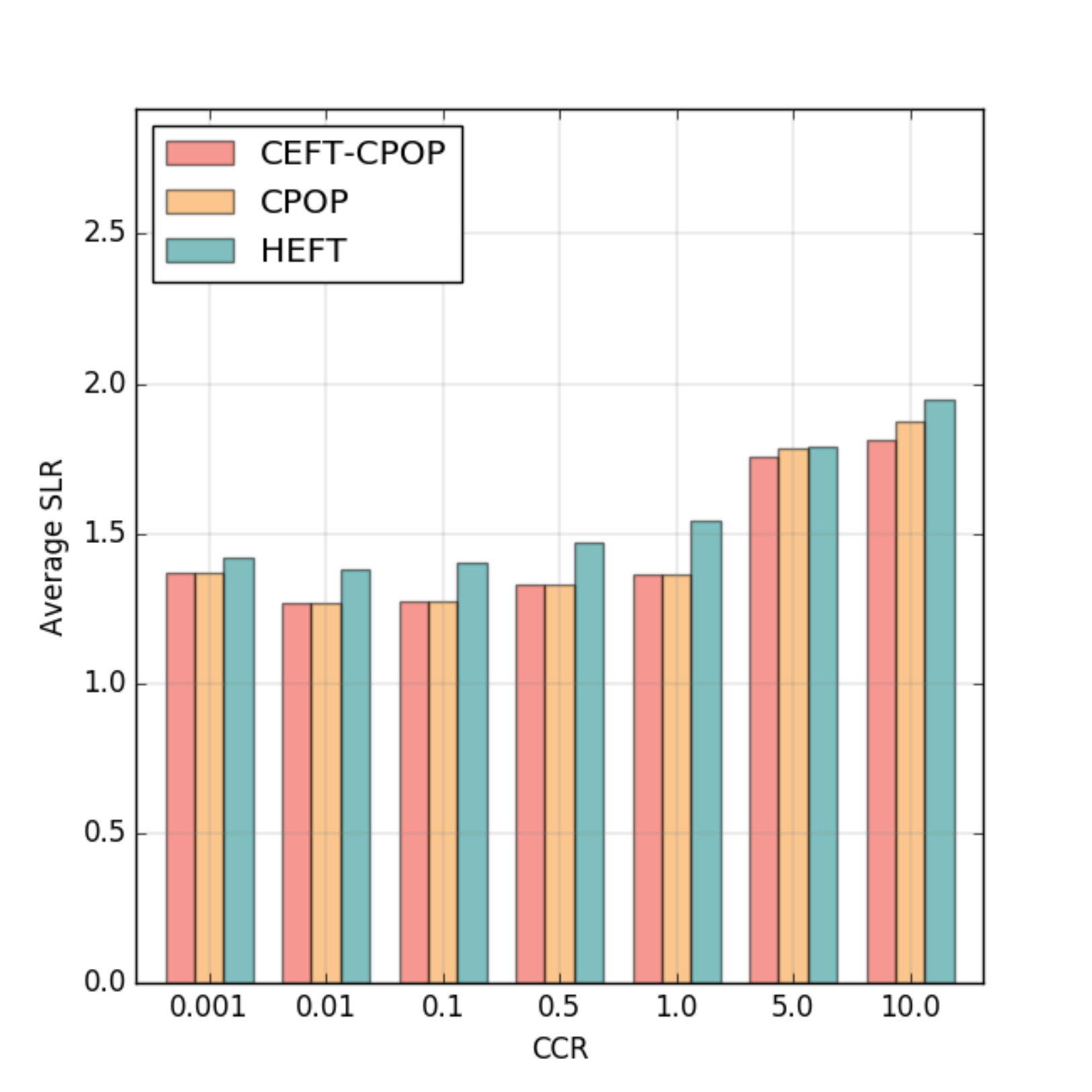}
		\label{fig:fft-clas-ccr-slr}
	}
	\centerhfill
	\csubfloat[GE-classic]{
		\includegraphics[width=0.22\linewidth]{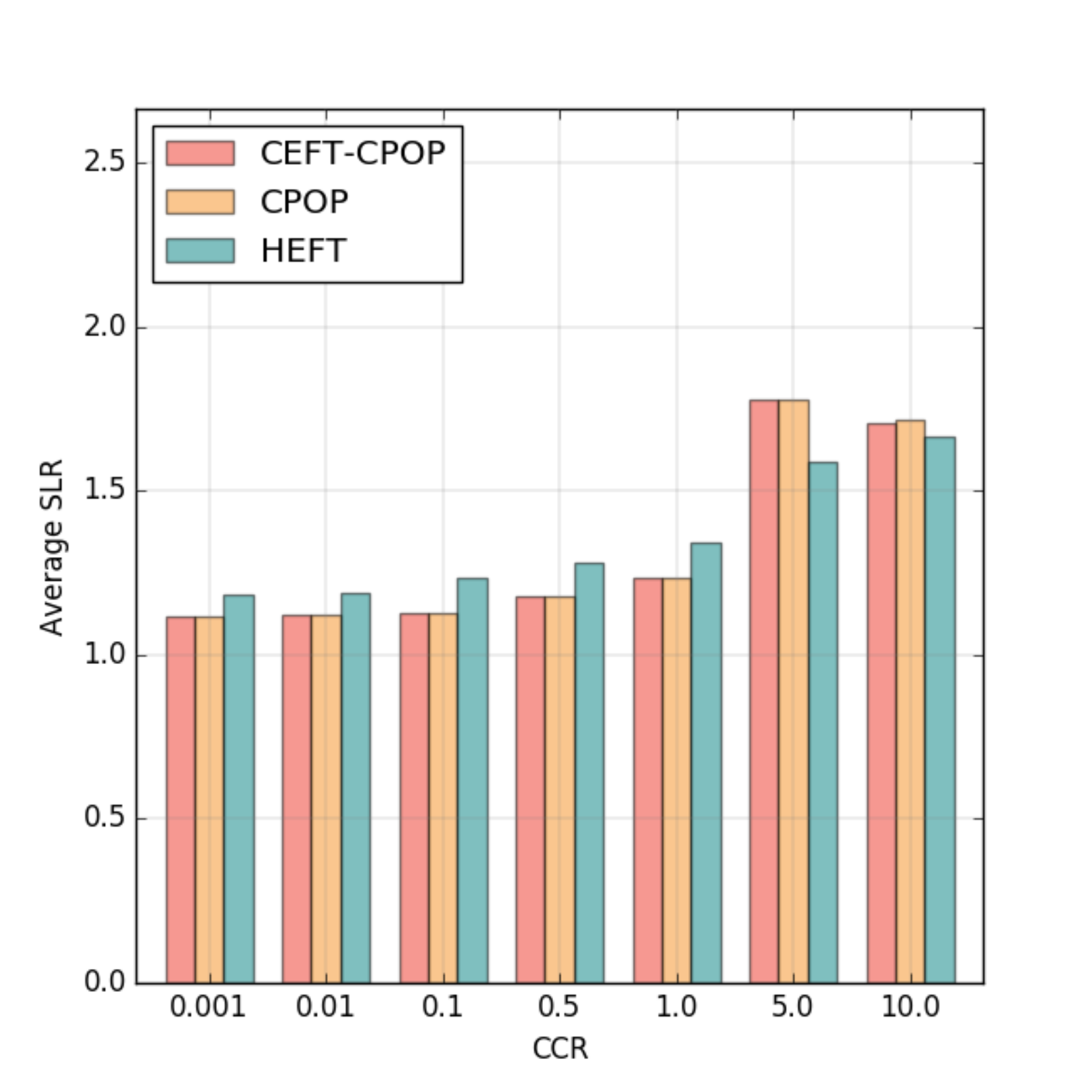}
		\label{fig:ge-clas-ccr-slr}
	}
	\centerhfill
	\csubfloat[MD-classic]{
		\includegraphics[width=0.22\linewidth]{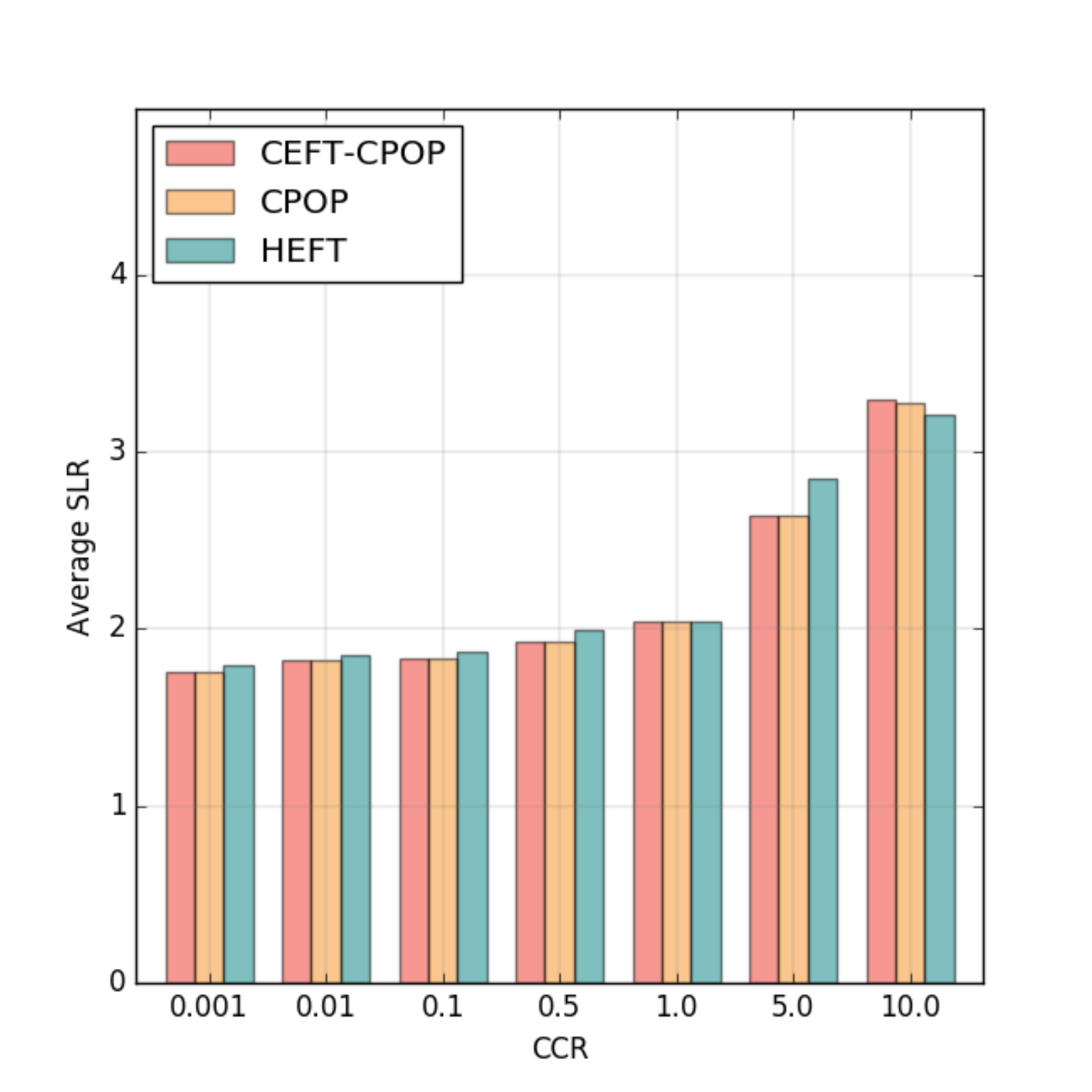}
		\label{fig:md-clas-ccr-slr}
	}
	\centerhfill
	\csubfloat[EW-classic]{
		\includegraphics[width=0.22\linewidth]{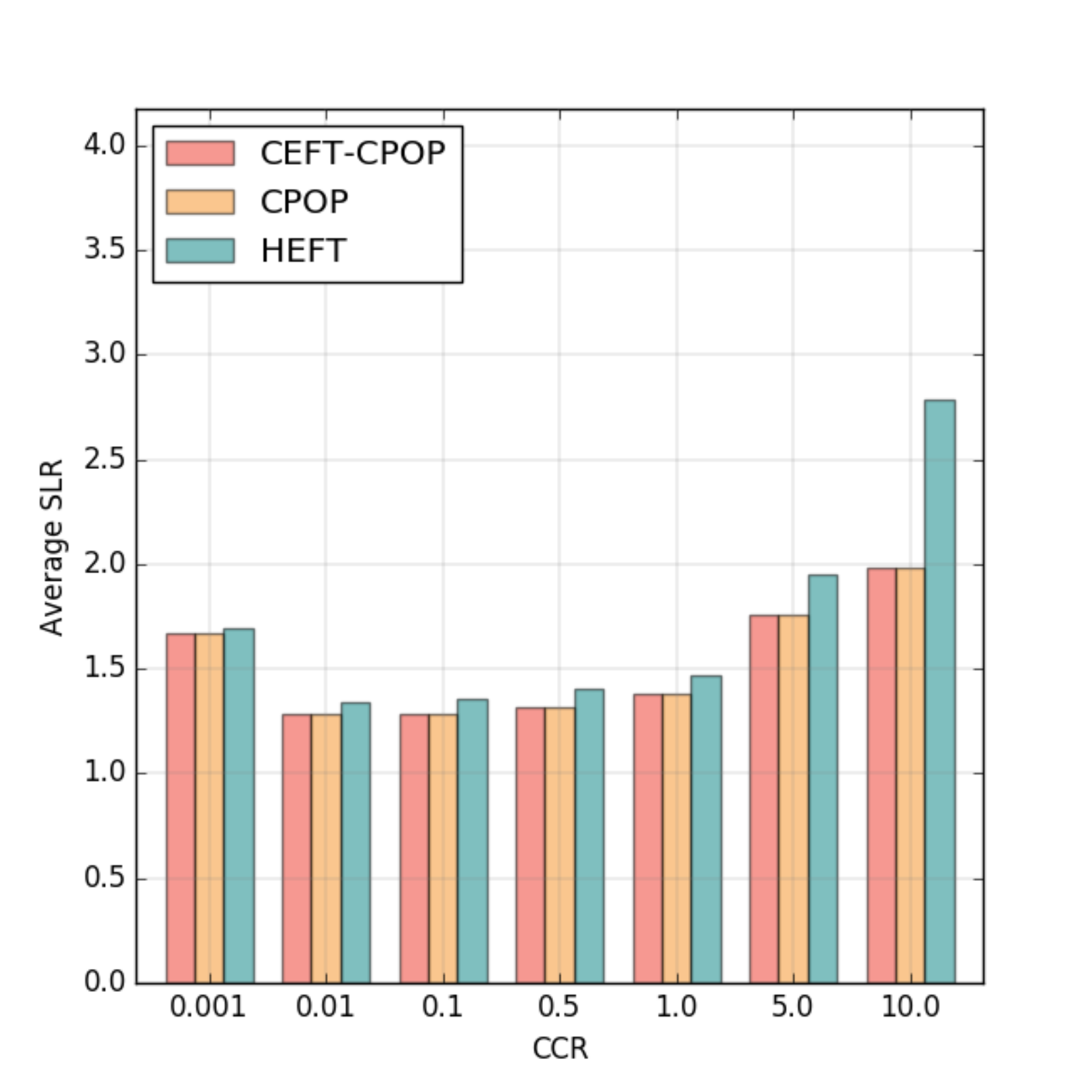}
		\label{fig:ew-clas-ccr-slr}
	}
	\hspace*{\fill}%
	\caption{Comparing SLR across the different real-world benchmarks (classic variants) in terms of CCR of the input graphs. Lower is better.}
	\label{fig:realworld-clas-ccr-slr}
\end{figure*}
\fi

\subsection{Real World Benchmarks}
\label{sec:real-worl-benc}

\ifdefined\longversion

\begin{figure*}[ht]
	\centering
	\hspace*{\fill}%
	\csubfloat[FFT-medium]{
		\includegraphics[width=0.22\linewidth]{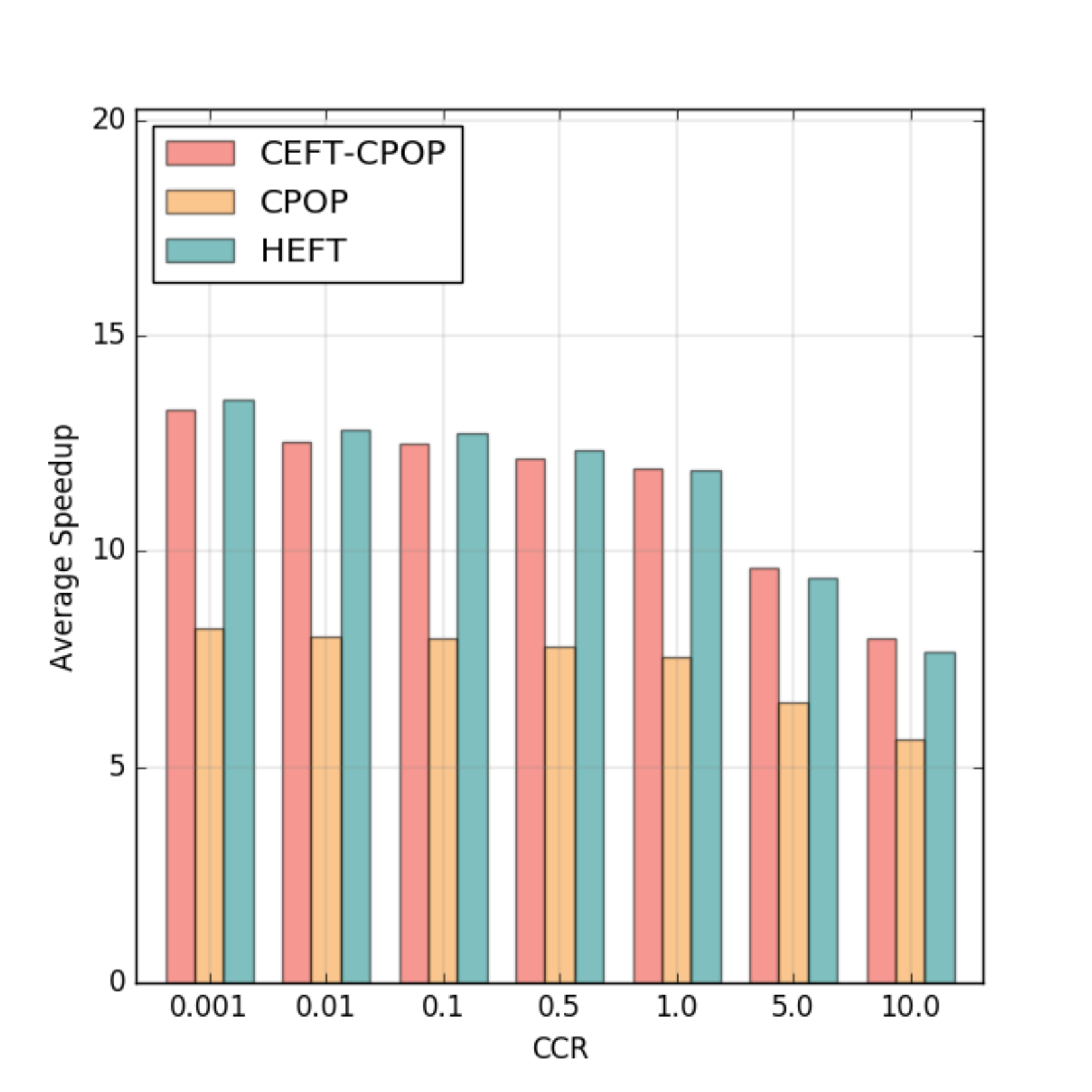}
		\label{fig:fft-med-ccr-speedup}
	}
	\centerhfill
	\csubfloat[GE-medium]{
		\includegraphics[width=0.22\linewidth]{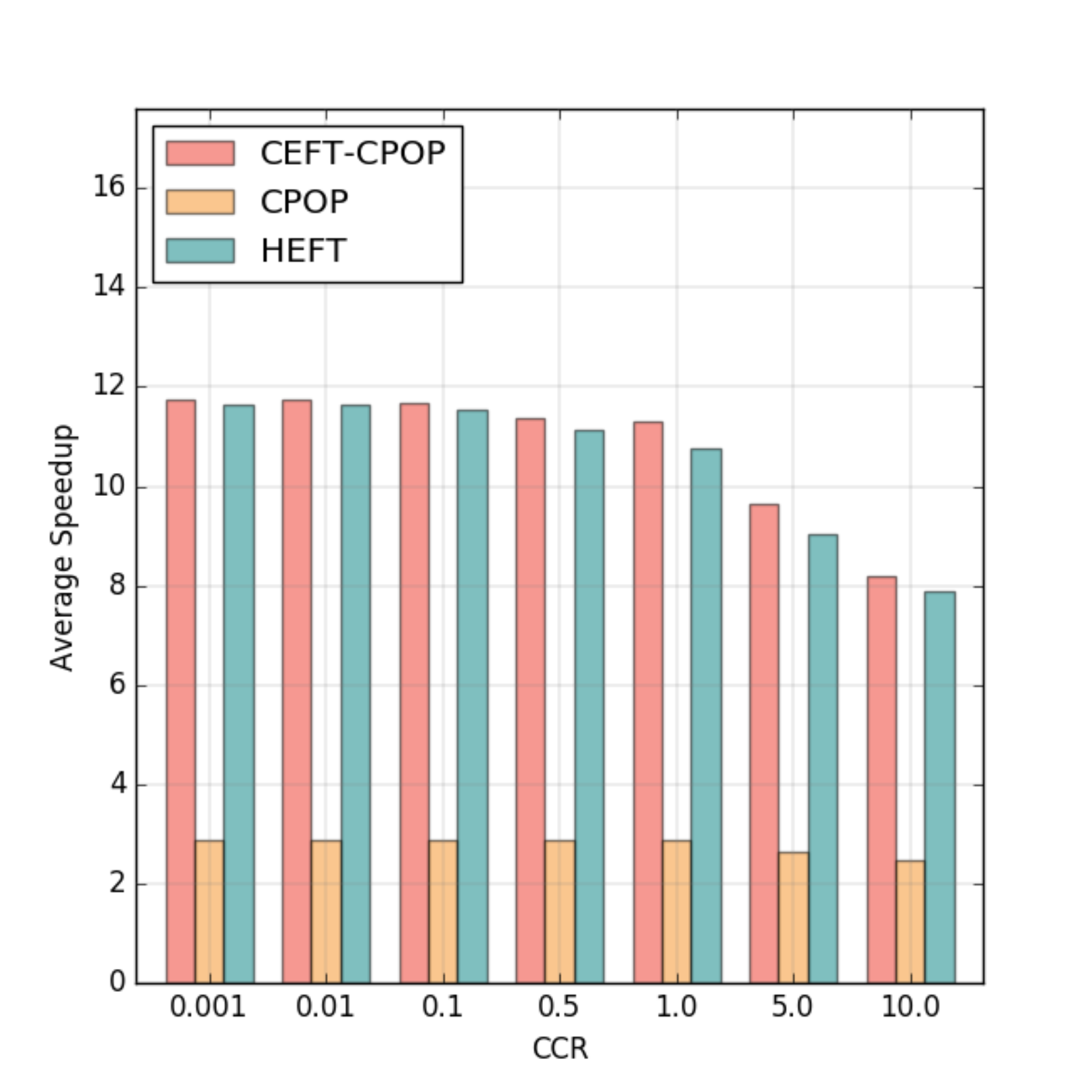}
		\label{fig:ge-med-ccr-speedup}
	}
	\centerhfill
	\csubfloat[MD-medium]{
		\includegraphics[width=0.22\linewidth]{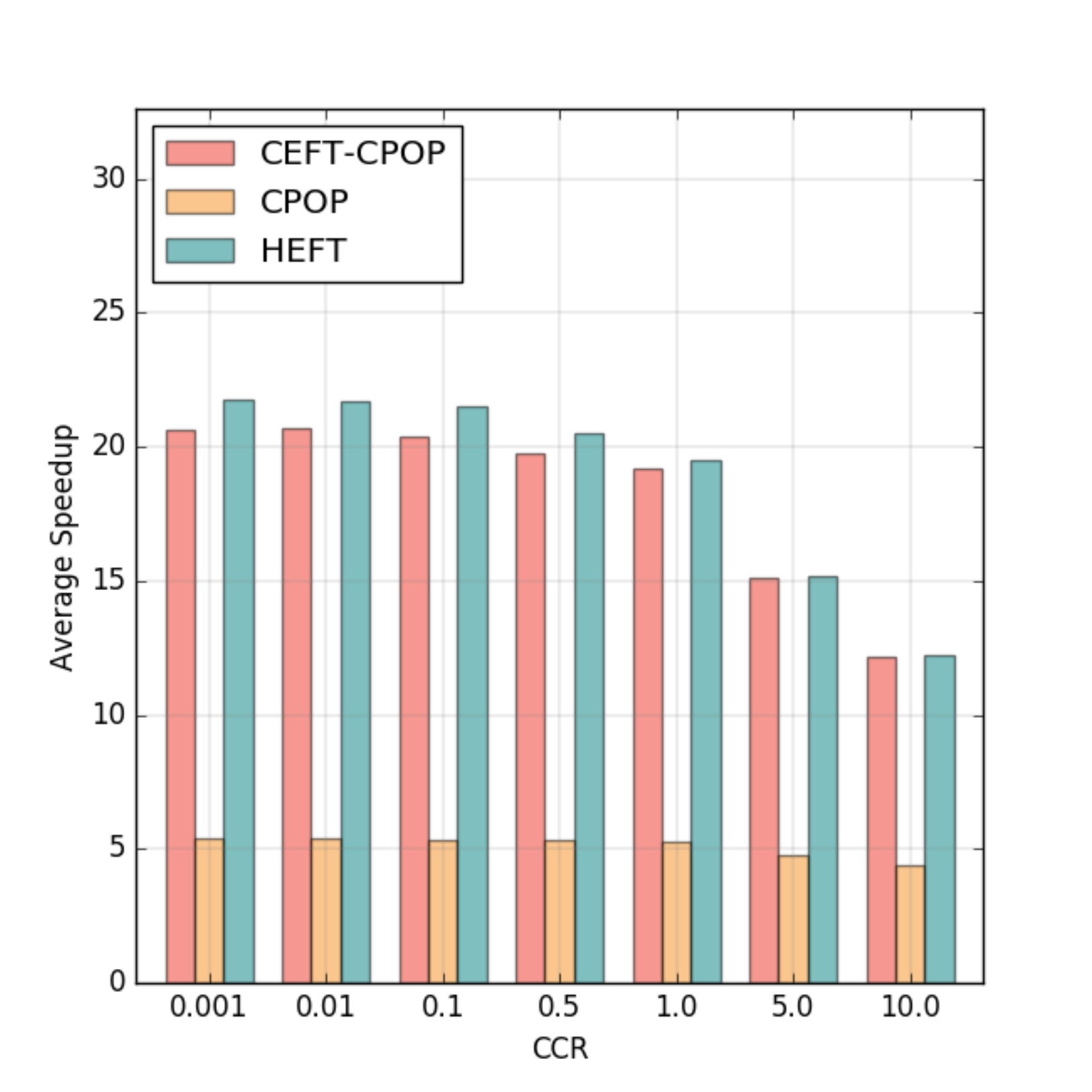}
		\label{fig:md-med-ccr-speedup}
	}
	\centerhfill
	\csubfloat[EW-medium]{
		\includegraphics[width=0.22\linewidth]{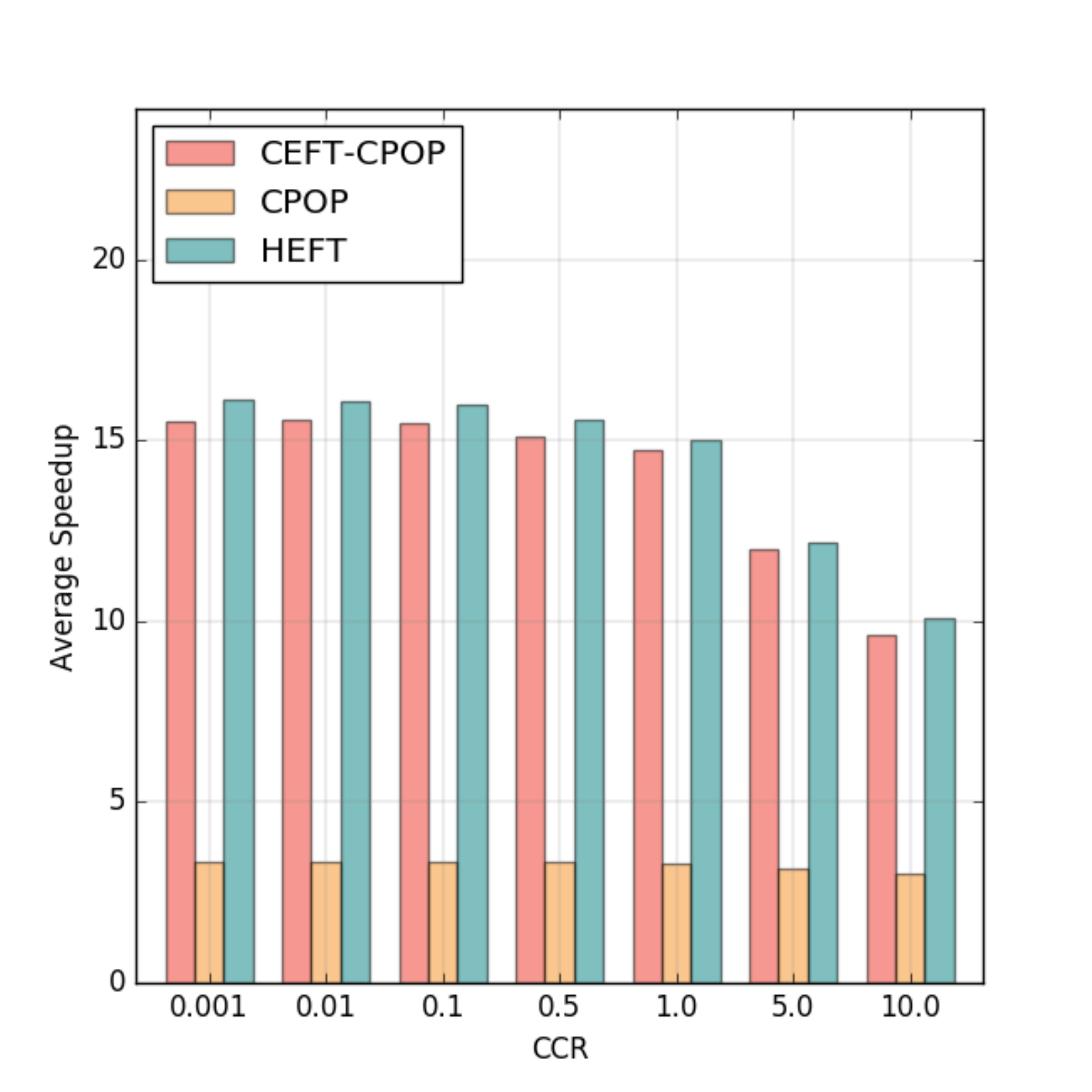}
		\label{fig:ew-med-ccr-speedup}
	}
	\hspace*{\fill}%
	\caption{Comparing speedup across the different real-world benchmarks (medium variants) in terms of CCR of the input graphs. Higher is better.}
	\label{fig:realworld-med-ccr-speedup}
\end{figure*}

\fi

\ifdefined\longversion Figures~\ref{fig:realworld-clas-ccr-slr},~\ref{fig:realworld-med-ccr-speedup} and~\ref{fig:realworld-med-ccr-slr} show the performance of the three algorithms in terms of the schedule length ratio (SLR) and speedup \else Figure~\ref{fig:realworld-med-ccr-slr} shows the performance of the three algorithms in terms of the schedule length ratio (SLR) \fi for the four real-world benchmarks from section~\ref{sec:paper-crit-path-real-worl-grap}: Fast Fourier Transform (FFT), Gaussian Elimination (GE), Molecular Dynamics code (MD) and the Epigenomics Workflow (EW). The values in the SLR graphs are the primary motivating factors for exploring the effectiveness of our algorithm using randomly generated graphs. Scheduling length ratio (SLR) is the ratio of the makespan to the length of the critical path (ignoring communication costs) when it is mapped onto the fastest processor. SLR is hence used as a metric to measure size of the application compared to the critical path. Lower values of SLR imply that the makespan is comparable to the length of the critical path and hence scheduling algorithms with lower values of SLR are preferred.

Intuitively, applications that exhibit higher SLR (applications whose optimal makespan is much larger compared to the length of the critical path) is useful for testing the effectiveness of scheduling algorithms as they require the scheduling algorithm to make the right decision on a larger percentage of the total number of tasks in the application. \ifdefined\longversion Figure~\ref{fig:realworld-clas-ccr-slr} shows that the average SLR values of the different algorithms on the real-world benchmarks is much lower than the SLR values of the algorithms on the modified versions (medium variants; generated in a similar fashion to RGG-medium using the structure of the real world graphs as discussed in sections~\ref{sec:paper-crit-path-real-worl-grap} and~\ref{sec:rand-gene-grap}). The graphs for the medium variants of the real-world graphs are shown in the paper as they are a token representative of the three randomly generated datasets (low, medium and high).\fi

From figure~\ref{fig:realworld-med-ccr-slr}, we can see that the performance (SLR) of all the algorithms on the real-world benchmarks suffer as the CCR increases. As explained earlier, the SLR is the ratio between the makespan and the length of the critical path. As communication costs increase well beyond computation costs ($CCR >>1.0$), the tasks from the critical paths are mapped onto the same processor in attempt to minimize the critical path length. The makespans however increase as a general consequence of increased communication costs. This explains why the average SLR values go up as communication costs increase. 

\ifdefined\longversion In the classic versions of the real-world benchmarks (figures~\ref{fig:realworld-clas-ccr-slr}) CEFT-CPOP produces the lowest average SLR values. Across all the four real-world benchmarks, CEFT produces critical paths of the same length as CPOP in $\sim 97.28\%$ of the cases. This is reflected in the SLR values in figure~\ref{fig:realworld-clas-ccr-slr}. Owing to the small number of tasks, HEFT outperforms CEFT by a small fraction in all the real-world benchmarks except in GE (this is further supported by the trend from figure~\ref{fig:rgg-clas-task-slr} where HEFT produces lower SLRs as the number of tasks increases). However, when the heterogeneity is generated using our method (as discussed in section~\ref{sec:rand-gene-grap}; medium variant of the real-world benchmarks), CEFT produces critical paths that are shorter than CPOP's in $\sim 73.8\%$ which lead to better makespans in $\sim 77.77\%$ of the cases.\fi

\subsection{HEFT Ranking Function with CEFT}
\label{sec:heft-rank-func-with-ceft}

\ifdefined\longversion

\begin{figure*}[ht]
	\centering
	\hspace*{\fill}%
	\csubfloat[RGG-classic]{
		\includegraphics[width=0.22\linewidth]{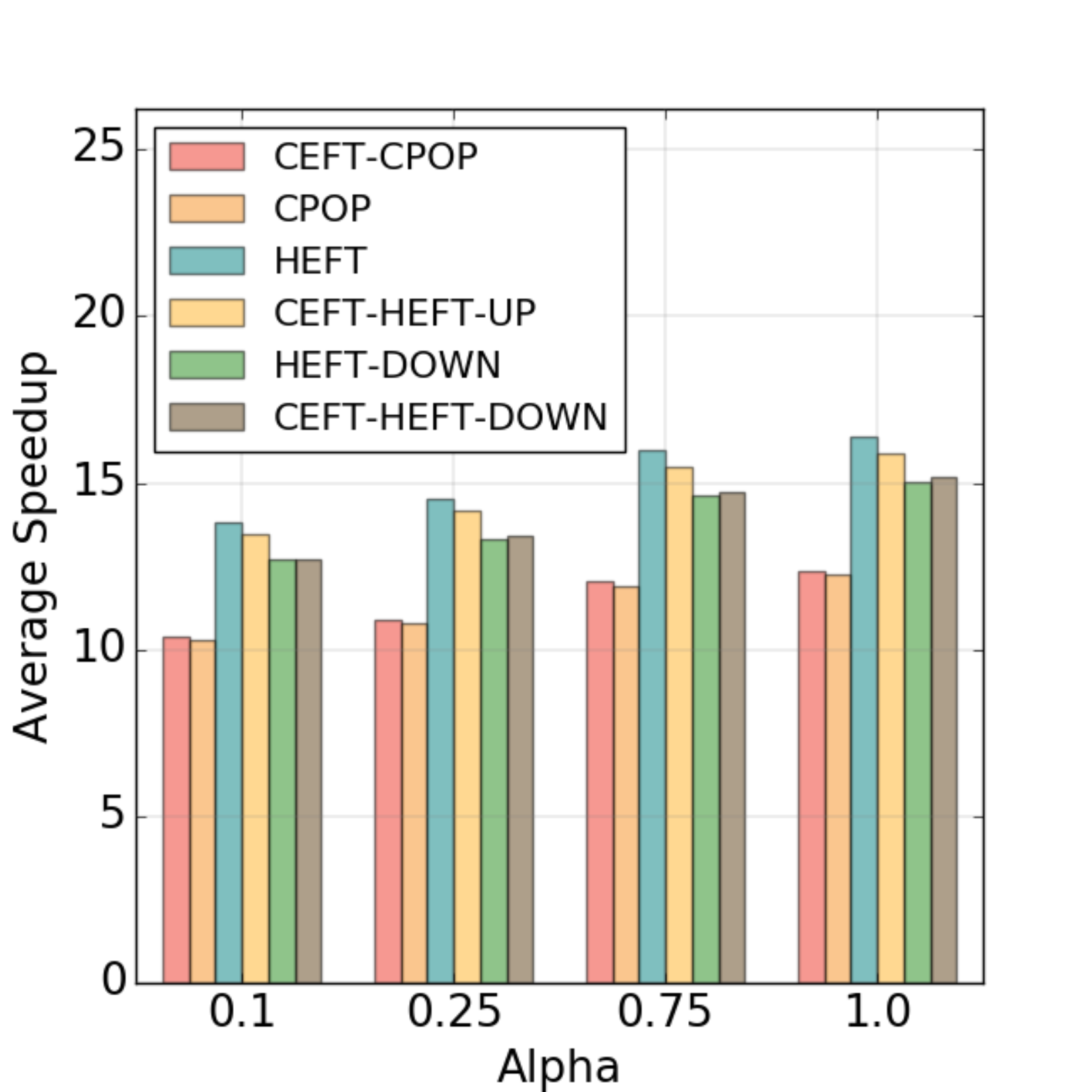}
		\label{fig:all-rgg-clas-alpha-speedup}
	}
	\centerhfill
	\csubfloat[RGG-low]{
		\includegraphics[width=0.22\linewidth]{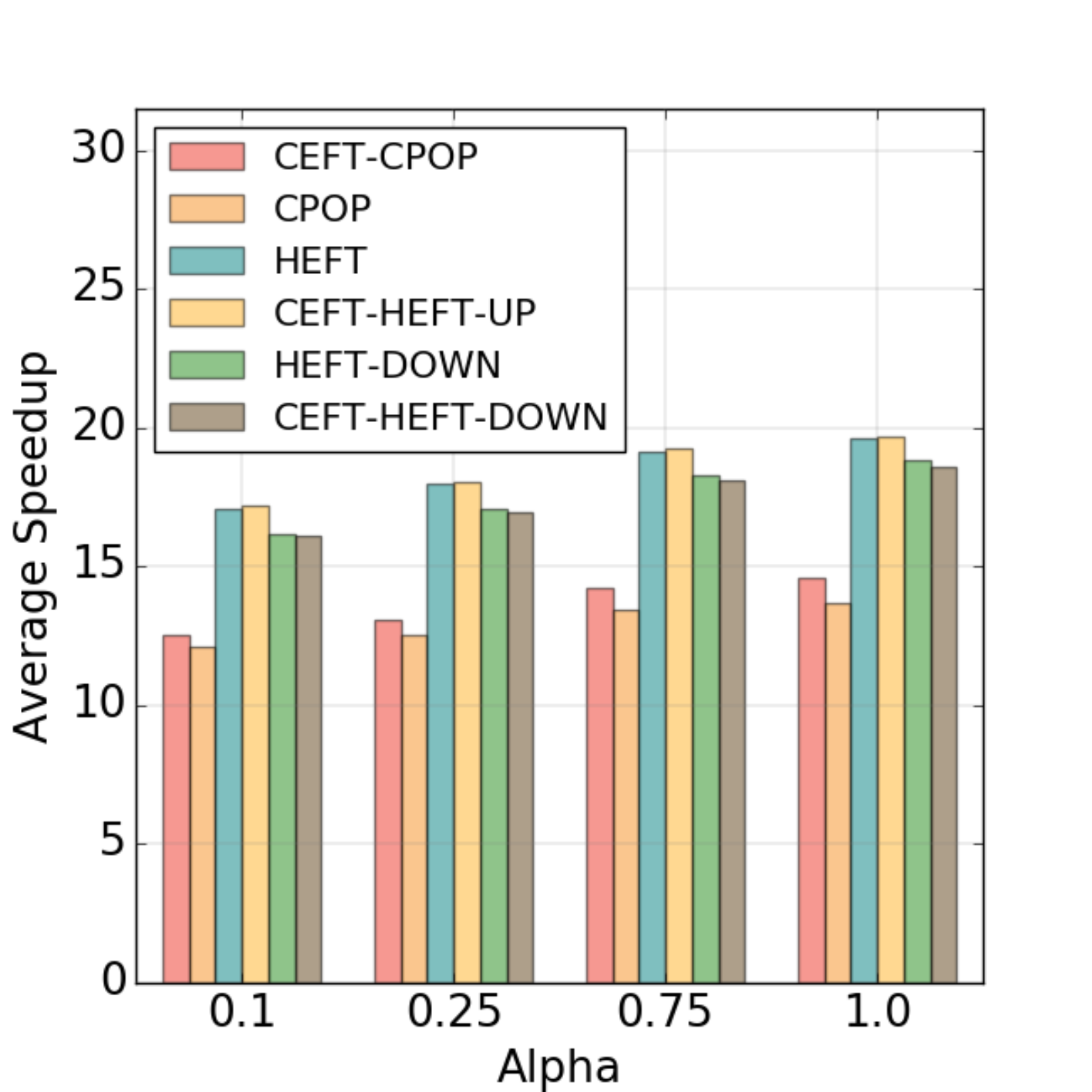}
		\label{fig:all-rgg-low-alpha-speedup}
	}
	\centerhfill
	\csubfloat[RGG-medium]{
		\includegraphics[width=0.22\linewidth]{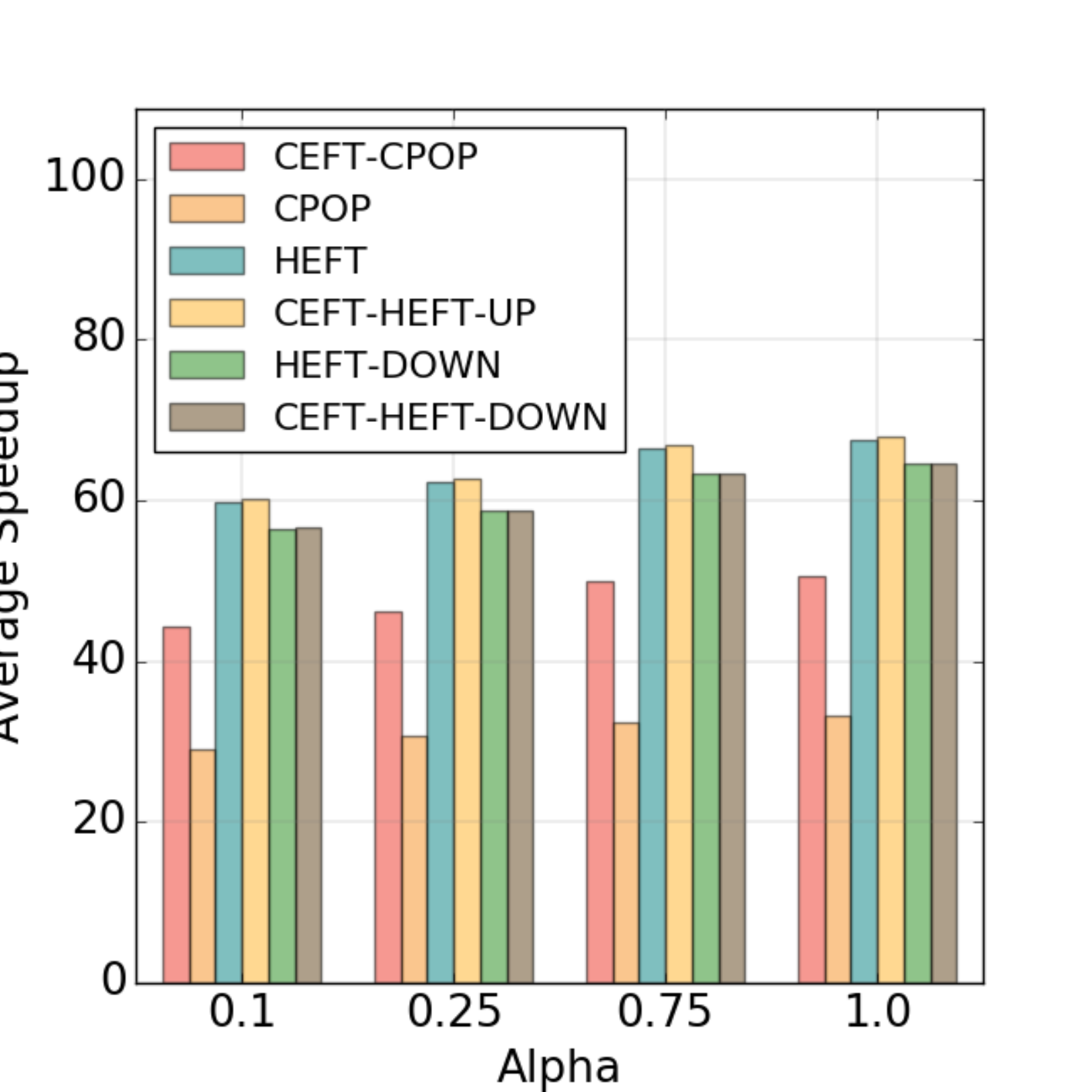}
		\label{fig:all-rgg-med-alpha-speedup}
	}
	\centerhfill
	\csubfloat[RGG-high]{
		\includegraphics[width=0.22\linewidth]{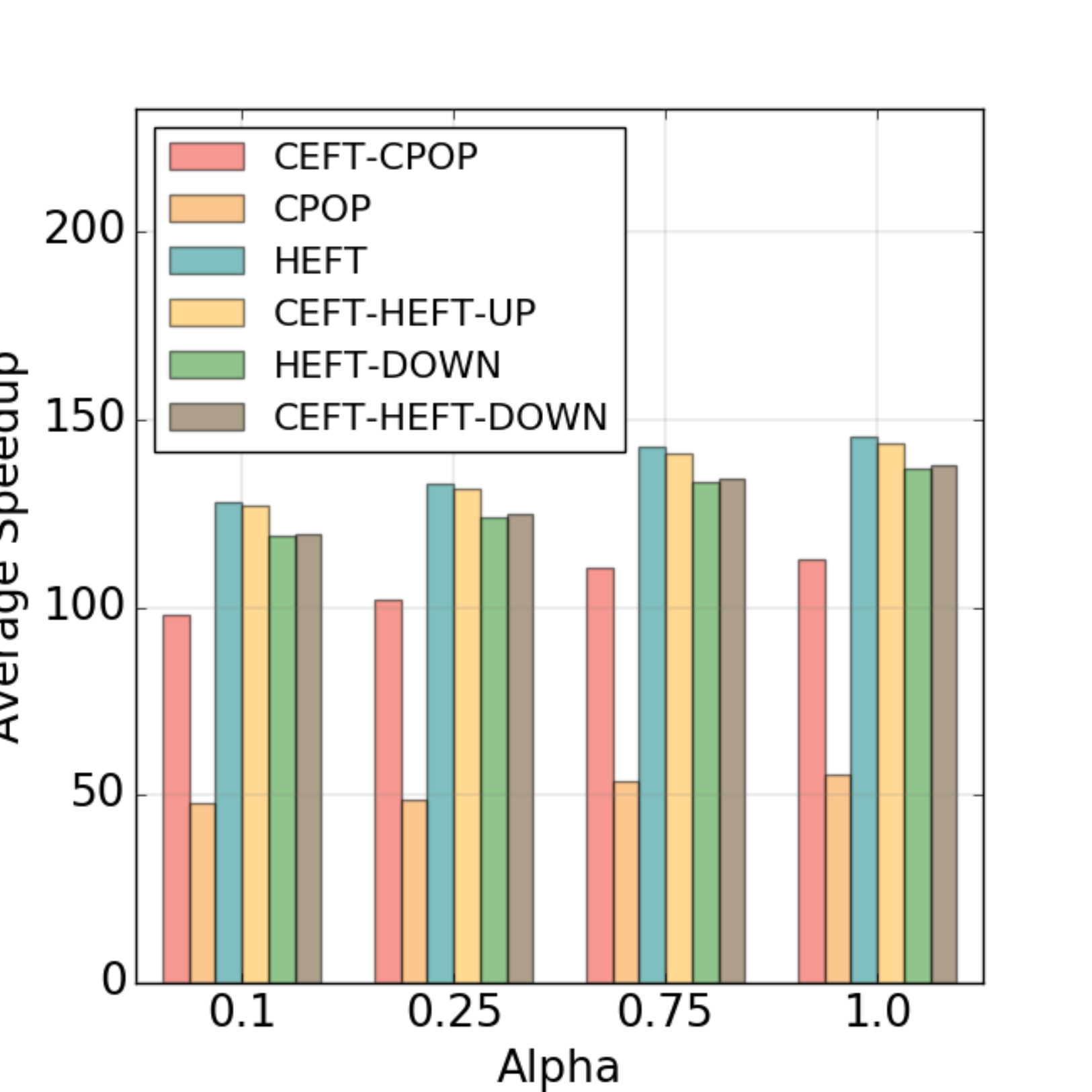}
		\label{fig:all-rgg-high-alpha-speedup}
	}
	\hspace*{\fill}%
	\caption{Comparing speedup across different workloads in terms of $\alpha$ of the input graphs. Higher is better.}
	\label{fig:all-alpha-speedup}
\end{figure*}

\begin{figure*}[ht]
	\centering
	\hspace*{\fill}%
	\csubfloat[RGG-classic]{
		\includegraphics[width=0.22\linewidth]{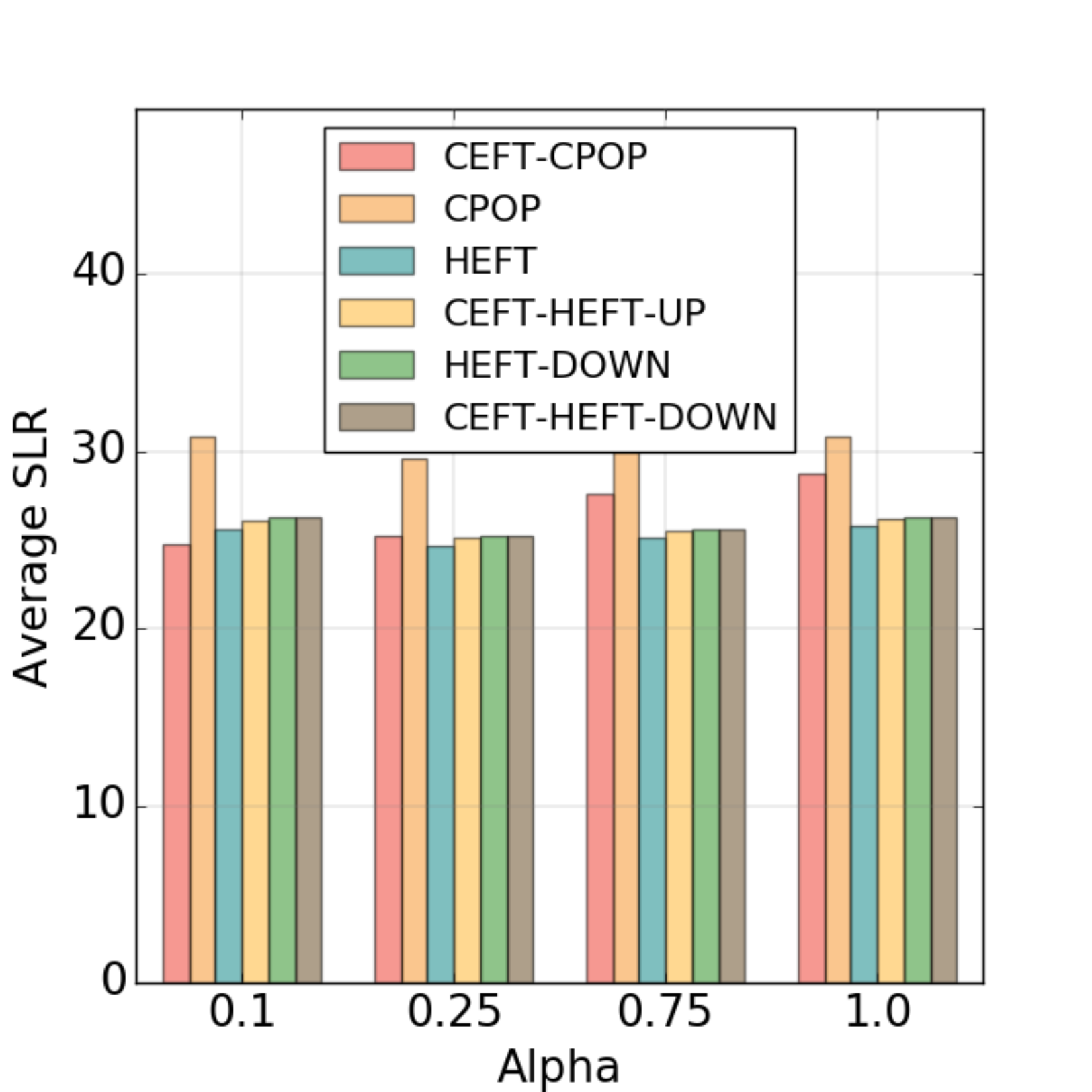}
		\label{fig:all-rgg-clas-alpha-slr}
	}
	\centerhfill
	\csubfloat[RGG-low]{
		\includegraphics[width=0.22\linewidth]{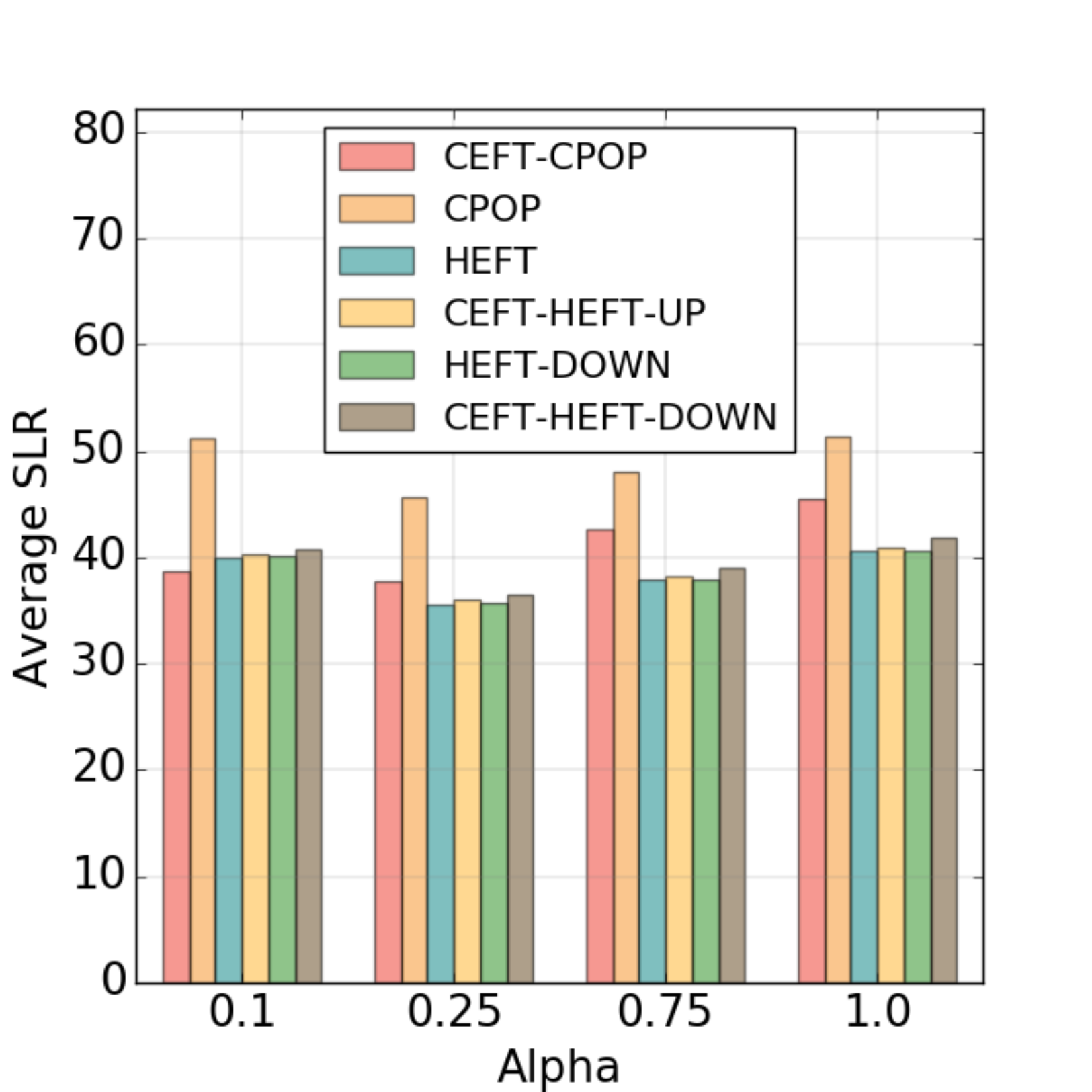}
		\label{fig:all-rgg-low-alpha-slr}
	}
	\centerhfill
	\csubfloat[RGG-medium]{
		\includegraphics[width=0.22\linewidth]{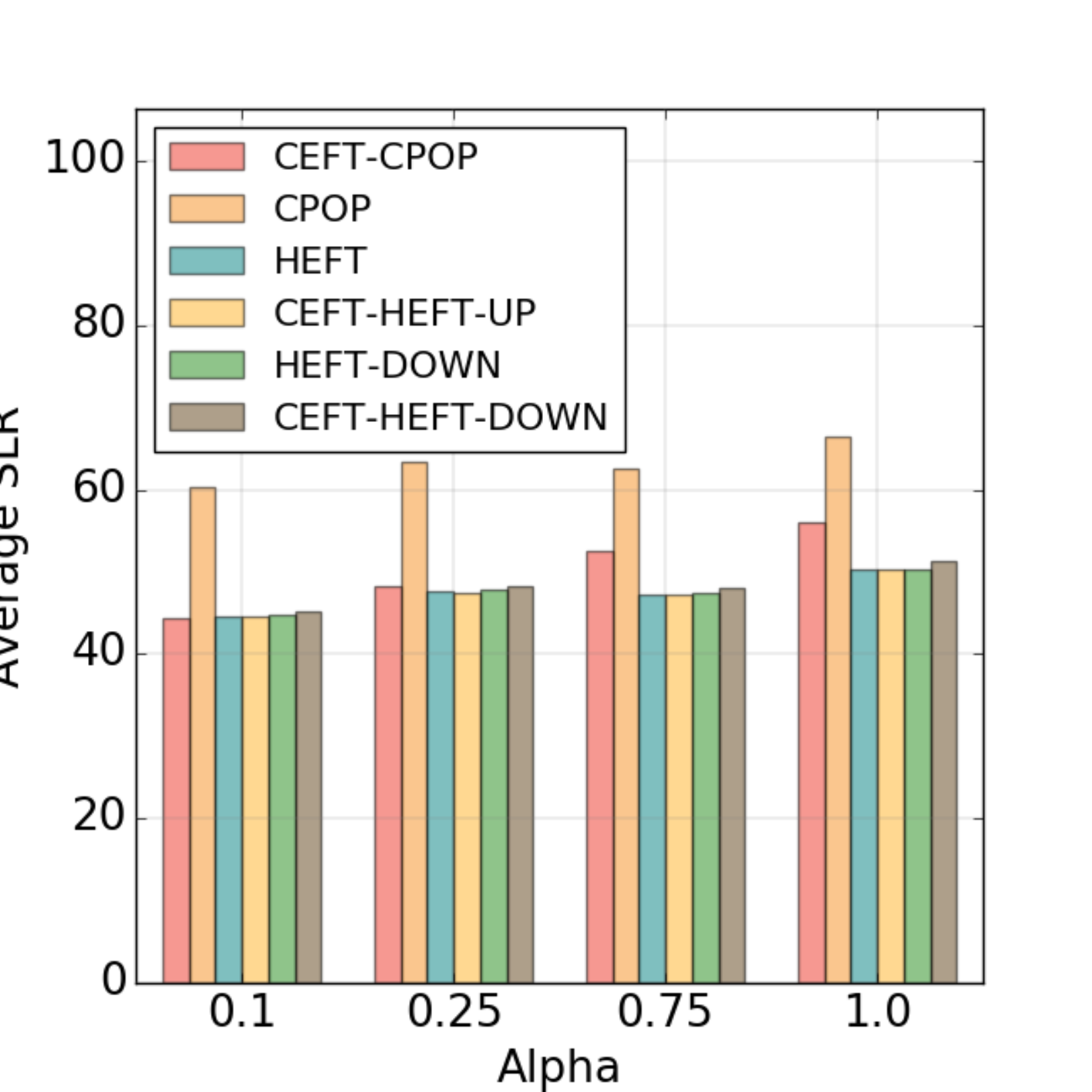}
		\label{fig:all-rgg-med-alpha-slr}
	}
	\centerhfill
	\csubfloat[RGG-high]{
		\includegraphics[width=0.22\linewidth]{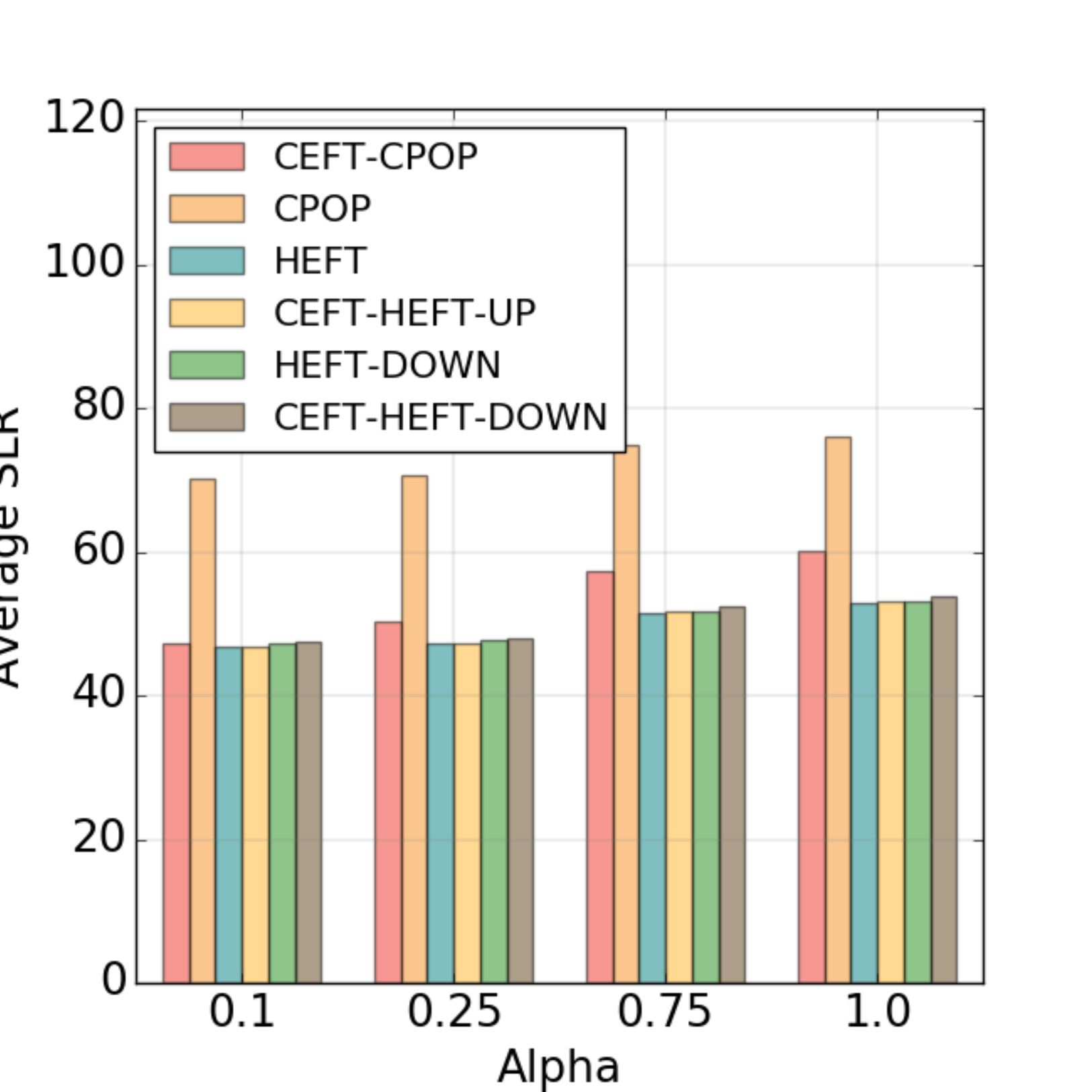}
		\label{fig:all-rgg-high-alpha-slr}
	}
	\hspace*{\fill}%
	\caption{Comparing slr across different workloads in terms of Alpha of the input graphs. Lower is better.}
	\label{fig:all-alpha-slr}
\end{figure*}

\fi


Topcuoglu et al.~\cite{topcuoglu2002performance} calculate two heuristics for assigning priorities to tasks: downward rank ($rank_d$) and upward rank ($rank_u$). The downward rank of a task $t_i$ is the length of the longest path from the entry task to $t_i$ in the DAG. Consequently, the upward rank ($rank_u$) of task $t_i$ is the length of the longest path from $t_i$ to an exit node.

Both these ranks are calculated using average execution times and average communication times. In order to calculate the priorities more accurately we propose two new ranking schemes called $rank_{ceft-down}$ and $rank_{ceft-up}$. For the former, we use the CEFT dynamic programming array that has been calculated by traversing the application graph in a topological order. For every task, we choose the processor that minimizes the CEFT value and use that as its downward rank. \ifdefined\longversion As explained in section~\ref{sec:paper-crit-path-crit-path-pape-dyna-prog-solu}, CEFT calculates the length of the critical path from the source task to a given task using accurate execution times, which serves as the primary motivation for modifying $rank_d$ in this manner.\fi In order to calculate the upward rank using CEFT ($rank_{ceft-up}$), we transpose the application graph (invert the edges, keeping the vertices same) and run the CEFT algorithm on this newly transposed graph. We then employ a similar strategy as before, assigning every task to the processor that minimizes its CEFT value and use that value as its upward rank.

In all the graphs presented in this section, the bars labelled HEFT refer to the default HEFT algorithm using the upward rank $rank_u$, while CEFT-HEFT-UP refers to the HEFT algorithm using $rank_{ceft-up}$. HEFT-DOWN refers to HEFT using the downward rank $rank_d$ while CEFT-HEFT-DOWN refers to HEFT using $rank_{ceft-down}$.

Figure~\ref{fig:all-alpha-speedup} compares the average \textit{speedup} obtained by these algorithms with the three algorithms from the previous section. It is evident from the graphs that these variants perform very similar to the HEFT variants. In the classic variant of the randomly generated benchmarks, HEFT produces an average speedup $\sim5\%$ greater than CEFT-HEFT-UP while HEFT-DOWN and CEFT-HEFT-DOWN produce very similar speedups with CEFT-HEFT-DOWN winning marginally for wider graphs. However, in the other variants of the workload we can see that the upward ranking function calculated with accurate computation and communication costs yields marginally better results than HEFT.\ifdefined\longversion Figure~\ref{fig:all-alpha-slr} shows the corresponding SLR values of all the algorithms.\fi

\section{Related Work}
\label{sec:paper-crit-path-rela-work}

In the past, the intractability of finding optimal solutions for the DAG scheduling problem has been well explored~\cite{kohler1975preliminary,michael1979computers,bruno1976computer}. As a result, efforts in the recent past have been focused on finding sub-optimal solutions in shorter runtimes, using heuristics. On the one hand, the heuristic solutions have been studied in terms of guided search space based techniques in~\cite{daoud2005gats,gao2008hybrid,sathappan2011modified,sanyal2005match,orsila2006parameterizing,pan2015improved} but are generally computationally intensive. On the other, list scheduling algorithms which are not as computationally expensive, produce results not far from the optimal~\cite{braun2001comparison}. In this section, we introduce key critical path based static list scheduling algorithms and methods of calculating the critical path. 

The idea of using Critical Paths in heuristics for scheduling DAGs has existed for a long time~\cite{kohler1975preliminary,hu1961parallel,lockyer1969introduction}. The conventional definition of the critical path as given in Definition~\ref{def:crit-path-defi}, is as follows : \textit{\textbf{Critical-Path} (CP) of a DAG is the longest path of from the entry node to the exit node in the application graph}. Existing algorithms to compute the critical path of a graph for heterogeneous machines make simplifying assumptions. As mentioned before, a simple strategy is to take execution times of a given task on various processors and average them \cite{kwok1996dynamic}. Another~\cite{topcuoglu2002performance}, is to assign all tasks on the critical path to a single processor, and to simply choose the processor that minimizes the critical path length. The latter approach also avoids having to consider communication costs because all tasks are assumed to be on the same processor. However, for some scenarios these algorithms perform poorly as we discuss in section~\ref{sec:paper-crit-path-expe-resu}. In general, the approach of calculating critical paths based on averages can give a result that is greater or lesser than the \textit{true} critical path. Also, adding a new processor can radically change the critical path which is not handled well by the average approach.

%
%

\ifdefined\longversion
In the past decade, there have been two \ifdefined\longversion\footnote{These algorithms are DAG scheduling algorithms that are based on the idea of critical path. Since the critical path is an integral part of their work, they define said critical path, hence making it relevant to the work presented in this paper. At this juncture, we would like to stress on the fact that our algorithm presented in the previous sections, is strictly a critical path finding algorithm which can be extended to form a DAG scheduling algorithm.}\fi main critical path based scheduling algorithms : the Dynamic Critical Path algorithm (DCP)~\cite{kwok1996dynamic} and the Critical Path On a Processor (CPOP)~\cite{topcuoglu2002performance}. Kwok et al. in 1996 developed the DCP algorithm which used the idea of critical-path to solve the problem of DAG scheduling. In their algorithm, the authors do not calculate the critical path of the application before scheduling. During the scheduling process, tasks from the graph can get dynamically added to the critical-path (CP). In order to distinguish the CP at an intermediate step in scheduling from the original CP, Kwok et al. term the CP at an intermediate step as the \textit{Dynamic Critical Path} (DCP). They then proceed to construct a theoretical basis by which they either remove nodes from the DCP or include nodes to it to monotonically reduce the schedule length. That is, at every consecutive step of the scheduling process, the intermediate schedule length or the DCPL remains the same or reduces. 

The CPOP algorithm borrows a lot of ideas from it's superior counterpart : HEFT~\cite{topcuoglu2002performance}. It differs from HEFT by redefining its ranking function. The rank of every task is calculated as the sum of its static upward rank and static downward rank. These two ranks signify the distance of the given task from the exit task and the source task respectively which are in turn used to calculate the critical path.

However, using this ranking function has its advantage. The rank score of the source task and the exit task are the same which is equal to the length of the critical-path. The critical path can then be easily found by traversing the graph depth-first and looking for tasks that have the same ranking score. Once the critical path is found, it is scheduled onto a single processor which minimizes $\sum_{t_i \in CP}^{}w(t_i, p_j), \forall p_j \in P$. This is the second biggest shortcoming of CPOP. By restricting that the tasks from the critical-path can be scheduled only onto one processor, the authors take away the ability to explore different assignments of the tasks in the CP to potentially obtain a lower/higher schedule length and hence a lower/higher makespan. Once the CP is scheduled, the processor selection phase proceeds as defined in HEFT. For each task $t_i$ that is not on the CP, processor $p_j$ is chosen that minimizes the earliest finish time of task $t_i$ on processor $p_j$.

\fi



\section{Conclusion}
\label{sec:conc}

In this paper, we have designed, implemented and tested a critical path finding algorithm (\textit{CEFT}) that finds the true critical path of an application for heterogeneous processors. The quality of the critical paths are shown to be better than those produced by the state of the art CPOP algorithm. We show that the critical path lengths produced by our algorithm is always at least as long as the ones produced by CPOP for the RGG-classic workload. Our experiments show that when the heterogeneity is better expressed in the workloads (RGG-high) our paths are shorter than CPOP's paths in 83.99\% of the experiments. 

We also extend our critical path finding algorithm into a DAG scheduling algorithm (\textit{CEFT-CPOP}) by replacing the path found by our algorithm (with its corresponding partial assignment) into the CPOP algorithm. We compare the efficacy of our algorithm mainly against CPOP through the use of makespan related comparison metrics like: schedule length ratio (SLR), speedup and slack. It is evident from the results that our algorithm outperforms CPOP even as a scheduling algorithm, in nearly all aspects. For the RGG-classic, RGG-low, RGG-medium and RGG-high, our algorithm (CEFT-CPOP) produces smaller makespans in 15.9\%, 75.94\%, 90.29\% and 89.69\% of the experiments respectively. We also consistently produce smaller SLR and slack values than CPOP. In some cases as explained in section~\ref{sec:paper-crit-path-resu-anal}, our algorithm outperforms HEFT in terms of SLR and makespans, but falls short of HEFT's capabilities to produce the tightest schedules (lowest slack values). We observe similarly consistent results from four real-world benchmarks: Fast Fourier Transform (FFT), Gaussian Elimination (GE), Molecular Dynamics code (MD) and the Epigenomics Workflow (EW).

One of the biggest impediment for CEFT-CPOP on the road to lower makespans, is that it has been extended to function as a scheduling algorithm using CPOP. Although, this helps us in finding relatively good schedules, we believe the extension provided by CPOP is still a limiting factor. We also believe that, by extending our algorithm into a full scheduling algorithm without task duplication, one can attain even better results in terms of obtaining smaller makespans.

\ifCLASSOPTIONcompsoc
  \section*{Acknowledgments}
\else
  \section*{Acknowledgment}
\fi


This work was supported by Science Foundation Ireland grant 12/IA/1381 and IRC Enterprise Partnership Scheme in collaboration with IBM Research, Ireland.

\ifCLASSOPTIONcaptionsoff
  \newpage
\fi

\bibliographystyle{IEEEtran}
\bibliography{main}
\begin{IEEEbiography}[{\includegraphics[width=1in,height=1.25in,clip,keepaspectratio]{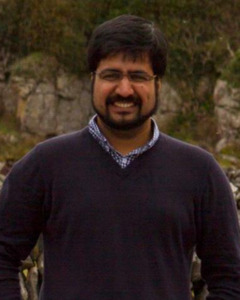}}]{Aravind Vasudevan}
is a Postdoctoral researcher in the Software Tools Group at Trinity College Dublin. He finished his PhD at Trinity College Dublin under the supervision of Dr. David Gregg.
\end{IEEEbiography}

\begin{IEEEbiography}[{\includegraphics[width=1in,height=1.25in,clip,keepaspectratio]{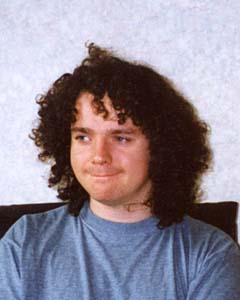}}]{David Gregg}
did an M.Sc in computer science in University College Dublin and PhD in Technical University of Vienna. He is currently a Professor in Computer Science at Trinity College Dublin and a Fellow of Trinity College Dublin.
\end{IEEEbiography}

\end{document}